\title{Dressed Subsystems in Classical Gravity}
\author{Pranav Pulakkat}
\affiliation{Maryland Center for Fundamental Physics,\\ University of Maryland, College Park, MD 20740, USA}
\emailAdd{pranavp@umd.edu}
\date{August 2024}
\newtheoremstyle{breakthm}
{}{}
{\itshape}{}
{\bfseries}{}
{\newline}{}
\theoremstyle{breakthm}
\newtheorem{prop}{Proposition}[section]
\newtheorem{cor}{Corollary}
\newtheoremstyle{break}
{\topsep}{\topsep}
{}{}
{\bfseries}{}
{\newline}{}
\theoremstyle{break}
\newtheorem*{defn}{Definition}
\newtheorem*{rem}{Remark}
\newtheorem*{claim}{Claim}
\newtheorem{assumption}{Assumption}
\newcommand{\Ss}{\mathcal{S}(\Sigma)}
\newcommand{\Ssb}{\mathcal{S}(\overline{\Sigma})}
\newcommand{\Ssp}{\mathcal{S}^{\prime}(\Sigma)}
\newcommand{\Ssbp}{\mathcal{S}^{\prime}(\overline{\Sigma})}
\newcommand{\Sspg}{\mathcal{S}_{g}^{\prime}(\Sigma)}
\newcommand{\Ssbpg}{\mathcal{S}_{g}^{\prime}(\overline{\Sigma})}
\abstract{This paper considers the problem of consistently defining subsystems in gravitational theories. It is argued that a subsystem is a spacetime subregion in which the observables form a closed Poisson algebra. In a generally covariant theory, the location of the subregion must be determined in relation to other degrees of freedom. It is proposed that these degrees of freedom should live within the region, so that an observer can determine its edge by only measuring fields inside of it. This turns out to be equivalent to the property that observables in the subregion generate field-dependent gauge transformations on the causal complement. Furthermore, it is demonstrated that this is \textit{exactly} what is necessary for the observables to form a Poisson algebra and thus to constitute a consistent subsystem. Observables in spacelike separated "dressed subsystems" are shown to commute. Several examples are given in the context of General Relativity. Along the way, new perspectives on the covariant phase space formalism are introduced that clarify well-known issues, such as the factorization of subregions in gauge theories and the unambiguous definition of Noether charges associated with one-sided boosts. Finally, prospects for extending these results to a perturbative quantum setting are discussed.}
\begin{document}
\flushbottom
\maketitle

\section{Introduction}
The concept of a subsystem is fundamental to theoretical physics. In order for a theory to have practical predictive power, it must be possible to identify localized sets of degrees of freedom that evolve in a self-contained way. Otherwise an observer would need to collect an enormous amount of nonlocal data in order to make any predictions about the future. Our ability to experimentally test theories therefore depends upon on our ability to define local subsystems and extract descriptions of their dynamics. 
\par
In quantum field theory, subsystems are defined as causally complete regions of spacetime. The degrees of freedom belonging to a region consist of the operators that can be built out of local field operators in that region. A subsystem should have most of the properties of a complete physical system; in particular, the operators belonging to a subsystem should form a closed algebra. This is the essence of the algebraic approach to quantum field theory \cite{Haag}, which formalizes these algebras as $C^*$ or von Neumann algebras. In fact, the way that subsystem algebras are nested in one another appears to encode all of the physical content of the theory.
\par
A quantum description of gravity must also admit a notion of subsystem if it is to be useful, at least within the regime of effective field theory. There are two obstacles to formulating a similar, algebraic characterization of subsystems when gravity is dynamical. First, the metric fluctuates, so there is no fixed notion of a causally complete region. Second, gravitational theories are generally covariant, so observables must be invariant under diffeomorphisms that are supported in the bulk. This means that there are \textit{no} observables supported in a compact region. Together these issues obviate the possibility of defining a subsystem in terms of fixed regions on a background manifold.
\par
There is, however, a well-known solution to the second problem. Observables may appear to be completely delocalized with respect to the spacetime manifold, but can localize \textit{in relation} to other dynamical degrees of freedom. This concept of a relationally local observable dates back to Bergmann and Komar \cite{Komar,Bergmann_Komar,Bergmann_1, Bergmann_2}, and since then has been the subject of an enormous literature. For a small and unsystematic selection, see \cite{Bianca,Chataignier,khavkine_observable,frames,reference,reference_frames,Marolf_causality,Z-model,Kaplan_dS}. This means that in order for a subsystem to contain proper observables, its location must also be defined relationally. Outside of special cases \cite{reference_frames, Marolf_causality, Kaplan_dS, locally_covariant_gravity}, this is not well studied. In particular, the conditions under which a relational specification of a subregion defines a proper subsystem, described by a closed algebra of observables, are nor understood. 
\par
In this paper, we study this problem within the context of classical field theory using the covariant phase space formalism. The corresponding notion of algebraic closure is the closure of the Poisson brackets among subregion observables into an algebra. For a general relational specification of a subregion, this property will not hold. However, we will find there is an extremely natural criterion that is exactly sufficient to guarantee that it does; namely, the requirement that the degrees of freedom locating the region also live within the region. In other words, an observer would be able to determine the edge of the region by only taking measurements inside of it. We refer to such regions as \textit{internally dressed subsystems}. Donnelly and Giddings have established within perturbation theory that in general, relational observables will not commute at spacelike separation \cite{dress,obstruction_subsystem}. They interpreted this as an obstruction to a proper definition of local subsystems in gravitational theories. I argue that this is a premature conclusion; as long as the internal dressing criterion is obeyed, there is no issue with interpreting a relationally local subregion as a subsystem. While observables within that region will generate a nontrivial action on the causal complement, this action is highly constrained to consist of a \textit{field-dependent gauge transformation}. In fact, this property is crucial for proving the closure of the algebra in the first place.
\par
The plan of this paper is as follows: first, we will revisit the covariant phase space formalism in section \ref{subsec:prelim}. We will need to extend and clarify some facets of the formalism that are largely ignored in the literature. In particular, we will postulate two causality properties of the linearized equations of motion. These are automatically satisfied when the linearized equations can be expressed in a hyperbolic gauge. We will also introduce the concept of distributional flows on solution space and explain their applications. Then in section \ref{subsec:subsystem}, in order to lay out a blueprint for the study of dressed subsystems, we will prove various properties for subsystems in theories without dynamical gravity or gauge invariance, including the closure of the Poisson algebra. After taking a detour to analyze the structure of gauge transformations in a theory that satisfies the causality assumptions (\ref{subsec:gauge_character}, Appendix \ref{app:gauge}), we will extend this analysis to nongravitational gauge theories (\ref{subsec:gauge_subsystem}). In section \ref{subsec:dress} we will introduce the concept of an internally dressed subsystem and reformulate it into a more convenient language via gauge fixing. This prepares us for the analysis of \ref{subsec:gauge_fix}, which establishes the claimed results about the action of subsystem observables on the causal complement and their closure into an algebra. In \ref{sec:examples}, we briefly look at some examples. In the discussion, we will see some other insights that come out of our methods, regarding the independence of subregions and the factorization of gauge theory phase spaces (\ref{subsec:factor}), and the implementation of surface charges for subregions(\ref{subsec:surface}). We conclude by returning to the question of how to define subsystems in the quantum theory, and sketch how the present results might be reformulated quantum mechanically. In order to make clear the logical structure of this framework, the paper is structured in proposition-proof format. This should not be misinterpreted as implying complete mathematical rigour. Although we will pay attention to subtleties that are usually ignored, the attitude will still be pragmatic and based on physical interpretation.
\subsection{Conventions}\label{subsec:conven}
 Our conventions mostly follow those of \cite{harlow}. The mostly plus metric signature is employed, with total spacetime dimension $D$. The covariant phase space formalism makes use of differential forms that carry indices on both spacetime and the space of field histories. These both obey the same conventions for the exterior derivative and wedge product:
\begin{equation}\label{forms}
    \begin{aligned}
        (\alpha\wedge\beta)_{a_1...a_pb_1...b_q} &= \frac{(p+q)!}{p!q!}\alpha_{[a_1...a_p}\beta_{b_1...b_q]}\\
        d\alpha_{ca_1...a_p} &=(p+1)\partial_{[c}a_{a_1...a_p]};
    \end{aligned}
\end{equation}
however the configuration space wedge product is implicit whenever two forms are multiplied by each other. The spacetime exterior derivative is represented by $d$ and the configuration space exterior derivative, also called the \textit{variation}, is denoted $\delta$. Contraction of a vector $X$ with the first spacetime index is denoted with a lowercase $i_X$, and contraction of a configuration space vector (also called a \textit{flow}) $\chi$ with the first configuration space index is denoted with an uppercase $I_{\chi}$. We will not be careful to distinguish presymplectic and properly symplectic quantities in terminology. 
\section{Revisiting Covariant Phase Space}
\subsection{Preliminaries}\label{subsec:prelim}
Our fist order of business is to clarify some basic issues that arise when applying the covariant phase space method to the analysis of subsystems. We will start by reviewing the formalism along the lines presented in \cite{harlow}. Consider a $D$-dimensional oriented spacetime manifold $\mathcal{M}$, on which there is defined a number of physical fields. These can be taken locally to be smooth sections of some vector bundles over $\mathcal{M}$. Some of these may be background fields, which are held constant; others will be dynamical, and will be allowed to vary. The \textit{configuration space} $\mathcal{C}$ is the space of all possible histories of the dynamical fields, i.e. the space of sections of the relevant bundles. The covariant phase space method assumes that this is an infinite-dimensional manifold, to which standard techniques of differential topology and calculus can be applied. Properly speaking this requires $\mathcal{C}$ to have a sufficiently well-behaved structure; naturally it belongs to the category of Frechet manifolds \cite{functional_spaces, khavkine_observable}. This permits the construction of tangent and cotangent bundles, finite tensor powers thereof, and the usual operations of the Lie and exterior derivative, which are related by Cartan's magic formula (consult \ref{subsec:conven} for notation). However, \textit{not} all results carry over from the finite dimensional setting. For example, the Frobenius theorem is not valid for arbitrary distributions on Frechet manifolds \cite{convenient}. In certain cases these subtleties will turn out to have an important physical interpretation. As the need arises, we will evaluate their relevance with respect to field theories of interest.
\begin{rem}
    The requirement of working on a Frechet space comes from the assumption that field configurations are smooth. If we were to instead consider configurations differentiable only up to some finite order, $\mathcal{C}$ would be a Banach manifold. These obey a more familiar theory; in particular, the Frobenius theorem \textit{does} hold \cite{lang}. From a physical point of view, as long as the derivative order is sufficiently high, this should be just as good of a setting for covariant phase space analysis. However, there is a tradeoff that makes it much less convenient when discussing gauge theories. When only a finite degree of differentiability is required, the infinite dimensional symmetry groups that act on the configuration space (such as the diffeomorphism group) are typically not Lie groups \cite{convenient}. This means that transformations cannot always be connected by differentiable paths. Since our methods will involve infinitesimal gauge transformations, it is much more tractable to work in the Frechet context and do without the general validity of the Frobenius theorem.
\end{rem}

\par
The Lagrangian $\mathcal{L}$ is a spacetime top form, configuration space scalar locally constructed out of the dynamical and background fields, as well as their derivatives. Note that any dependence on the spacetime point must come through the fields, and that there is no background structure except for the specified fields. This means, for example, that any derivative operators appearing in the Lagrangian must be determined in terms of the fields. Taking the configuration space exterior derivative of $\mathcal{L}$ and integrating by parts, we obtain
\begin{equation}\label{covariant}
    \delta \mathcal{L}=E_a\delta\phi^a+d\theta
\end{equation}
where the contracted index sums over the different physical fields. The equations of motion are the spacetime top forms $E_a$, and the prephase space $\mathcal{P}$ is defined to be the submanifold of configuration space where these all vanish, which is also assumed to be Frechet and path-connected. The symplectic potential $\theta$ is a degree $D-1$ form locally constructed out of the fields, the variations $\delta\phi^a$, and derivatives thereof, and is linear in the variations. It is defined up to the addition of a likewise closed, locally constructed form. We have the following theorem proven by Wald in \cite{Wald_locally_constructed}, and also known in the BRST literature and separately proven \cite{local_BRST_cohomology,completeness_BRST}:

\begin{prop}[Algebraic Poincare Lemma]\label{prop:apl}
Any locally constructed $p$ form, where $p<D$ that is identically closed --- closed for arbitrary sections of the vector bundles corresponding to the dynamical fields --- is the exterior derivative of some locally constructed differential $p-1$ form.
\end{prop}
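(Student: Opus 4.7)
The plan is to interpret the statement in the variational bicomplex / infinite jet bundle language. With $E\to\mathcal{M}$ the bundle whose sections are the dynamical fields, ``locally constructed'' differential $p$-forms correspond precisely to horizontal $p$-forms on $J^{\infty}E$, and the spacetime exterior derivative becomes the horizontal (total) differential $d_H$, which differentiates the fields via the chain rule against arbitrary jet coordinates. The hypothesis of being \emph{identically} closed --- closed for arbitrary field configurations, not merely on solutions --- translates into the honest equation $d_H \alpha = 0$ on $J^{\infty}E$. The claim is then the vanishing of the $d_H$-cohomology in degrees $0 < p < D$, which is the core acyclicity statement of the variational bicomplex.

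First I would work locally, trivializing the bundles over a chart $U\subset\mathbb{R}^D$ so that $\alpha = f_{\mu_1\cdots\mu_p}(x,\phi^a,\partial_\mu\phi^a,\partial_\mu\partial_\nu\phi^a,\ldots)\,dx^{\mu_1}\wedge\cdots\wedge dx^{\mu_p}$ with smooth dependence on finitely many jet coordinates. The key ingredient is an explicit homotopy operator $h$ satisfying $d_H h + h\, d_H = \mathrm{id}$ on horizontal forms of degree strictly less than $D$. A natural candidate is built from the linear scaling (Euler) vector field $R$ on the jet fibres, which uniformly rescales $\phi^a,\partial_\mu\phi^a,\partial_\mu\partial_\nu\phi^a,\ldots$, combined with radial contraction of $U$ to a basepoint. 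Cartan's magic formula $\mathcal{L}_R = d\,i_R + i_R\,d$, integrated along the flow $\sigma_t$ of $R$, yields a candidate primitive
\[
\beta \;=\; \int_0^1 \frac{dt}{t}\,\sigma_t^{*}\bigl(i_R\alpha\bigr),
\]
and one checks that if $\alpha$ is polynomial of bounded jet order then so is $\beta$, so that $\beta$ is again locally constructed.

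The main obstacle is that this naive construction produces a primitive in $\Omega^{\bullet}(J^{\infty}E)$ that in general has both horizontal and vertical (i.e.\ $\delta\phi$-type) components, whereas a locally constructed spacetime form must be purely horizontal. To enforce horizontality one decomposes $d = d_H + d_V$ according to the bigrading of the bicomplex and iterates, using acyclicity of $d_V$ in positive vertical degree to successively absorb vertical components into exact terms. This step is precisely where the hypothesis $p < D$ is indispensable: at top spacetime degree the Euler--Lagrange operator appears as a genuine obstruction, and the lemma fails (as it must, for the field equations to carry nontrivial information --- this is exactly the observation exploited in equation~(\ref{covariant})). Globalization over $\mathcal{M}$ then follows by a standard partition-of-unity / Mayer--Vietoris argument, with a descending induction on $p$ handling the patching cocycles, each of which is again identically closed and locally constructed of lower form degree.
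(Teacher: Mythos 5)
You should first note that the paper does not prove Proposition \ref{prop:apl} itself; it imports it from Wald and from the local BRST cohomology literature, so the relevant comparison is with those standard proofs. Your starting point --- the scaling (Euler) vector field $R$ on the jet fibres --- is indeed the first step of those proofs, but the sketch has gaps exactly where the real work lives. First, your ``candidate primitive'' is not a primitive: $\alpha$ is only $d_H$-closed, not $d$-closed on $J^{\infty}E$, so the Cartan homotopy for the full de Rham differential leaves an uncontrolled remainder built from $\sigma_t^{*}\bigl(i_R\, d_V\alpha\bigr)$, which the proposal never addresses. Second, and more importantly, the repair step appeals to the wrong acyclicity. To absorb the vertical components --- equivalently, to run the zig-zag identifying $H^{p}(d_H)$ at vertical degree zero with de Rham cohomology --- what you need is $d_H$-exactness in \emph{positive vertical degree} at horizontal degree $p<D$ (exactness of the interior rows of the variational bicomplex): the first correction requires writing $d_V\alpha\in\Omega^{p,1}$ as $d_H$ of something, not $d_V$ of something. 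That row exactness is the substantive content of the lemma, established either by the explicit higher/interior Euler operator homotopies (as in the BRST references) or, as in Wald's paper, by first reducing to forms linear in the dynamical fields via the $R$-scaling and then performing an induction on derivative order with integrations by parts. The $d_V$ (fibre-contraction) acyclicity you invoke is the easy direction and cannot substitute for it: contractibility of $J^{\infty}E$ over a chart plus column exactness computes the total de Rham cohomology but does not by itself determine the horizontal cohomology at vertical degree zero.

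The globalization step is also not right as stated. The literal global claim has an obstruction consisting of identically closed forms built solely from the nondynamical background structure (for instance the pullback of a closed, non-exact $p$-form determined by a background field on a topologically nontrivial $\mathcal{M}$); this is precisely why Wald's theorem reads $\alpha=d\beta+\gamma$ with $\gamma$ closed and constructed from the background alone, the caveat disappearing for forms that vanish when the dynamical fields are set to zero --- in particular for forms linear in $\delta\phi$, which is the only case the paper actually uses (cf.\ the remark in Appendix \ref{app:gauge} invoking Wald's Lemma 1). A partition-of-unity/Mayer--Vietoris patching cannot remove this obstruction, and cutoff functions spoil both closedness and local constructedness, so the ``descending induction on patching cocycles'' reintroduces exactly the cohomological term rather than eliminating it. If you state the result with Wald's caveat (or restrict to forms vanishing at vanishing dynamical fields) and supply the linear-case derivative-order induction, your scaling-homotopy skeleton becomes essentially Wald's argument.
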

Treating $\delta\phi^a$ as dynamical fields, it follows that for any allowed $\theta$, $\theta+d\beta$ is also allowed, where $C$ is a $D-2$ form locally constructed out of the fields and $\delta\phi^a$, and linear in the latter \cite{JKM}. \footnote{There is an alternative viewpoint based on the "Anderson homotopy operator", which amounts to a specific systematic prescription for defining $\theta$ from the Lagrangian. The result of this recipe is changed if one adds a spacetime exact term to the Lagrangian, exactly by a locally exact term as above. Regardless of whether one takes the view that the choice of boundary term in the Lagrangian is irrelevant and there is a freedom in choosing the symplectic potential, or that the choice of boundary term is physically important and the symplectic potential is then uniquely determined, there is no material difference; there is a choice involved which preserves $E_a$ but may alter $\theta$ by an exact term.}
\par
\begin{rem}
    Note that a local field redefinition that mixes with derivatives, i.e. $\phi'=\phi+u^a\partial_a\phi+...$ will change the definitions of both $E_a$ and $d\theta$. However by integrating by parts it is easy to see that these corrections vanish on-shell, so that the solution space and the pullback to it of $d\theta$ is unchanged. 
\end{rem}

The symplectic current is defined as 
\begin{equation}
    \omega=\delta\theta,
\end{equation}
and integrating it over some codimension-1 hypersurface defines a symplectic form $\Omega$ on prephase space (properly speaking, it is only presymplectic due to the possible presence of degenerate directions --- however, we will not use this terminology). Note that, just like $\theta$, it is defined only up to the addition of an exact term which will affect the total symplectic form by a term localized to the codimension 2 corner that bounds the surface; however, it is completely unaffected by the addition of an exact form to the Lagrangian.  It is well known that the pullback of $\omega$ to the prephase space is conserved, i.e.
\begin{equation}\label{conserve}
    d\omega(X,Y)=0
\end{equation}
when $X$ and $Y$ are tangent to prephase space. This means that integrating $\omega$ over homologous surfaces will yield the same result for $\Omega$. The question of what class of surfaces to integrate over involves issues of causality that are usually glossed over in the literature. However, since we will need to be careful about the causal properties of our theory at several points, let us enumerate these now.
\par
We suppose that one of the fields, either background or dynamical, is a Lorentzian metric $g$ with respect to which the spacetime manifold is globally hyperbolic for \textit{all} configurations. Furthermore, $g$ is taken to determine the causal structure. However, we do not simply say that "the equations of motion are hyperbolic with respect to the metric" for two reasons. First, gauge redundancy may cause the equations of motions to fail to provide a unique solution for an arbitrary choice of data prescribed on an initial surface. Second, if the metric is dynamical, the causal structure will depend on the state and cannot determine fixed properties of the equations of motion $E_a$. Instead, we will only constrain the linearized equations of motion about any particular solution, which are encoded by the forms $\delta E_a$. These determine the propagation of linearized perturbations to the dynamical fields on a fixed background. They may still not be well-posed, but the nonuniqueness issue will be handled by analyzing the gauge symmetries via symplectic analysis. We will only need properties that guarantee the existence of solutions given initial data. To wit:

\begin{assumption}\label{causal_1}
For a given solution, consider an arbitrary partial surface $\Sigma$ and its domain of dependence $D(\Sigma)$, defined in the standard way with respect to $g$. There is a definition of initial data on $\Sigma$ consisting of the linearized fields $\delta\phi$ and their transverse derivatives up to some order, subject to any local, differential constraints that are required for consistency with the equations of motion. Any smooth linearized solution on on $D(\Sigma)$ induces such a data set on $\Sigma$, and any specification of such data can be extended to a solution on the entirety of $D(\Sigma)$.
\end{assumption}

\begin{assumption}\label{causal_2}
  Let $\Sigma$ be partitioned into two partial Cauchy surfaces, and consider smooth linearized solutions on the domains of dependence of each that agree to all derivatives at the codimension-2 surface where the regions meet. Then there exists a solution on the entirety of $D(\Sigma)$ that restricts to the respective solution on each domain.
\end{assumption}
\begin{figure}
    \centering
    \includegraphics[scale=.2]{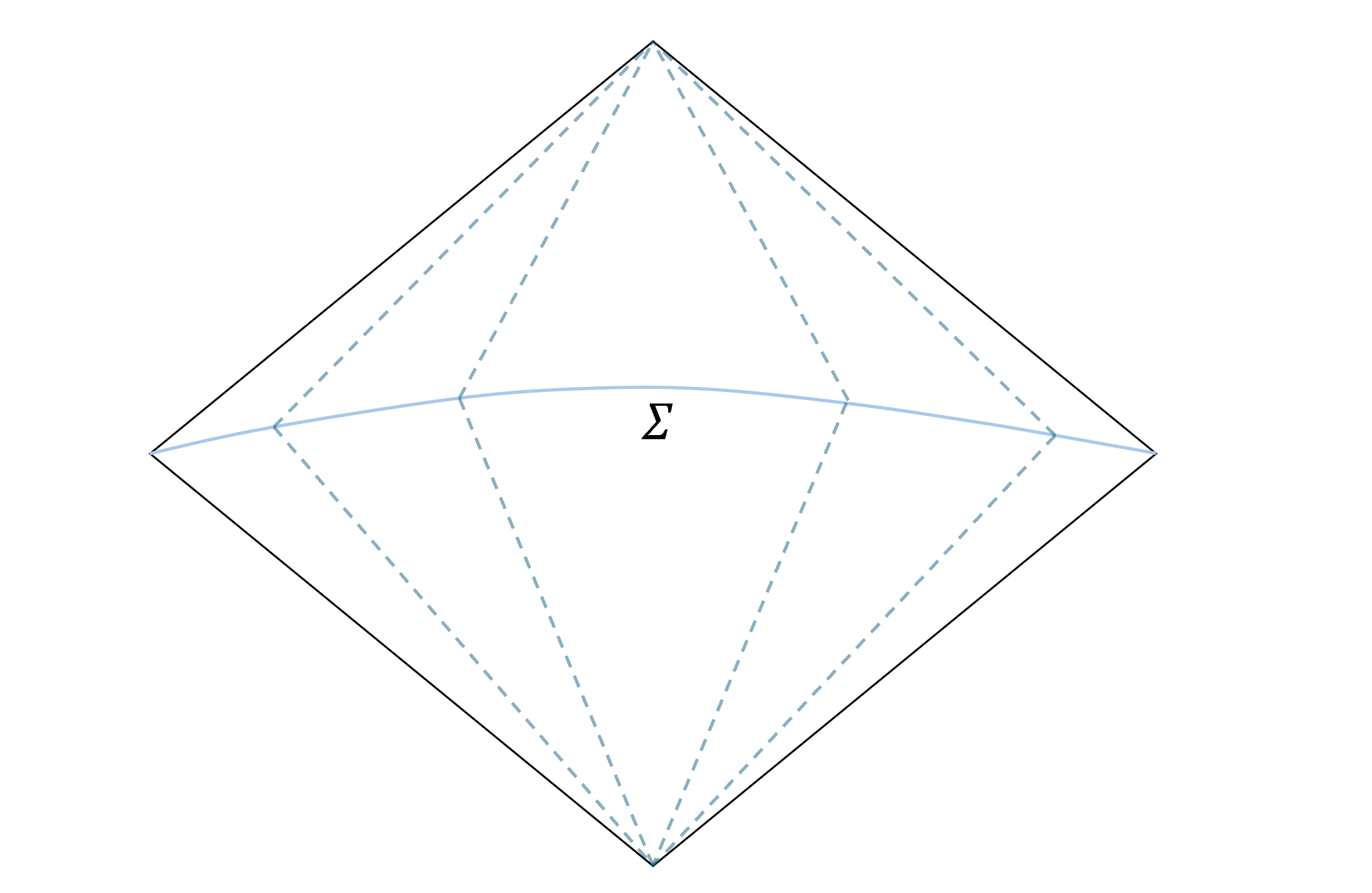}
    \caption{Data for the linearized equations on $\Sigma$ can be propagated throughout $D(\Sigma)$, as shown by the dotted lines.}
    \label{fig:existence}
\end{figure}

\begin{figure}
    \centering
    \includegraphics[scale=.2]{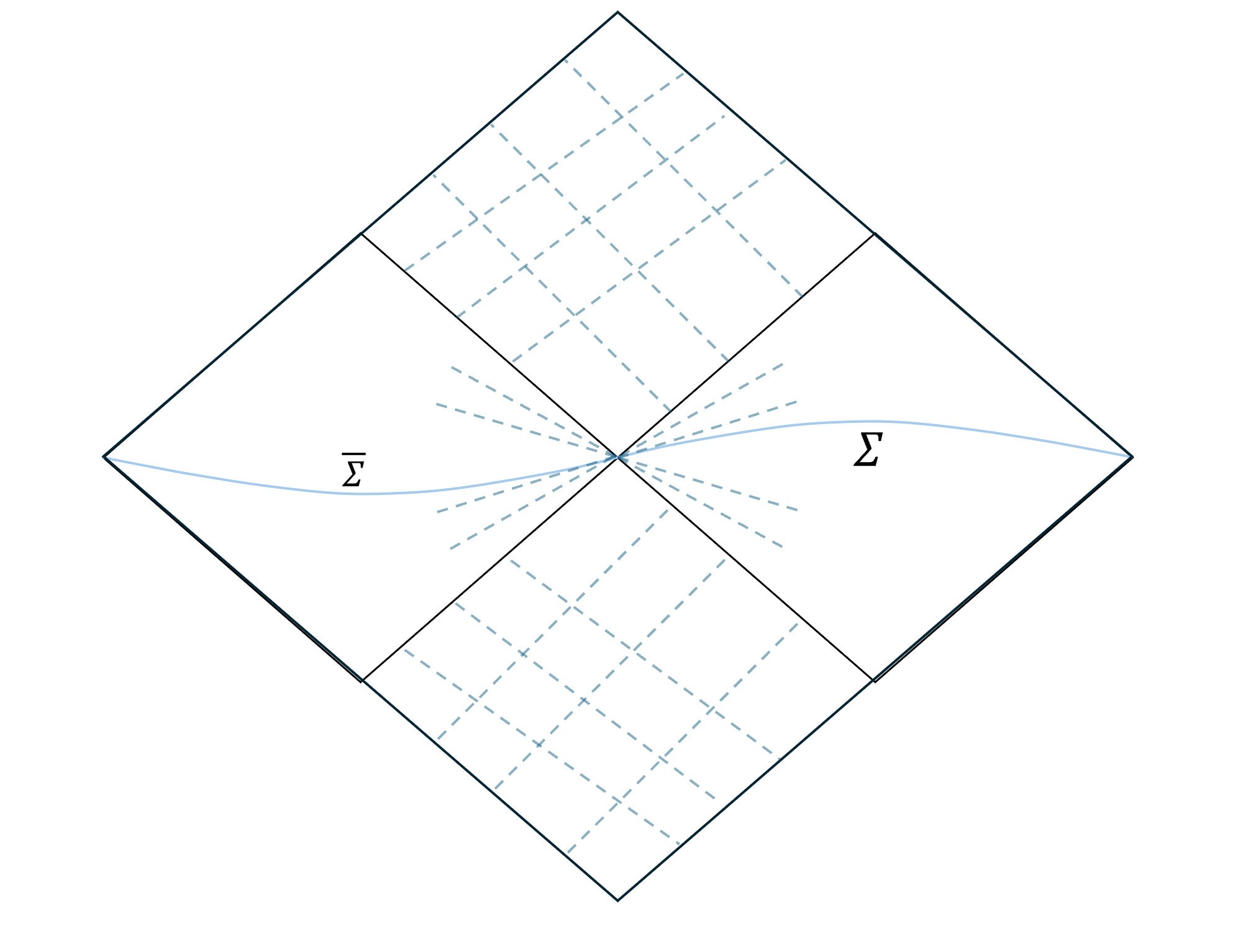}
    \caption{Flows in $D(\overline{\Sigma}$) and $D(\Sigma$) that agree smoothly near $\partial\Sigma$ can be extended to a flow throughout the common domain of dependence.}
    \label{fig:gluing}
\end{figure}

\par
The utility of this local framework will become clear as we proceed. For now, note that both axioms hold for linearized equations of motion that can, with some choice of a local gauge fixing, be written as a well-posed PDE system that is hyperbolic with respect to $g$. Here, "gauge fixing" is meant in the sense of PDE theory, and is so far unrelated to any symplectic structure. To be precise, we mean that there is a class of smooth invertible transformations that can be applied to any linearized solution on $D(\Sigma)$ in order to impose an additional set of local differential equations, which are the gauge fixing conditions.  Of course, this applies to Yang-Mills theory and Einstein gravity; the linearization of the former can be put into Lorenz gauge, and the latter into a de Donder gauge. The hyperbolic property means that there exists a suitable definition of initial data that can be evolved uniquely through all of $D(\Sigma)$ in a manner which is consistent with the gauge conditions, verifying Assumption \ref{causal_1}. Turning to Assumption \ref{causal_2}, given two solutions in the complementary domains of dependence that agree to all orders at the intersecting surface, each one can be transformed to the hyperbolic gauge separately. However, since the gauge fixing conditions are local, the transformed solutions will also agree to arbitrary order near the intersecting surface. The data sets for these solutions glue into a single data set on $\Sigma$ that can be evolved to a complete solution on $D(\Sigma)$ in the hyperbolic gauge. The well-posedness of the gauge-fixed Cauchy problem implies that this solution must agree with the transformed solutions in either of the smaller domains of dependence. Now the entire solution can be transformed back to a solution that agrees with the original solutions on each of the complementary domains, as desired. Although the hyperbolicity condition can be made manifest in practical examples, for the purposes of theoretical reasoning it is much more elegant (as well as in principle more general) to directly employ the stated assumptions, as these do not require any choice of gauge fixing conditions and apply to the original linearized equations. 
\par
We return to the global question of which surfaces determine the symplectic form. The preceding assumptions make it natural to consider Cauchy surfaces, since any initial data set satisfying the constraints induces a linearized solution on the entirety of $\mathcal{M}$ and vice versa. Of course this is not a unique correspondence, but since the symplectic current is locally constructed, only derivatives of the linearized fields at the Cauchy surface contribute to the symplectic form. For the purpose of computing $\Omega$ between tangent vectors at a given point of $\mathcal{P}$, it therefore suffices to parametrize the linearized solutions by the fields up to some derivative order on the surface. This defines a vector space which not subject to any constraints except for the aforementioned local ones.
\par
However, not all Cauchy surfaces are homologous. If the spacetime has a nontrivial timelike (possibly asymptotic) boundary for example, surfaces that intersect this boundary at different places will in general yield different symplectic forms. A resolution to this problem was proposed in \cite{Compere_Marolf_corner} and worked out systematically in \cite{harlow}. The idea is to impose boundary conditions and choose the corner term in the symplectic potential such that the flux of symplectic current through the boundary vanishes. This guarantees that the total symplectic form is conserved, i.e. independent of the Cauchy slice used to compute it. For this paper, it is not important exactly what choices are made to do this; we simply suppose that some prescription has been applied to guarantee conservation of $\Omega$. 
\par
Now that the symplectic form has been identified, its null directions on $\mathcal{P}$ are defined to be the infinitesimal gauge transformations. It is easy to see that these form an involutive distribution; the Lie bracket of any two null vector fields is another null vector field. However, as noted previously the Frobenius theorem is \textit{not} valid on Frechet manifolds, so it does not follow that $\mathcal{P}$ is  foliated by gauge orbits. Nevertheless, in all field theories that are of physical interest, it appears that the gauge distribution \textit{is} integrable and foliates the prephase space. Moreover, the reduced phase space $\tilde{\mathcal{P}}$ is obtained by taking the quotient by the null orbits, and $\Omega$ descends to a nondegenerate symplectic form upon it. Observables are differentiable functions of $\mathcal{P}$ invariant on these orbits, that thus descend to $\tilde{\mathcal{P}}$. We would like to define a Poisson bracket in the usual way; an observable $\mathcal{O}$ is said to generate a flow $\chi$, which is a vector field tangent to prephase space,
\begin{equation}\label{flow}
    \delta \mathcal{O} = \Omega(\cdot,\chi), 
\end{equation}
and then the bracket with another observable $R$ is
\begin{equation}\label{poisson_bracket}
\{\mathcal{R},\mathcal{O}\}=\delta R(\chi).
\end{equation}
Of course, $\chi$ can only be defined by equation \eqref{flow} up to the addition of a gauge transformation, but this does not affect the Poisson bracket in equation \eqref{bracket}.
\par
There is a separate problem in field theory, which is that for any given observable there may exist no $\chi$ satisfying \eqref{flow}. This issue is absent when the reduced phase space is finite dimensional, since then the nondegenerate symplectic form  can be inverted to construct $\chi$ explicitly as a vector tangent to $\tilde{\mathcal{P}}$, which can then be lifted to $\mathcal{P}$. In the infinite dimensional case, the symplectic form may not be invertible even though it is nondegenerate, in which case $\tilde{\mathcal{P}}$ is \textit{weakly symplectic} \cite{convenient}. This is generically the case in field theories. For example, consider Klein-Gordon theory on an arbitrary background. While smearing the field with a smooth spacetime function --- or even a function supported only on a spatial slice --- yields an observable that generates a smooth flow, the field \textit{at a single point} does not.
\par
However, there is a natural way to extend the interpretation of \eqref{flow} to a wider class of observables (although we do not claim that it is possible for \textit{all} observables). This is to \textit{define} $\chi$ to act by dualization on the smooth flow inserted into the other slot of the symplectic form, to yield the variation under that flow of $\mathcal{O}$. In order to do this, one must make sense of the expression $-I_\chi \omega$ as an object that can be integrated against smooth sections of the field bundles on a codimension-1 surface $\Sigma$. Fix some volume form $\mathcal{V}$ and allow $\chi$ to be the adjoint with respect to $\mathcal{V}$ of a distribution dual to the space of smooth sections of the dynamical field vector bundle (not to be confused with the previous use of the word distribution to mean a subbundle of a tangent bundle). In other words, $\chi$ is the formal symbol representing the distribution as a "generalized function". For instance, a pointlike delta distribution with respect to the measure  defined by $\mathcal{V}$ has as its symbol $\delta_\mathcal{V}(x)$.  Observe that in the expression $\omega(\delta_1 \phi, \delta_2 \phi)$, we can formally replace $\delta_2\phi$ with the distributional symbol $\chi$, following the rule that all derivatives are taken to act on variations and not vice versa. This yields a degree $(D-1 f)$ differential form with coefficients formed out of distributions as well as the flow $\delta_1 \phi$, which is still assumed to be smooth. Differential forms with distributional coefficients were introduced long ago in mathematics by de Rham \cite{deRham,deRham_book}, who conveniently named them \textit{currents}.  There is a well-developed theory of currents, that defines their integration on most surfaces of appropriate dimension and generalizes the standard theorems of exterior calculus, including Stokes theorem. A convenient mathematical reference is \cite{Simon_currents}. The important facts are as follows:
\par
A distribution is a $D-$current; for instance, $\delta_\mathcal{V}(x)\mathcal{V}$ is meant to be integrated against a smooth function over regions of $\mathcal{M}$. Despite this, we will often refer to the formal symbols of the previous paragraph as distributions for convenience. To integrate a current $\psi$ of degree $p$ over a submanifold $H$ of dimension $p$, form $\psi\wedge(\bigwedge_i\delta(f_i)n_i)$, where $f_i$ are $p$ independent functions that specify $H$ as their mutual zero set and $n_i$ are their gradients. This object is a $D-$current and therefore a distribution; its integral over $M$ is defined to be the integral of $\psi$ over $H$. Note that properly speaking. This definition makes sense as long as $\psi$ is not singular on $H$, so that the above wedge product is well-defined. For example, if we consider $\psi = \delta_\mathcal{V}(f) U$, where $f$ is a function with a codimension-1 zero set and $U$ is some smooth $D-1$ form, this can be integrated over any submanifold disjoint from or transversely intersecting $f^{-1}(0)$. However it has ill-defined pullback to $f^{-1}(0)$ itself, and cannot be integrated over it. As long as we avoid such surfaces, which can for the most part be done by inspection, we will have no issue constructing our symplectic integrals. The precise mathematical criterion is that the wavefront set \footnote{The wavefront set of a distribution is a set of points in the cotangent bundle that encodes the singular directions of the distribution by a kind of local Fourier analysis. Two distributions can be multiplied when their wavefront sets do not contain additive inverses of each other \cite{wavefront}. This technology has become central to the renormalization of quantum field theories in curved spacetimes \cite{microlocal}.} of the coefficients of $\psi$ must be disjoint from the normal bundle of $H$. When the integrals are defined, currents obey Stokes' theorem as usual.
\par
In order to consistently claim that an observable $\mathcal{O}$ formally generates a distributional flow $\chi$, we would like it to satisfy the previous definition on arbitrary slices. This means we want to extend equation \eqref{conserve} to the case where one slot is a distribution. The criteria is, straightforwardly, that $\chi$ be a distributional solution to the linearized equations. This can be proven as follows, encompassing \eqref{conserve} as a special case: 
\begin{prop}
    Consider a smooth flow $\gamma$ and distributional flow $\chi$, both satisfying the linearized equations of motion. Then $d\omega(\gamma,\chi)=0$.
\end{prop}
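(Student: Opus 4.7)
The plan is to establish an off-shell identity on configuration space and then evaluate it on the pair $(\gamma,\chi)$, being careful that all distributional products are well-defined. Starting from $\omega = \delta\theta$, apply the spacetime exterior derivative and use that $d$ and $\delta$ anticommute as odd derivations on the bigraded de Rham complex: $d\omega = d\delta\theta = -\delta d\theta$. Substituting $d\theta = \delta\mathcal{L} - E_a\delta\phi^a$ from \eqref{covariant} and using $\delta^2 = 0$ yields
\begin{equation*}
    d\omega = \delta E_a \wedge \delta\phi^a
\end{equation*}
as a configuration-space 2-form (and a spacetime $D$-form). This is a purely algebraic identity that holds at every point of $\mathcal{C}$, independently of whether the flows inserted are smooth or distributional.

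Next, contract with the pair $(\gamma,\chi)$:
\begin{equation*}
    d\omega(\gamma,\chi) = (I_\gamma \delta E_a)(I_\chi \delta\phi^a) - (I_\chi \delta E_a)(I_\gamma \delta\phi^a).
\end{equation*}
The contraction $I_X\delta E_a$ is precisely the linearized equation of motion applied to the flow $X$. By hypothesis $\gamma$ is a smooth solution, so $I_\gamma\delta E_a = 0$ as a smooth spacetime top form; likewise $\chi$ is a distributional solution, so $I_\chi\delta E_a = 0$ as a current. Each of the two terms above therefore contains a vanishing factor---smoothly in the first, distributionally in the second---so the entire expression vanishes.

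The one point that needs attention is that the factors in each term are of mixed regularity: one distributional (coming from $\chi$) and one smooth (coming from $\gamma$). This is exactly the setting in which the current-theoretic product prescription reviewed before the proposition applies. Because $\gamma$ is smooth, the wavefront sets of $I_\gamma\delta E_a$ and $I_\gamma\delta\phi^a$ are empty, and so no obstruction arises in forming their products with the distributional factors $I_\chi\delta\phi^a$ and $I_\chi\delta E_a$. The resulting equation $d\omega(\gamma,\chi) = 0$ therefore holds as a bona fide identity of currents, generalizing \eqref{conserve} to the distributional setting. The hypothesis on $\chi$ enters only in the final step via $I_\chi\delta E_a = 0$; the rest is the same algebraic manipulation that underlies the standard smooth proof, and the main obstacle---making sense of the products---is dispatched by the asymmetric regularity of the two flows.
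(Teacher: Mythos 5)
Your proof is correct and follows essentially the same route as the paper: vary the identity \eqref{covariant} to obtain $d\omega=\delta\phi^a\,\delta E_a$ (your ordering differs only by an inessential sign), then contract with $(\gamma,\chi)$ and use that both flows are on-shell. Your added remark on wavefront sets just makes explicit the smooth-times-distributional product the paper takes for granted.
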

\begin{proof}
    Varying both sides of equation \eqref{covariant}, we obtain
    \begin{equation}\label{Peierls}
        d\omega=\delta\phi^a\delta E_a.
    \end{equation}
    Inserting $\gamma$ and $\chi$, and using the fact that since they are on-shell they satisfy $\delta E_a(\gamma)=\delta E_a(\chi)=0$, we are done.
\end{proof}

\begin{defn}
    A \textit{regular} observable is one that generates a smooth flow, tangent to the prephase space. A \textit{singular} observable does not generate such a flow, but is still phase space differentiable. A \textit{singular generator} is a singular observable that "generates" a distributional flow by dualization as defined previously.
\end{defn}
The Poisson bracket of two regular observables is of course another regular observable. The Poisson bracket of a regular observable \textit{acting on} a singular observable is always defined by $\{\mathcal{S},\mathcal{R}\}=\delta \mathcal{S}(\chi^\mathcal{R})$, where $\mathcal{R}$ is regular and $\mathcal{S}$ is singular. While it is of course possible to formally declare $\{\mathcal{R},\mathcal{S},\}=-\{\mathcal{S},\mathcal{R}\}$, this cannot be interpreted as arising from the action of $\mathcal{S}$ on $\mathcal{R}$. If $\mathcal{S}$ is a singular generator, the bracket can be interpreted as the insertion of the flows $\chi^\mathcal{R}$ and $\mathcal{S}$ into their respective slots of the integral defining the symplectic form, but we will not assume that all singular observables are generators. The bracket of two singular generators cannot generally be defined in this way due to issues with the product of distributions. It may be meaningful in special cases where the wavefront sets of the relevant distributions are disjoint, but for general pairs it will not be.
\par
If we imagine the classical field theory as the classical limit of a quantum field theory, there are natural quantum analogues of these concepts\footnote{The following discussion is for field theories without gauge invariance or a dynamical metric, but it seems that similar ideas apply even when these are included, at least in perturbation theory; see section \ref{subsec:quantum}.}. The phase space corresponds, in some sense, to the space of physically reasonable states of the QFT; more precisely, it corresponds to the set of \textit{Hadamard states}, which satisfy an operator product expansion. These are naturally regarded as more physical than other states, since they allow the definition of finite local objects such as the stress tensor. Indeed, in the classical limit the operator product expansion becomes the condition of smoothness of the field configuration \cite{OPE}, and the Schwinger-Dyson equations go to the classical equations of motion. Regular observables can be interpreted as proper (unbounded) operators that take the Hadamard states to other Hadamard states. On the other hand, singular observables should be viewed as having expectation values in Hadamard states, but not as having a well-defined operator action on them; in other words, they are \textit{quadratic forms}. Those that are singular generators further satisfy the property that the commutator with a local field operator smeared in spacetime also yields a quadratic form on the Hadamard states, so that $[\mathcal{S},\phi(x)]$ is a quadratic form-valued distribution. 
\par
This correspondence holds true in most straightforward examples. For example, a local functional of the fields and its derivatives (such as the stress tensor) is a singular generator in the classical field theory and a quadratic form in the QFT, possibly after some renormalization. Smearing out such a local functional in spacetime yields a regular observable, and in simple cases, such as a free scalar field or its stress tensor, also produces a well-defined operator acting on Hadamard states. There are exceptions, where a classically regular observable fails to correspond to a proper operator; for example, if the aforementioned local functional has sufficiently high operator dimension, a spacetime smearing will not cure all of its divergences\footnote{This is especially interesting when the operator dimension is anomalous. For instance, smearing a scalar field on a spacelike surface alone is always sufficient to guarantee that it generates a classically smooth flow, as is easily seen from the canonical structure. However, interactions can give the field a quantum anomalous dimension which makes single-time smearings fail to be an operator \cite{witten_entanglement}.}. However, while classical smoothness is not quantum mechanically protected, I do not expect the opposite to be true. In other words, if an observable is classically singular it does not seem possible for quantum effects to render it smooth, if the classical limit is at all a sensible concept. Therefore classical regularity should be viewed as a necessary, but not sufficient, condition for quantum operatorhood on sensible states.
\par
This concludes our review and extension of covariant phase space methods. As is well known \cite{harlow}, these are equivalent to the elegant bracket construction of Peierls \cite{peierls} in terms of Green's functions of the linearized equations. In fact, Assumptions \ref{causal_1} and \ref{causal_2} are related to the well-posedness of the Peierls bracket, and the action of singular generators has often been considered previously in special cases by studying distributional kernels \cite{Pauli-Jordan, dewitt, khavkine}. We will not use this formalism directly in this paper, since the symplectic methods will be more convenient for our purposes.

\subsection{Subsystems}\label{subsec:subsystem}
We now turn to problem of defining subsystems appropriately in the above formalism. As a warmup, we first consider a theory with nondynamical metric and no gauge symmetry. Indeed, there are no surprises here, but this analysis will provide a basic framework that will be developed further in subsequent sections. Klein-Gordon theory with polynomial self-interactions acts as a useful model for all following propositions.
\par
In such a theory a subsystem can be simply defined by fixing a partial Cauchy surface $\Sigma$, or equivalently its domain of dependence. Note that this is more than just the requirement that $\Sigma$ is spacelike; there must exist a completion of $\Sigma$ to a complete Cauchy surface, denoted $\mathfrak{C}=\Sigma \cup \overline{\Sigma}$. For example, a spacelike surface that extends to past null infinity in an asymptotically flat spacetime does \textit{not} fulfill the criteria as stated \footnote{It should be possible to remedy this by extending the covariant phase space construction to the conformally compactified spacetime, but this will require some holographic renormalization that is outside the scope of this paper.}. Similarly, although $\Sigma$ is allowed to have disconnected components or be otherwise topologically nontrivial, no two points on it may be timelike separated, or the Cauchy completion would not exist. Call the codimension-2 surface here $\Sigma$ and $\overline{\Sigma}$ intersect the \textit{entangling surface}.
\par
Define the distribution $\mathcal{S}(\Sigma)$ to consist of flows tangent to prephase space that vanish on $\overline{\Sigma}$ (we will use this terminology to mean that all derivatives vanish), and therefore have all derivatives on $\mathfrak{C}$ supported on $\Sigma$. Likewise $\mathcal{S}(\overline{\Sigma})$ consists of the flows that vanish on $\Sigma$ . It is obvious that:
\begin{prop}\label{prop:int}
    $\mathcal{S}(\Sigma)$ and $\mathcal{S}(\overline{\Sigma})$ are integrable, and therefore involutive, distributions.
\end{prop}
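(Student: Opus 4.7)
The plan is to establish involutivity and integrability as separate claims, since the author has emphasized that Frobenius' theorem is unavailable in the Frechet setting and so the former does not automatically imply the latter. For involutivity the argument is essentially definitional: if $\gamma_1, \gamma_2 \in \mathcal{S}(\Sigma)$, then both vanish to all orders on $\overline{\Sigma}$, and the Lie bracket $[\gamma_1,\gamma_2]$ on configuration space is built from derivatives of the component functions of each flow along the other. All such derivatives continue to vanish on $\overline{\Sigma}$, so the bracket does too. Since both flows are also tangent to $\mathcal{P}$, so is their bracket, giving $[\gamma_1,\gamma_2] \in \mathcal{S}(\Sigma)$.

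For integrability, understood as the existence of a foliation of $\mathcal{P}$ whose tangent distribution coincides with $\mathcal{S}(\Sigma)$, I would exhibit the integral submanifolds explicitly. Through each $\phi \in \mathcal{P}$ I would take
\[
M_\phi := \{\phi' \in \mathcal{P} : \phi' \equiv \phi \text{ to all orders on } \overline{\Sigma}\}.
\]
Any tangent vector to $M_\phi$ at $\phi' \in M_\phi$ is, by construction, a linearized solution about $\phi'$ that vanishes on $\overline{\Sigma}$, so it lies in $\mathcal{S}(\Sigma)|_{\phi'}$. The nontrivial direction is the converse: every element of $\mathcal{S}(\Sigma)|_{\phi'}$ must be realized as a tangent to a curve in $M_\phi$, so that $M_\phi$ is actually a submanifold with the correct tangent space. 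Assumption \ref{causal_2} supplies exactly this, by guaranteeing that a linearized perturbation vanishing to all orders on $\overline{\Sigma}$ glues to the zero perturbation on $D(\overline{\Sigma})$ to produce a valid infinitesimal flow on the whole spacetime; Assumption \ref{causal_1} ensures in turn that such data can be specified on $\Sigma$ and propagated throughout $D(\Sigma)$.

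The main obstacle is precisely the integrability step. Without Frobenius, integral submanifolds must be constructed by hand, and it is the gluing property of Assumption \ref{causal_2} that makes this possible: it is what allows one to assemble globally valid flows from data localized to one side of the entangling surface, which is not a priori guaranteed in the Frechet context. Swapping the roles of $\Sigma$ and $\overline{\Sigma}$ yields the same conclusion for $\mathcal{S}(\overline{\Sigma})$, completing the proof.
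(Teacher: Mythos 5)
Your core argument is the same as the paper's: identify the putative leaves explicitly as the sets of solutions whose field data on $\overline{\Sigma}$ is fixed to all derivative orders, and verify involutivity directly by writing the Lie bracket in field-space components, whose entries are built from derivatives of components that vanish on $\overline{\Sigma}$. That part is correct and is exactly how the paper proceeds; the paper's proof is essentially the two sentences "the leaves are the submanifolds where the data on $\overline{\Sigma}$ is fixed" plus the component-bracket check.

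Where you go beyond the paper is in trying to justify the converse inclusion $\mathcal{S}(\Sigma)|_{\phi'}\subseteq T_{\phi'}M_\phi$ using Assumptions \ref{causal_1} and \ref{causal_2}, and that step does not do what you claim. Those assumptions are existence and gluing statements for \emph{linearized} solutions; an element of $\mathcal{S}(\Sigma)$ is already, by definition, a global linearized solution tangent to $\mathcal{P}$, so gluing it to the zero perturbation on $D(\overline{\Sigma})$ at most produces a (possibly different) linearized flow vanishing on $D(\overline{\Sigma})$ --- it does not exhibit the original vector as tangent to a curve of \emph{exact} solutions all agreeing with $\phi$ on $\overline{\Sigma}$, which is what tangency to $M_\phi$ means. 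That nonlinear question is precisely the subtlety the paper isolates immediately after Proposition \ref{prop:int} and handles by a separate postulate, Assumption \ref{path-connect} (path-connectedness of the level sets), rather than by the causality assumptions; the paper's own proof simply asserts the identification of the tangent bundles. So your proof is correct to the same standard as the paper's for the part that is actually used later (involutivity and the identification of the leaves), but the specific appeal to Assumptions \ref{causal_1} and \ref{causal_2} as supplying the tangent-space equality should be dropped or replaced by an appeal to (something like) Assumption \ref{path-connect}.
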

\begin{proof}
  We can immediately identify that $\Ss$\;foliates $\mathcal{P}$ with the path-connected submanifolds of $\mathcal{P}$ where the fields and their derivatives on $\overline{\Sigma}$ take fixed values. Similarly $\Ss$b\;generates a foliation where the fields and derivatives on $\Sigma$ take fixed values. Since these distributions form the tangent bundles of the corresponding leaves, they are involutive. This can also be easily checked by writing the Lie bracket of two vector fields in field-space components.
\end{proof}
In principle there could be two solutions that agree on $\overline{\Sigma}$ but are \textit{not} connected by a path in prephase space that preserves the restriction to $\Sigma$. In perturbation theory, where solutions are viewed as Taylor expansions of paths starting at a base point, this does not happen by tautology. Nonperturbatively it is less clear that this is true. For an ordinary scalar field, it is obvious enough that it is, since the phase space can be parametrized as a vector space $(\phi,\dot{\phi})$ consisting of the field and its time derivative on the slice, and it is easy to connect two data sets that agree on $\overline{\Sigma}$. For a general theory (especially one with gauge symmetry) there can be nonlinear constraints on the initial data that make this property less obvious, so we will need to elevate it to an assumption:
\begin{assumption}\label{path-connect}
Nonperturbatively (at least in an open neighbourhood of any point in prephase space), given two solutions that agree to arbitrary derivative order on $\overline{\Sigma}$, one can be deformed to the other while preserving the restriction to $\overline{\Sigma}$.
\end{assumption}
While this seems extremely plausible, it is harder to check this in many examples, so its status is different from that of Assumptions \ref{causal_1} and \ref{causal_2}. The purpose of this assumption is to allow us to define a \textit{phase space} $\mathcal{P}(\Sigma)$ by taking the quotient of $\mathcal{P}$ with respect to the orbits of $\Ssb$, which identifies solutions that agree on $\Sigma$. The differentiable functions on this phase space are the observables supported on $\Sigma$, which are \textit{precisely} the observables on $\mathcal{P}$ that are invariant under $\Ssb$. The phase space $\mathcal{P}(\overline{\Sigma})$ can also be defined analogously. For brevity the corresponding observables will often be rferred to as simply "in" $\Sigma$ or $\overline{\Sigma}$ respectively. In the absence of Assumption \ref{path-connect}, the quotient under $\Ssb$\;is not necessarily path-connected, and a further identification of its components is required to guarantee that solutions which agree on $\Sigma$ map to the same point. In what follows, it will be important to avoid this and identify the observables in $\Sigma$ purely in terms of their invariance under $\Ssb$. As an illustration, observables in $D(\Sigma)$ can be identified with those in $\Sigma$ (likewise for $\overline{\Sigma}$), without appealing to any properties of the equations of motion beyond the stated assumptions. This follows by differentiating the observables along paths in $\mathcal{P}$:
\begin{prop}
    The set of observables supported on $D(\Sigma)$ is exactly equal to the set of observables supported on $\Sigma$.
\end{prop}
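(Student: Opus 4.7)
The plan is to show that both observable sets are defined as the invariants of the \emph{same} distribution of flows on $\mathcal{P}$, so that they coincide tautologically. By the formalism introduced above, observables supported on $\Sigma$ are precisely those annihilated by $\Ssb$, i.e.\ by flows vanishing to all orders on $\Sigma$. In exact parallel, an observable supported on $D(\Sigma)$ should be defined as one that is annihilated by the distribution of flows vanishing throughout $D(\Sigma)$. The proposition therefore reduces to the assertion that these two distributions coincide as subbundles of $T\mathcal{P}$.

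One inclusion is immediate: since $\Sigma \subset D(\Sigma)$, any flow that vanishes on $D(\Sigma)$ in particular vanishes on $\Sigma$, so it already belongs to $\Ssb$. For the reverse inclusion I would invoke Assumption \ref{causal_1}. A flow tangent to $\mathcal{P}$ is, by construction, an on-shell linearized perturbation; its values and transverse derivatives on $\Sigma$ form an initial data set in the sense of that assumption. If the flow vanishes to all orders on $\Sigma$, the induced data vanishes identically, and since the zero solution on $D(\Sigma)$ is an obvious extension of zero data, uniqueness of the Cauchy evolution forces the original flow to vanish throughout $D(\Sigma)$. Linearity does most of the work: the argument only ever compares a given on-shell flow with the zero flow. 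Once both inclusions are in hand, equality of the distributions, and hence equality of the two sets of observables, is immediate.

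The main subtlety to address carefully is that Assumption \ref{causal_1} is phrased in terms of existence of a solution given prescribed data, not in terms of uniqueness. In the present warmup---no gauge freedom---standard well-posedness of the linearized Cauchy problem tacitly carries uniqueness along with existence, and I would either strengthen the assumption explicitly to include it or insert a short remark flagging the use. Doing so early matters because the same template is meant to govern the later gauge-theoretic sections, where an analogous statement will only hold after passage to the quotient by gauge orbits; keeping the logical role of uniqueness transparent here makes that extension cleaner. Beyond this, the proof is essentially bookkeeping: the content of the proposition lies entirely in the causality axiom, and the symplectic/observable framework does nothing more than package it.
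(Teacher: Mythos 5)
Your reduction of the statement to an equality of two distributions is reasonable, and the trivial inclusion is fine, but the step that does the real work fails under the paper's stated hypotheses. You prove the reverse inclusion by invoking \emph{uniqueness} of the linearized Cauchy evolution on $D(\Sigma)$, and you yourself note that Assumption \ref{causal_1} only guarantees existence. This is not a cosmetic omission to be patched by ``strengthening the assumption'': the paper deliberately refrains from assuming uniqueness (it states that only existence properties will be needed and that nonuniqueness is to be handled by the symplectic analysis), and the sentence immediately preceding the proposition announces that the identification of observables in $D(\Sigma)$ with those in $\Sigma$ is to be made \emph{without} appealing to properties of the equations of motion beyond the stated assumptions. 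Under those assumptions alone, your claim that $\Ssb$ coincides with the flows vanishing throughout $D(\Sigma)$ is simply not established; in the later gauge-theoretic setting, which this warmup is explicitly a blueprint for, it is false (a flow can vanish on $\Sigma$ yet be pure gauge, not zero, inside $D(\Sigma)$), so an argument whose engine is Cauchy uniqueness does not transfer.

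The missing idea is the paper's use of Assumption \ref{causal_2} together with invariance of observables under null directions of $\Omega$. Given $\chi\in\Ssb$, one glues the restriction of $\chi$ to $D(\overline{\Sigma})$ with the zero flow on $D(\Sigma)$ (they agree to all orders at the entangling surface) to obtain a genuine on-shell flow $\chi'$ that vanishes identically on $D(\Sigma)$. Since $\chi$ and $\chi'$ agree to all derivatives on the full Cauchy slice $\mathfrak{C}$, their difference is a null direction of $\Omega$; whether or not such directions exist, any observable is invariant under them, so $\delta\mathcal{O}(\chi)=\delta\mathcal{O}(\chi')=0$ for an observable $\mathcal{O}$ supported on $D(\Sigma)$, which shows invariance under all of $\Ssb$ (and Assumption \ref{path-connect} then upgrades this to support on $\Sigma$, a step your write-up also leaves implicit). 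Note that this argument sidesteps the question of whether $\chi$ itself vanishes on $D(\Sigma)$ --- it only needs a \emph{comparison} flow that does --- which is exactly why it survives in the presence of gauge symmetry, where observables remain invariant under the null directions even though the two distributions differ.
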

\begin{proof}
   Consider an observable $\mathcal{O}$ in $D(\Sigma)$. Take a particular point in $\mathcal{P}$ and a tangent vector $\chi$ in $\Ssb$. The restriction of $\chi$ to $D(\overline{\Sigma})$ and the zero flow on $D(\Sigma)$ agree to all derivatives at the entangling surface. By assumption \ref{causal_2}, there is a flow on all of $M$ that is exactly equal to $\chi$ on $D(\overline{\Sigma})$ and that vanishes on $D(\Sigma)$. This flow and $\chi$ agree to all derivatives at $\mathfrak{C}$, so the difference of the two must be a null direction of $\Omega$. We have assumed that there are no such directions, but even if we were to relax this assumption, since $O$ is an observable it must be invariant under the difference. Therefore the derivative of $O$ along $\chi$ must be equal to the derivative along the constructed flow, which is zero. We conclude that $O$ is invariant under $\Ssb$, and by Assumption \ref{path-connect} is thus supported on $\Sigma$.
\end{proof}
\par
In any self-consistent classical field theory, the Poisson bracket of two regular observables is another regular observable on the same phase space. Therefore in order to to justifiably consider $\Sigma$, or equivalently $D(\Sigma)$, as defining a proper subsystem, the regular observables in $\Sigma$ must form a closed Poisson algebra. To prove this, we will first demonstrate the following:

\begin{prop}\label{complement}
$\Ss$ and $\Ssb$\;are \textit{symplectic complements} of each other; if a tangent vector $\chi$ to $\mathcal{P}$ satisfies $\Omega(\chi,\eta)=0$ for all $\eta\in\Ss$ then $\chi\in\Ssb$ and vice versa, and likewise if $\Omega(\chi,\eta)=0$ for all $\eta\in$ $\Ssb$, $\chi\in\Ss$ and vice versa.
\end{prop}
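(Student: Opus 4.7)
The plan is to verify both containments of the symplectic complement relation by evaluating $\Omega$ on the Cauchy slice $\mathfrak{C}=\Sigma\cup\overline{\Sigma}$, which lets one exploit the locality of $\omega$ in the flows and their derivatives. The two assertions of the proposition are symmetric under $\Sigma\leftrightarrow\overline{\Sigma}$, so it suffices to prove one. The easy direction, that any $\chi\in\Ssb$ satisfies $\Omega(\chi,\eta)=0$ for every $\eta\in\Ss$, is immediate: the integrand $\omega(\chi,\eta)$ is pointwise constructed from the two flows and their derivatives, so on $\Sigma$ it vanishes because $\chi\in\Ssb$ vanishes to all orders there, while on $\overline{\Sigma}$ it vanishes because $\eta\in\Ss$ does.

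For the harder direction, suppose $\Omega(\chi,\eta)=0$ for every $\eta\in\Ss$. Since $\eta$ vanishes on $\overline{\Sigma}$, the pairing reduces to $\int_\Sigma\omega(\chi,\eta)$. I would first use Assumption \ref{causal_2} to produce a large supply of test flows: given any smooth linearized data on $\Sigma$ compactly supported away from $\partial\Sigma$ and consistent with the linearized constraints, the identically zero data on $\overline{\Sigma}$ trivially agrees to all orders with it at $\partial\Sigma$, so the assumption glues the two into a global flow that vanishes identically on $\overline{\Sigma}$ and therefore lies in $\Ss$. The hypothesis then forces $\int_\Sigma\omega(\chi,\eta)$ to vanish against arbitrary such test data.

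Next I would appeal to the pointwise nondegeneracy of $\omega$ on initial data. The integrand is pointwise a bilinear expression in finite-order transverse jets of $\chi$ and $\eta$ on $\Sigma$, nondegenerate modulo the local constraints --- equivalently, to the nondegeneracy of $\Omega$ on $\mathcal{P}$. The fundamental lemma of calculus of variations then forces the relevant jet of $\chi$ to vanish on the interior of $\Sigma$, and smoothness extends this to $\partial\Sigma$. Iteratively differentiating the linearized equations $\delta E_a(\chi)=0$ allows each higher normal derivative of $\chi$ on $\Sigma$ to be expressed in terms of the lower ones already known to vanish, so $\chi$ vanishes on $\Sigma$ to all orders and therefore lies in $\Ssb$.

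The main obstacle will be the pointwise nondegeneracy step in the presence of local constraints. For an unconstrained theory such as a scalar field it is elementary, but when the constraints are nontrivial one has to check that the test flows constructed via Assumption \ref{causal_2} span the constrained data space densely enough to probe an arbitrary constrained $\chi$. A secondary concern is the hyperbolic-style propagation of vanishing from finite jet to full jet, which implicitly relies on the sort of local gauge-fixing reasoning spelled out in the paragraph following Assumption \ref{causal_2}.
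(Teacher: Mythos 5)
Your first direction coincides with the paper's and is fine. The second direction, however, hinges on a step that is asserted rather than proven and that does not follow from anything available: you claim the slice density $\omega$ is ``pointwise nondegenerate on finite-order transverse jets modulo the local constraints,'' and that this is ``equivalent'' to the nondegeneracy of $\Omega$ on $\mathcal{P}$. Nondegeneracy of $\Omega$ is a statement about the \emph{integrated} form evaluated on \emph{on-shell} flows; it does not imply pointwise nondegeneracy of the density on data (the density generically pairs jets through derivative terms, and a pointwise degenerate direction of $\omega$ need not extend to an on-shell flow, so no contradiction with the nondegeneracy of $\Omega$ arises). Passing from the global statement to a local one is exactly the nontrivial content of the hard direction, and the paper does not take it for granted: it interprets ``no gauge symmetry'' as Claim \ref{no_gauge} and justifies that claim as the special case of Propositions \ref{prop:local_gauge} and \ref{prop:local_gauge_noether} proven in Appendix \ref{app:gauge}. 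There the fundamental-lemma-style argument is applied not to the pointwise pairing of data on $\Sigma$ but to the spacetime pairing $\int X^a J_a$ against arbitrary smooth sources $J_a$ compatible with the Noether identities; the freedom to realize arbitrary such sources is supplied by Assumption \ref{causal_1} via interpolating off-shell flows between two slices, and the algebraic Poincar\'e lemma then puts the conclusion in the locally gauge form $-I_X\omega=C+dq$. With no gauge symmetry there are no nontrivial linearized Noether identities, so one concludes $\chi$ vanishes throughout $D(\Sigma)$, hence $\chi\in\Ssb$. Your proposal, by contrast, would need the slice-level nondegeneracy as an independent hypothesis, which the paper never assumes.

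Two secondary points, which you partly flagged yourself: first, your supply of test flows is not quite justified as stated, since Assumption \ref{causal_2} glues \emph{solutions} on the two domains of dependence that already agree to all orders at $\partial\Sigma$, while Assumption \ref{causal_1} guarantees existence of an extension of data but not causal support, so the solution on $D(\Sigma)$ built from data supported away from $\partial\Sigma$ is not automatically trivial to all orders at $\partial\Sigma$; second, promoting the vanishing of the finite-order data jet of $\chi$ to vanishing of all derivatives on $\Sigma$ by solving $\delta E_a(\chi)=0$ for higher normal derivatives presupposes a non-characteristic (hyperbolic-gauge) form of the linearized equations that the paper deliberately avoids invoking directly. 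Both of these could likely be repaired along the lines of the hyperbolic-gauge discussion following Assumption \ref{causal_2}, but the nondegeneracy step is the genuine missing idea, and filling it essentially amounts to reproducing the argument of Appendix \ref{app:gauge}.
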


\begin{proof}
   Inserting $\chi\in\Ssb$, $\eta\in\Ss$ into $\Omega$, we find that
   \begin{equation}
\Omega(\eta,\chi)=\int_{\Sigma}\omega(\chi,\eta)+\int_{\overline{\Sigma}} \omega (\chi,\eta)=\int_{\Sigma} \cancel{\omega(0,\eta)}+\int_{\overline{\Sigma}}\cancel{\omega (\chi,0)} = 0
   \end{equation}
since $\chi$ vanishes to all derivative orders on $\Sigma$ and $\eta$ vanishes to all orders on $\overline{\Sigma}$. This establishes one direction. For the other, we will interpret the assumption that the theory has no gauge symmetry as meaning that:
\begin{claim}\label{no_gauge}
If
\begin{equation}
    \Omega(\chi,\eta)=\int_{\Sigma} \omega(\chi,\eta)=0
\end{equation}
for all $\eta\in\Ss$, then $\chi$ vanishes on all of $D(\Sigma)$. 
\end{claim}
In the next section and Appendix \ref{app:gauge}, a general result about gauge theories is proven that includes this as a special case. This implies that $\chi$ belongs to $\Ssb$ as desired.
\end{proof}
From this we can deduce the following characterization of the regular observables in $\Sigma$:
\begin{prop}\label{prop:Ss}
The set of observables generating flows belonging to $\Ss$ is \textit{exactly} the set of regular observables supported on $\Sigma$.
\end{prop}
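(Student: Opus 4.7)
The plan is to use Proposition \ref{complement}, which identifies $\Ss$ and $\Ssb$ as mutual symplectic complements, as the central tool; the proposition is essentially a dual restatement of this complementarity at the level of observables rather than flows. I would prove the two containments separately.

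For the forward direction, suppose $\mathcal{O}$ is an observable whose generating flow $\chi$ lies in $\Ss$. I first need to check that $\mathcal{O}$ is regular, which is immediate by the definition of regularity given that $\chi$ is by hypothesis a genuine (smooth, tangent to $\mathcal{P}$) flow rather than a distributional generator. To show that $\mathcal{O}$ is supported on $\Sigma$, in the sense of the preceding discussion it suffices to verify that $\mathcal{O}$ is invariant under every $\eta\in\Ssb$. By equation \eqref{poisson_bracket} style dualization, $\delta\mathcal{O}(\eta)=\Omega(\eta,\chi)$, and this vanishes by Proposition \ref{complement} since $\chi\in\Ss$ and $\eta\in\Ssb$ are symplectically orthogonal. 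Appealing to Assumption \ref{path-connect} to ensure that invariance under $\Ssb$ is equivalent to being well-defined on the quotient $\mathcal{P}(\Sigma)$, this identifies $\mathcal{O}$ as an observable supported on $\Sigma$.

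For the reverse direction, let $\mathcal{O}$ be a regular observable supported on $\Sigma$. Regularity supplies a smooth prephase space-tangent flow $\chi$ with $\delta\mathcal{O}=\Omega(\cdot,\chi)$. Supportedness on $\Sigma$ means that $\delta\mathcal{O}(\eta)=0$ for every $\eta\in\Ssb$, so $\Omega(\eta,\chi)=0$ for every such $\eta$. Then the ``other direction'' of Proposition \ref{complement} immediately gives $\chi\in\Ss$, completing the proof.

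The only step that I expect to require care is the reverse direction, because $\chi$ is defined by \eqref{flow} only up to a null direction of $\Omega$; a priori one could worry that a bad representative of $\chi$ fails to lie in $\Ss$ even though some representative does. In the present section this concern is moot because the working assumption is that the theory has no gauge symmetry (cf.\ Claim \ref{no_gauge}), so $\Omega$ is nondegenerate on $\mathcal{P}$ and $\chi$ is uniquely determined. In the later gauge-invariant generalizations this is where extra bookkeeping will be needed, but here Proposition \ref{complement} does the whole job.
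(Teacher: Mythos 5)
Your proof is correct and follows essentially the same route as the paper's: both directions are handled by combining the definition of regular observables with the symplectic complementarity of $\Ss$ and $\Ssb$ from Proposition \ref{complement}, with invariance under $\Ssb$ identified with being supported on $\Sigma$. Your closing remark on the non-uniqueness of $\chi$ is a reasonable extra precaution, but in this no-gauge-symmetry setting it adds nothing beyond what the paper's argument already covers.
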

\begin{proof}
Any observable that generates a flow which belongs to $\Ss$ for all points in $\mathcal{P}$ is invariant under any flow $\chi\in\Ssb$, and is therefore supported on $\Sigma$. Meanwhile, a regular observable supported on $\Sigma$ is invariant under all flows in $\Ssb$, so the flow that it generates must symplectically annihilate any flow in $\Ssb$. By Proposition \ref{complement}, this flow belongs to $\Ss$ for all points in $\mathcal{P}$.
\end{proof}
As a consequence we have two corollaries:
\begin{cor}
    The regular observables supported on $\Sigma$ form a closed Poisson algebra.
\end{cor}
\begin{proof}
    The bracket of two observables that each generate flows in $\Ss$ is a regular observable that generates the Lie bracket of those flows. By Proposition \ref{prop:int}, $\Ss$ is involutive, so this flow also belongs to $\Ss$. The converse direction of Proposition \ref{prop:Ss} implies that the bracket is an observable supported on $\Sigma$.
\end{proof}
\begin{cor}
    The regular observables supported on $\Sigma$ Poisson commute with all observables localized in $\overline{\Sigma}$.
\end{cor}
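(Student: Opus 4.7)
The plan is to reduce the claim to a direct application of Proposition \ref{prop:Ss} (applied on both sides of the entangling surface) together with the symplectic complementarity of Proposition \ref{complement}. Fix a regular observable $\mathcal{R}$ supported on $\Sigma$, and an observable $\mathcal{O}$ supported on $\overline{\Sigma}$, which may be regular or singular. The corresponding two cases exhaust all possibilities, and I will treat them separately.

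First I would handle the case where $\mathcal{O}$ is regular. Proposition \ref{prop:Ss} is symmetric under the interchange $\Sigma\leftrightarrow\overline{\Sigma}$ (the proof uses only the symmetric statement of Proposition \ref{complement}), so it applies equally well on $\overline{\Sigma}$. Thus the generating flows satisfy $\chi^{\mathcal{R}}\in\Ss$ and $\chi^{\mathcal{O}}\in\Ssb$, and the bracket is
\begin{equation}
\{\mathcal{R},\mathcal{O}\}=\Omega(\chi^{\mathcal{R}},\chi^{\mathcal{O}}),
\end{equation}
which vanishes by Proposition \ref{complement} since $\Ss$ and $\Ssb$ are mutual symplectic complements.

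For the case where $\mathcal{O}$ is singular, only the bracket $\{\mathcal{O},\mathcal{R}\}=\delta\mathcal{O}(\chi^{\mathcal{R}})$ is defined, according to the conventions laid out following Proposition 2.2. But $\mathcal{O}$ is supported on $\overline{\Sigma}$, meaning by definition that it is invariant under every flow in $\Ss$. Proposition \ref{prop:Ss} gives $\chi^{\mathcal{R}}\in\Ss$, so $\delta\mathcal{O}(\chi^{\mathcal{R}})=0$ and the bracket vanishes on this side as well. There is no genuine obstacle in either case; the only point that deserves comment is the asymmetry in how the bracket is defined when one side is singular, which is why the second case must be computed directly from the invariance characterization rather than from the symplectic pairing of two flows.
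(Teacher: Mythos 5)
Your proposal is correct and, at its core, is the same argument the paper intends when it calls the corollary a trivial consequence of the definition of $\Ss$: by Proposition \ref{prop:Ss} the flow $\chi^{\mathcal{R}}$ lies in $\Ss$, and any observable localized in $\overline{\Sigma}$ is by definition invariant under $\Ss$, so $\delta\mathcal{O}(\chi^{\mathcal{R}})=0$. The case split is harmless but unnecessary, since your second (invariance-based) argument already covers regular $\mathcal{O}$ without invoking Proposition \ref{complement}.
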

\begin{proof}
    This is a trivial consequence of the definition of $\Ss$.
\end{proof}
\par
The key step in this chain of arguments that made use of the causality assumptions is the characterization of theories with no gauge symmetry given in Claim \ref{no_gauge}. The regular observables in $\Sigma$ are therefore consistent with the proposal $\Sigma$ defines a subsystem. Recalling that these are supposed to correspond to sensible operators in the quantum field theory, we could be satisfied. However we have not said anything about the \textit{completeness} of the regular observables; are their values sufficient to determine the state in $\mathcal{P}(\Sigma)$? At any point in $\mathcal{P}$, there is a linearized observable that generates any flow in $\Ss$. Consider two arbitrary smooth perturbations $\zeta_1$ and $\zeta_2$, and suppose that $\Omega(\zeta_1-\zeta_2,\eta)=0$ for all $\eta\in\Ss$. Again, this implies that $\zeta_1$ and $\zeta_2$ are equal on $D(\Sigma)$, so they induce the same flow on $\mathcal{P}(\Sigma)$. Therefore the flow that any $\zeta$ induces on $\mathcal{P}(\Sigma)$ is determined entirely by its insertion into the linearized regular observables supported on $\Sigma$. 
\par
However, it is not obvious that this remains true in the full nonlinear theory. The aforementioned linearized observables at any point of $\mathcal{P}$ are really the 1-forms $\Omega(\cdot,\eta)$, which form a subbundle of $T^*\mathcal{P}$. However, to carry out the argument above for the nonlinear regular observables, it is necessary to find a basis of this bundle consisting of exact forms In other words:
\begin{assumption}\label{complete}
At each point in $\mathcal{P}$, every flow in $\Ss$ is generated by some regular observable supported on $\Sigma$.
\end{assumption}
Note that if $\mathcal{P}$ were finite dimensional, it would be possible to \textit{deduce} this from Proposition \ref{complement} using the fact that all 1-forms can be obtained by inserting some flow into $\Omega$, and therefore this bundle is exactly the set of 1-forms that annihilates $\Ssb$. Since $\Ssb$\;is an integrable distribution, the codistribution that annihilate it admits an exact basis. The flaw of this argument in the Frechet setting is that some of these annihilating 1-forms do not generate a smooth flow. The bundle of \textit{regular} 1-forms annihilating $\Ssb$\;is generated by inserting flows belonging to $\Ss$\;into $\Omega$, but this is not the entire annihilator of $\Ssb$ and is not guaranteed to be integrable. The best we can do is to note that if a functional generates a smooth flow at some point in $\mathcal{P}$, as the solution is smoothly deformed it will likely continue to generate a smooth flow in some open neighbourhood around that point. I expect that this can be made precise enough to to prove Assumption \ref{complete} in perturbation theory, but I will not pursue this further. This is the least important of the assumptions we've introduced, and is not used directly to prove any other results. Some further discussion on this theme can be found in section \ref{subsec:factor}.
\par
We may also ask whether the Poisson bracket of observables supported on $\Sigma$ can be computed via a symplectic form constructed locally out of the fields and their variations on $\Sigma$, which induces a symplectic form on $\mathcal{P}(\Sigma)$. This is the approach to studying subregion phase spaces taken in much of the literature \cite{subsystem,local,fixed_area,reference,frames}. The natural choice is to define $\Omega_\Sigma=\int_{\Sigma} \omega$. Unfortunately, this is subject to some ambiguity because $\omega$ can be shifted by local exact forms, as discussed in section \ref{subsec:prelim}. The resolution of this ambiguity for the total symplectic form $\Omega$ \cite{harlow, Compere_Marolf_corner} based on an appropriate choice of boundary conditions does \textit{not} resolve this issue for $\Omega_\Sigma$ if $\Sigma$ has any nontrivial boundary components in the interior of $\mathcal{M}$. A term $d\beta$ added to $\omega$ for which $\beta$ dies off sufficiently fast towards the global spacetime boundary will not affect $\Omega$, but can contribute a nontrivial term $\int _{\partial\Sigma} \beta$ to $\Omega_\Sigma$. Kirklin has argued \cite{unambiguous} that this ambiguity disappears in a different formalism where the symplectic current is integrated around a closed surface enclosing $\Sigma$ instead, and on-shell flows are replaced with flows obeying certain retarded boundary conditions. We will see that the ambiguity in fact goes away even in the original covariant phase space formalism.
\par
First note that:
\begin{prop} 
The Poisson brackets of regular observables supported on $\Sigma$, computed according to $\Omega_{\Sigma}$, agrees with that computed according to $\Omega$. This holds true regardless of how the symplectic current is fixed in the interior of $\mathcal{M}$.
\end{prop}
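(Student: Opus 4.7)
The plan is to reduce the bracket computation to an integral expression where the locality of $\omega$ and the characterization of Hamiltonian flows from Proposition \ref{prop:Ss} make everything transparent. The two things to establish are (i) that $\Omega_\Sigma$ gives the same numerical value as $\Omega$ on the relevant pairs of flows, and (ii) that this value is insensitive to shifts of $\omega$ by $d\gamma$ for locally constructed $\gamma$ supported in the interior of $\mathcal{M}$.

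First I would pick two regular observables $\mathcal{O}_1, \mathcal{O}_2$ supported on $\Sigma$ and invoke Proposition \ref{prop:Ss} to conclude that their Hamiltonian flows $\chi_1, \chi_2$ belong to $\Ss$, i.e. vanish to all derivative orders on $\overline{\Sigma}$. Then the Poisson bracket is computed as
\begin{equation}
    \{\mathcal{O}_1,\mathcal{O}_2\} = \Omega(\chi_1,\chi_2) = \int_\Sigma \omega(\chi_1,\chi_2) + \int_{\overline{\Sigma}} \omega(\chi_1,\chi_2).
\end{equation}
Since $\omega$ is locally constructed out of the fields and the two insertions, and both $\chi_1$ and $\chi_2$ vanish together with all their derivatives on $\overline{\Sigma}$, the integrand on $\overline{\Sigma}$ is identically zero. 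The remaining piece is precisely $\Omega_\Sigma(\chi_1,\chi_2)$, so the brackets agree.

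For the independence from the interior prescription for $\omega$, I would consider a shift $\omega \to \omega + d\gamma$ where $\gamma$ is a locally constructed $D-2$ form, bilinear in the variations, supported in the interior (i.e.\ of the type that preserves the total $\Omega$ once boundary conditions at infinity are respected). By Stokes' theorem applied to $\Omega_\Sigma$,
\begin{equation}
    \delta\Omega_\Sigma(\chi_1,\chi_2) = \int_{\partial\Sigma} \gamma(\chi_1,\chi_2).
\end{equation}
Because $\gamma$ is locally constructed and $\chi_1,\chi_2$ vanish to all orders on $\overline{\Sigma}$, and in particular at the entangling surface $\partial\Sigma$, every term in $\gamma(\chi_1,\chi_2)|_{\partial\Sigma}$ has at least one factor that is zero. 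Hence the boundary integral vanishes and $\Omega_\Sigma(\chi_1,\chi_2)$ is unambiguous.

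I do not anticipate a significant obstacle; once Proposition \ref{prop:Ss} places the Hamiltonian flows inside $\Ss$, both parts of the statement follow essentially from locality together with the vanishing of the flows to all orders on $\overline{\Sigma}$. The only subtlety worth flagging is that the argument really uses vanishing to \emph{all} derivative orders (not just pointwise) at $\partial\Sigma$, since $\gamma$ may involve arbitrarily many derivatives of the variations --- but this is precisely how $\Ss$ was defined, so the conclusion holds without further input.
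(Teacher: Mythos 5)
Your two computations --- the cancellation of the $\overline{\Sigma}$ integral for flows in $\Ss$, and the vanishing of the corner term $\int_{\partial\Sigma}\gamma(\chi_1,\chi_2)$ because those flows vanish to all derivative orders at the entangling surface --- are correct, and they reproduce the first and last steps of the paper's proof. What is missing is the step that makes ``the bracket computed according to $\Omega_\Sigma$'' well defined at all. To compute a bracket with $\Omega_\Sigma$ one must first solve $\delta\mathcal{O}_2=\Omega_\Sigma(\cdot,\eta')$ for a flow $\eta'$ and then evaluate $\delta\mathcal{O}_1(\eta')$; but unlike $\Omega$, the form $\Omega_\Sigma$ is degenerate --- every flow in $\Ssb$ is a null direction --- so $\eta'$ is far from unique. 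Your argument tacitly chooses $\eta'=\chi_2$, the flow generated via the full $\Omega$, which is indeed one admissible choice (that is the content of your first display), but it does not rule out that a different admissible $\eta'$ yields a different value, in which case the comparison asserted by the proposition would not even be meaningful.

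The paper closes exactly this gap by characterizing the null directions of $\Omega_\Sigma$: if $I_\chi\Omega_\Sigma=0$, then in particular $\Omega_\Sigma(\eta,\chi)=0$ for every $\eta\in\Ss$; since such $\eta$ vanish on $\overline{\Sigma}$, this coincides with $\Omega(\eta,\chi)$, and Proposition \ref{complement} (whose proof rests on the no-gauge-symmetry claim and, through it, on Assumptions \ref{causal_1} and \ref{causal_2}) then forces $\chi\in\Ssb$, i.e.\ $\chi$ vanishes on $D(\Sigma)$. Any admissible $\eta'$ therefore differs from $\chi_2$ only by a flow under which $\mathcal{O}_1$, being supported on $\Sigma$, is invariant, so $\delta\mathcal{O}_1(\eta')=\delta\mathcal{O}_1(\chi_2)$ and the $\Omega_\Sigma$-bracket is both unambiguous and equal to the $\Omega$-bracket. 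Note that this is not ``essentially locality'': it is the one place where the causal assumptions enter. The same remark applies to the shifted current $\omega+d\gamma$: beyond checking the value on the pair $(\chi_1,\chi_2)$, one should observe that $\chi_2$ still generates $\mathcal{O}_2$ with respect to the shifted $\Omega_\Sigma$ (since $\int_{\partial\Sigma}\gamma(\cdot,\chi_2)=0$ as a one-form) and that the null-direction argument still confines the residual ambiguity to $\Ssb$; only then is independence of the interior prescription established in the sense the proposition claims.
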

\begin{proof}
    Suppose that such an observable $O$ generates a flow $\eta$ according to $\Omega$. Since $\eta\in$ $\Ss$,
    \begin{equation}
        \delta O = \int \omega(\cdot,\eta) = \int_{\Sigma} \omega(\cdot,\eta) + \int_{\overline{\Sigma}} \cancel{\omega(\cdot,\eta)}= \Omega_\Sigma(\cdot, \eta).
    \end{equation}
    Despite this, $O$ does not uniquely generate $\eta$ on $\mathcal{P}$ via $\Omega_\Sigma$ because there exist other flows that have this property. However, any such flow differs from $\eta$ by a vector $\chi$ such that $-I_\chi \Omega_\Sigma=0$, which implies that $\chi\in$ $\Ssb$ and vanishes in $D(\Sigma)$. On the phase space $\mathcal{P}(\Sigma)$, there is thus a unique flow generated by $O$. Under the action of this flow, the variation of observable in $\Sigma$ is the same as its variation under $\eta$. The ambiguity in the symplectic current does not affect this argument because if we were to shift $\omega$ by a locally constructed term $d\beta$, the change in $\Omega_\Sigma$ is $\int_{\partial\Sigma}\beta$, and since $\eta$ goes to zero at all derivative orders near $\partial \Sigma$, $\int_{\partial\Sigma}\beta(\cdot,\eta)=0$.
\end{proof}
The key assumption in this proof was that \textit{one} of the observables was regular, in order to define the flow that it generates. There is no problem if the other observable is singular, as long as the Poisson bracket is taken in the appropriate direction i.e. ${\cdot,O}$. In fact with a slight modification this argument can be extended to when the $O$ is a \textit{singular generator} and the other is regular. The key observation is that even if a flow $\eta$ is distributional, in order for $\Omega_{\overline{\Sigma}}(\chi,\eta)=0$ for all flows $\chi$ in $\Ssb$, $\eta$ must restrict to zero on the interior of $D(\Sigma)$. Again, this follows from a more general result in the next section. The proof of Proposition \ref{prop:Ss} extends immediately to show that the singular generators supported on $\Sigma$ are exactly those that generate flows vanishing in $D(\Sigma)$. This implies the following:
\begin{prop}
  Given a regular observable $R$ and a singular observable $O$ supported on $\Sigma$, the Poisson bracket $\{R,O\}$ is identical whether computed with $\Omega$ or $\Omega_{\Sigma+\Delta}$, where $\Sigma+\Delta$ is an arbitrarily small enlargement of $\Sigma$ within the Cauchy slice $\mathfrak{C}$.
\end{prop}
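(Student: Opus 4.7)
The plan is to adapt the argument of the previous proposition to the case of a singular $O$, exploiting the enlargement $\Delta$ to buffer the singularities of $O$ away from the boundary of the integration region. First, since $R$ is regular and supported on $\Sigma$, Proposition \ref{prop:Ss} produces a smooth flow $\chi^R\in\Ss$ that vanishes to all orders on $\overline{\Sigma}$. The $\Omega$-bracket is then $\{R,O\}_\Omega=\delta O(\chi^R)$, which is a well-defined number since $\delta O$ is a (possibly distributional) 1-form on $\mathcal{P}$ and $\chi^R$ is smooth. Because $\overline{\Sigma+\Delta}\subset\overline{\Sigma}$, the same flow $\chi^R$ vanishes on $\overline{\Sigma+\Delta}$. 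Splitting $\Omega=\int_{\Sigma+\Delta}\omega+\int_{\overline{\Sigma+\Delta}}\omega$, the second piece kills $\chi^R$, so $\delta R=\Omega_{\Sigma+\Delta}(\cdot,\chi^R)$; thus $\chi^R$ is a legitimate $\Omega_{\Sigma+\Delta}$-flow for $R$.

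Next I would check that the remaining ambiguities in choosing the $\Omega_{\Sigma+\Delta}$-flow leave the bracket unaffected. Any other valid flow $\tilde{\chi}^R$ differs from $\chi^R$ by a vector in the kernel of $\Omega_{\Sigma+\Delta}$. The general analogue of Claim \ref{no_gauge} (proven in the next section and applied to $\Sigma+\Delta$) identifies this kernel with flows vanishing throughout $D(\Sigma+\Delta)$, i.e.\ elements of $\mathcal{S}(\overline{\Sigma+\Delta})\subset\Ssb$. Since $O$ is supported on $\Sigma$, it is invariant under all of $\Ssb$, and so $\delta O(\tilde{\chi}^R)=\delta O(\chi^R)$, giving $\{R,O\}_{\Sigma+\Delta}=\delta O(\chi^R)=\{R,O\}_\Omega$. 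For the ambiguity in $\omega$ itself: shifting by a local exact term $d\beta$ adds a corner contribution $\int_{\partial(\Sigma+\Delta)}\beta(\cdot,\chi^R)$ to $\Omega_{\Sigma+\Delta}$, but $\partial(\Sigma+\Delta)$ lies inside $\overline{\Sigma}$ where $\chi^R$ vanishes to all orders, so this correction is trivially zero.

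The main obstacle is really conceptual rather than computational: justifying why the enlargement is needed at all. For regular $O$ the argument of the previous proposition goes through directly with $\Omega_\Sigma$, and no buffer is required. When $O$ is singular, however, $\delta O$ may concentrate distributional weight at $\partial\Sigma$, and the boundary term $\int_{\partial\Sigma}\beta$ associated with shifting $\omega$ could in principle develop subtle singularities that the argument above would need to control. Using $\Omega_{\Sigma+\Delta}$ moves the integration boundary strictly away from the singular support of $O$, rendering all corner contributions manifestly harmless. Because $\Delta$ may be taken arbitrarily small, the resulting bracket is fully determined by the data of $\Sigma$ alone, matching the result obtained directly from $\Omega$.
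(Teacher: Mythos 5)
Your argument is correct, but it runs on a different mechanism than the paper's. You never invoke a flow generated by $O$: you recycle the proof of the preceding proposition, showing that the smooth flow $\chi^R\in\Ss$ serves as a generator of $R$ for both $\Omega$ and $\Omega_{\Sigma+\Delta}$, and then controlling the residual ambiguity via the kernel of $\Omega_{\Sigma+\Delta}$ (the analogue of Claim \ref{no_gauge} applied to $\Sigma+\Delta$) together with the invariance of $O$ under $\Ssb$, so the bracket is always $\delta O(\chi^R)$. The paper instead treats $O$ as a \emph{singular generator} with a distributional flow $\eta_O$, writes $\{R,O\}=\Omega(\eta_R,\eta_O)$, splits the integral over $\mathfrak{C}=(\Sigma+\Delta)\cup\overline{\Sigma+\Delta}$, and uses the theory of currents: the current $\omega(\eta_R,\eta_O)$ vanishes identically on $\overline{\Sigma+\Delta}$ because the enlargement encloses any distributional singularities of $\eta_O$ sitting at $\partial\Sigma$, and the $d\beta$ ambiguity dies by Stokes' theorem for currents at $\partial(\Sigma+\Delta)$. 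What your route buys: it applies to any singular observable (no need for $O$ to generate a flow), it is closer in spirit to the literal statement, and it makes the flow-ambiguity point explicit. What the paper's route buys: it establishes directly that the bracket is computed by inserting \emph{both} flows into the local form $\Omega_{\Sigma+\Delta}$, which is the formulation that survives when $R$ is itself a singular generator (with compatible wavefront sets) and which underlies the subsequent comparison with Kirklin's prescription; your argument, relying on $R$ generating a smooth flow, does not by itself yield that stronger statement, though when $O$ is a generator the two readings agree since $\delta O(\chi^R)=\Omega(\chi^R,\eta_O)$. Your closing remark about why the buffer $\Delta$ is needed matches the paper's stated motivation.
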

\begin{proof}
    The bracket is given by
    \begin{equation}
        \{R,O\}=\Omega(\eta_R,\eta_O)=\int_{\Sigma+\Delta} \omega(\eta_R,\eta_O)+\int_{\overline{\Sigma+\Delta}} \cancel{\omega(\eta_R,\eta_O)}=\Omega_{\Sigma+\Delta}(\eta_R,\eta_O)
    \end{equation}
as desired, because both $\eta_R$ and $\eta_O$ vanish on the interior of $\overline{\Sigma}$. The purpose of the enlargement of $\Sigma$ is to enclose within the subregion possible distributional singularities that may appear at the boundary of $\Sigma$. These integrals are in terms of currents as discussed in section \ref{subsec:prelim}, so the integral over ${\overline{\Sigma+\Delta}}$ vanishes because $\omega(\eta_R,\eta_O)$ is the zero current there.
\par
If the symplectic current were shifted by a term $d\beta$, the result would be to shift $\Omega_{\Sigma+\Delta}(\eta_R,\eta_O)$ by $\int_{\Sigma+\Delta}d\beta(\eta_R,\eta_O)$. By the Stokes' theorem for currents, this still integrates out to $\int _{\partial(\Sigma+\Delta)} C(\eta_R,\eta_O)=0$. 
\end{proof}
\begin{figure}
    \centering
    \includegraphics[scale=.2]{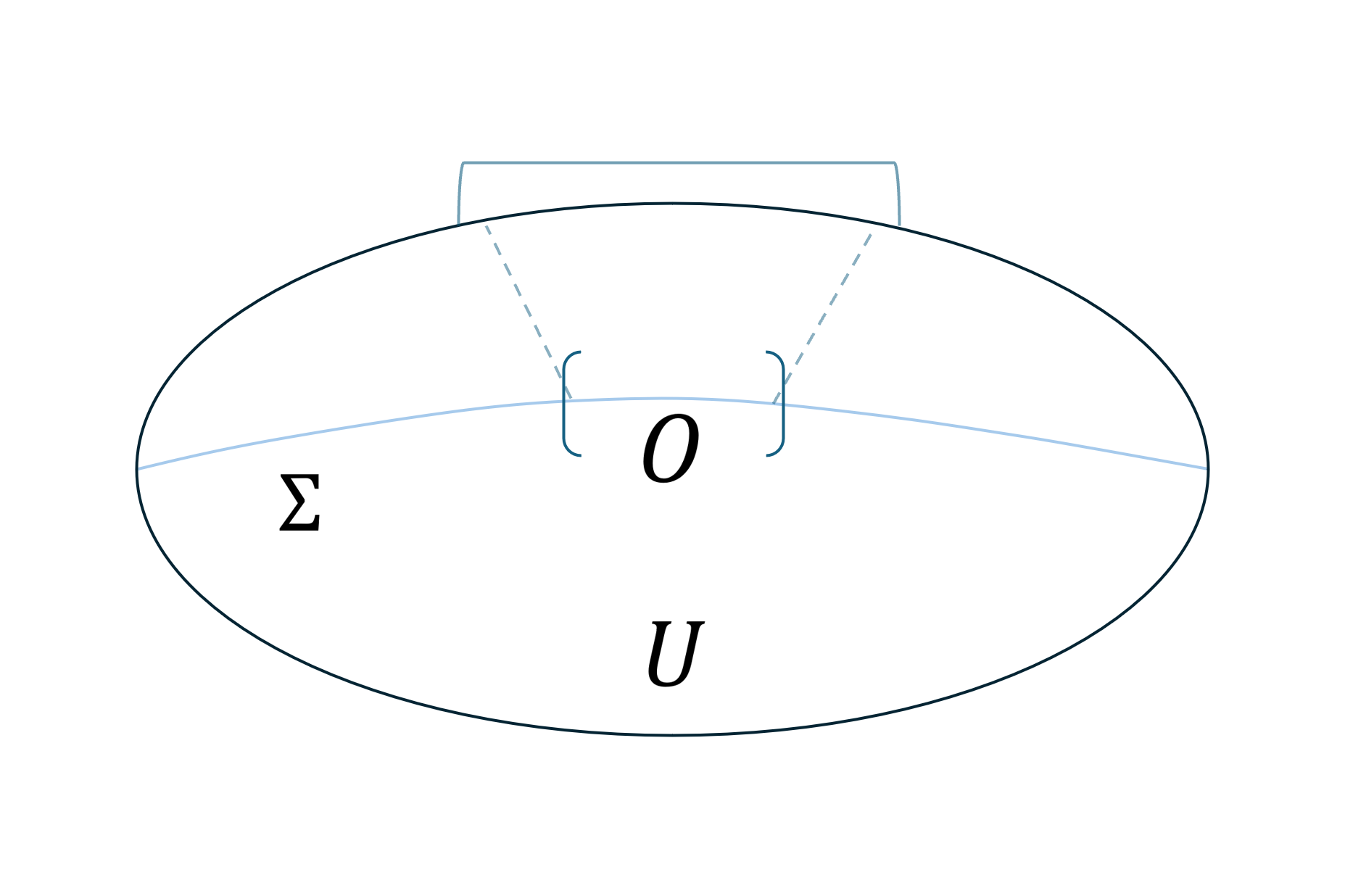}
    \caption{Kirklin's prescription for resolving ambiguities for singular flows. $O$ is taken to generate a flow that is supported within its future light cone (indicated by the dotted lines), and the insertion into $\omega$ is subsequently integrated over $\partial\Sigma$. Note that this is equivalent to integrating over the part of $\partial U$ to the future of $\Sigma$, which is equivalent to simply integrating over $\Sigma$.}
    \label{fig:unambiguous}
\end{figure}
Technically the enlargement of $\Sigma$ was not strictly necessary in this argument because the flow generated by $R$ went to zero near $\partial_\Sigma$. However, one could consider the case where $R$
 is also a singular generator such that the Poisson bracket with $O$ is well-defined. As noted in section \ref{subsec:prelim}, this requires that the distributional singularities of the flows be compatible in a manner encoded by the wavefront set. Supposing that this is the case, the enlargement to $\Sigma+\Delta$ computes the Poisson bracket unambiguously in terms of the local symplectic form. This is the same mechanism behind Kirklin's description of the subregion bracket, illustrated in figure \ref{fig:unambiguous}. In that formalism one integrates the symplectic current on the boundary of an open set $U$ enclosing $\Sigma$, but modifies the flow generated by an observable supported on $\Sigma$ to be zero to the past of $\mathfrak{C}$ and equal to the original flow to its future (note that this is no longer an on-shell vector tangent to $\mathcal{P})$. One can see by inspection that the insertion of two such flows into $\omega$ yields a current which vanishes outside of a part of $\partial U$ to the future of $\Sigma$. The on-shell conservation of $\omega$ implies that this is equal to the insertion of the original on-shell flows into $\omega$, integrated over $\mathfrak{C}$. In both cases the ambiguity due to $\beta$ manifestly vanishes if one considers a region slightly larger than the support of the flows, and applies Stokes' theorem to integrate $d\beta$ out to zero.
 \par
 The next three sections will primarily focus on studying regular observables and their Poisson brackets in gauge and gravitational theories. In \ref{subsec:surface}, we will return to the singular generators and find that they play an important role in the study of surface charges.

\section{Gauge Theories}
\subsection{Characterizing Gauge Symmetry}\label{subsec:gauge_character}
To extend the analysis of the previous section to cover gauge theories, we will need to deliver the promised extension and proof of claim \ref{no_gauge}. This will require a deeper characterization of the local structure of gauge symmetries.  Noether \cite{Noether} originally defined as infinitesimal transformations of the fields that are locally constructed out of the fields, their derivatives, and some free functions $\lambda^j (x)$, which transform the Lagrangian by a likewise locally constructed exact form. In covariant phase space language, the symmetry $X_\lambda$ satisfies $\delta L(X_\lambda)=d\alpha_\lambda$. Noether's second theorem states that there is a one-to-one correspondence between the \textit{independent} parameters $\lambda^j$ to the nontrivial relations among the equations of motion that hold off-shell, $\mathcal{N}^{ja} E_a=0$, where $\mathcal{N}^{j}$ are differential operators \cite{antifield_lectures}. A derivation in modern language can be found in \cite{Noether_Ward}. The Noether identities for any theory can be determined in closed form by an algebraic computation; for instance, if the equations of motion are written in terms of curvature forms associated with vector bundles, all Noether identities follow from the Bianchi identities. For Yang-Mills theory and GR, this implies that the standard gauge symmetries (internal Lie rotations and diffeomorphisms respectively) are the \textit{only} Noetherian gauge transformations.
\par
These flows $X_\lambda$, and indeed any symmetries of the Lagrangian, are always tangent to the prephase space \cite{harlow}. While they are not defined with respect to the symplectic form, it is easy to see that $X_\lambda$ do constitute null directions of the symplectic form, since the parameters $\lambda$ can be chosen arbitrarily in the neighbourhood of the Cauchy slice $\mathfrak{C}$ and switched off to the  future to make the flow vanish near another Cauchy slice. By conservation of the total symplectic form, this guarantees that $I_{X_\lambda} \Omega=0$. Therefore Noether's definition is compatible with the symplectic notion of a gauge transformation. We would like to say something in the opposite direction, connecting the symplectic gauge transformations to the Noetherian symmetries. It is helpful to define the following class of flows:
\begin{defn}\label{def:gauge}
    At a particular point in $\mathcal{P}$, a flow $X$ tangent to prephase space is \textit{locally gauge} in an open set $U$ if
    \begin{equation}\label{local_gauge_def}
        -I_X\omega=C+dq
    \end{equation}
  within $U$, where C is a spacetime $D-1$ form, field-space 1-form that is a linear combination of linearized equations of motion $\delta E_a$ and $q$ is some locally constructed spacetime $D-2$ form, field space 1-form.
\end{defn}
This condition means that $-I_X\omega$ integrates out to a surface term on any partial Cauchy surface localized within $U$. In particular, a flow that is locally gauge everywhere and has nontrivial $\int_{\partial\mathfrak{C}} q$ is a \textit{surface symmetry} of the theory, which is a generalization of the concept of asymptotic symmetry that also applies to finite boundaries. The ADM Hamiltonian for general relativity \cite{surface_integrals} or the smeared electric flux for electromagnetism are good examples of surface symmetry generators, but many more have been found throughout the long history of the field \cite{brown_henneaux, harlow,subsystem, causal_diamond, strominger}.  
\par
Noetherian and symplectic gauge transformations are now connected by the following pair of results:
\begin{prop}\label{prop:local_gauge}
Given a solution in $\mathcal{P}$ and a partial Cauchy surface $\Sigma$, any flow $X$ such that 
\begin{equation}\label{gauge_annihilation}
    \Omega_\Sigma(X,\eta)=0
\end{equation}
for all $\eta\in$ $\Ss$ must be locally gauge within $D(\Sigma)$.
\end{prop}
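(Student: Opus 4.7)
The strategy is to adapt the Iyer--Wald-type argument relating symplectic null directions to Noether-style charges, but localized to $D(\Sigma)$. First, since any $\eta\in\Ss$ vanishes to all orders on $\overline{\Sigma}$, the hypothesis is equivalent to $\Omega(X,\eta)=0$ on the full Cauchy slice $\mathfrak{C}$. Moreover, from the identity $d\omega=\delta\phi^a\delta E_a$ of equation \eqref{Peierls} and the fact that both $X$ and $\eta$ are on-shell (so $\delta E_a(X)=\delta E_a(\eta)=0$), the integral is invariant under deformations of $\Sigma$ within $D(\Sigma)$ that fix $\partial\Sigma$; the annihilation consequently persists on every partial Cauchy slice $\Sigma'\subset D(\Sigma)$ homologous to $\Sigma$.

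Second, I would localize the condition. For an arbitrary interior point $p\in D(\Sigma)$, choose a partial Cauchy slice $\Sigma_p$ through $p$ and test against flows $\eta$ whose linearized initial data on $\Sigma_p$ is compactly supported in a small neighborhood of $p$. By Assumption~\ref{causal_1} such data propagates to a solution on $D(\Sigma_p)$, and by Assumption~\ref{causal_2} it extends smoothly by zero outside to define a legitimate member of $\Ss$. Expanding $\omega(X,\eta)\big|_{\Sigma_p}$ as a local polynomial in $\eta$ and its derivatives, integrating by parts tangentially --- boundary terms vanishing by compact support --- and invoking the arbitrariness of the free initial data modulo the local linearized constraints, I would conclude that at every interior point the pullback $i^*(-I_X\omega)$ decomposes as a combination of the linearized constraints plus a $d_{\Sigma_p}$-exact piece.

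Finally, I would lift this slicewise statement to the ambient identity $-I_X\omega=C+dq$ throughout $D(\Sigma)$. The algebraic hook is the on-shell identity
\begin{equation}
d(-I_X\omega)=-I_X(d\omega)=-X^a\delta E_a+\delta\phi^a\delta E_a(X)=-X^a\delta E_a,
\end{equation}
which shows that $-I_X\omega$ is closed modulo linearized equations of motion. Subtracting a locally constructed bulk piece $C$ that absorbs the constraint content identified slicewise leaves an identically closed, locally constructed $(D-1)$-form to which the algebraic Poincar\'e lemma (Proposition~\ref{prop:apl}) applies, producing the desired $q$.

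The hard part will be precisely this last gluing step: the slicewise analysis only controls $-I_X\omega$ on the constraint surface, whereas Proposition~\ref{prop:apl} demands identical closure of the residual for \emph{arbitrary} field configurations, and one must check that the locally constructed $C$ and $q$ extracted slice by slice fit together into consistent spacetime forms independent of the slicing. I expect that this rigidity is exactly what the structural analysis of gauge transformations promised for the next section and Appendix~\ref{app:gauge} is designed to supply.
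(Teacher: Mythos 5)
Your outline stalls exactly where the content of the proposition lies, and you say so yourself: the final ``gluing'' step, promoting the slicewise decomposition of $i^*(-I_X\omega)$ into constraints plus a $d_{\Sigma_p}$-exact piece to the covariant identity $-I_X\omega=C+dq$ with $d(-I_X\omega-C)$ vanishing \emph{identically off-shell}, is not an afterthought to be supplied by later machinery --- it is the theorem. Note that identical closure of the residual is equivalent to $X^a\delta E_a$ being identically exact, i.e.\ to $X$ obeying a Noether identity of the linearized theory, which is precisely what must be established. In addition, your intermediate slicewise claim is itself of the same difficulty: from $\int_{\Sigma_p}\omega(X,\eta)=0$ for all $\eta$ with compactly supported admissible data you infer that the integrand lies in the ideal generated by the linearized constraints plus a tangentially exact term \emph{with locally constructed coefficients}; that ``annihilator of the constraint set $=$ constraints $+$ exact'' step is a nontrivial duality/regularity statement and is not argued. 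So the proposal identifies plausible waypoints but proves neither the local step nor the assembly into spacetime forms independent of the slicing.

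The paper's proof avoids the canonical decomposition altogether and gets the off-shell statement needed for Proposition \ref{prop:apl} directly. It sandwiches the region between two homologous partial Cauchy surfaces $\Sigma_1,\Sigma_2$ and tests $X$ against \emph{off-shell} interpolating flows $\tilde\eta$ (zero before $\Sigma_1$, equal to an on-shell $\eta$ after $\Sigma_2$), so that the hypothesis becomes the vanishing of the distribution $X[J]=\int X^aJ_a$ on all smooth sources $J_a=\delta E_a(\tilde\eta)$ supported between the slices, vanishing toward $\partial\Sigma$, and satisfying the Noether identities; Assumption \ref{causal_1} is what guarantees that \emph{every} such $J$ is realized by some choice of $\eta$ and interpolation. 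Linearity then lets one write $X[J]=\tilde X[\mathcal{N}]+\int dY[J]$ for \emph{arbitrary} test sources, with $\mathcal{N}$ the failure of the Noether identities, and since $J(\gamma)=\delta E_a(\gamma)$ satisfies the identities identically for any off-shell $\gamma$, one obtains $d(I_X\omega - Y[\delta E_a])=0$ off-shell; the algebraic Poincar\'e lemma then yields $q$ with $C=-Y$, and pushing $\Sigma_1,\Sigma_2$ to the edges of $D(\Sigma)$ covers the whole domain of dependence. If you want to salvage your route, you would have to supply an argument of comparable strength at your step 2 and then prove slicing-independence of the extracted $C$ and $q$; as written, the proposal leaves the decisive implication unproved.
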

\begin{prop}\label{prop:local_gauge_noether}
    The locally gauge flows at a point of $\mathcal{P}$ are Noetherian gauge symmetries of the \textit{linearized} theory about that point. This applies within any open region.
\end{prop}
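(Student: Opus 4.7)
The plan is to translate the locally-gauge condition into a statement about null directions of the symplectic form of the \emph{linearized} theory at the chosen background, and then invoke Noether's second theorem applied to the linearized Lagrangian $\mathcal{L}^{(2)}$ obtained by expanding $\mathcal{L}$ to second order about the chosen point $\phi_0 \in \mathcal{P}$. The key structural fact is that $\mathcal{L}^{(2)}$ has Euler--Lagrange form exactly $\delta E_a$, and its symplectic current is the restriction of $\omega$ to the tangent space $T_{\phi_0}\mathcal{P}$; Noether identities of the full nonlinear theory linearize to Noether identities of $\delta E_a$, and these are in bijection with Noetherian gauge symmetries of $\mathcal{L}^{(2)}$.

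First, I would show that the defining equation $-I_X\omega = C + dq$, with $C$ a linear combination of $\delta E_a$, implies that $X$ is a null direction of the linearized symplectic form when paired against every on-shell perturbation $\eta$ that is compactly supported in $U$. Pairing on any partial Cauchy surface $\Sigma \subset U$,
\begin{equation}
\Omega_\Sigma(X,\eta) = \int_\Sigma I_\eta C + \int_{\partial\Sigma} I_\eta q;
\end{equation}
the first term vanishes because $\eta$ satisfies the linearized equations of motion and $C$ is a linear combination of $\delta E_a$, while the second vanishes because $\eta$ has been taken to be supported away from $\partial\Sigma$. Thus $X$ symplectically annihilates the entire interior on-shell subspace inside $U$.

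Second, I would invoke the standard covariant phase space identification, applied now to $\mathcal{L}^{(2)}$, of null directions of the linearized presymplectic form (restricted to compactly supported on-shell perturbations) with Noetherian gauge directions of the linearized Lagrangian. The linearized version of Noether's second theorem then furnishes a bijection between such directions and differential identities $\hat{\mathcal{N}}^{ja}\delta E_a \equiv 0$, each of which exponentiates to a symmetry carrying a free functional parameter $\lambda^j(x)$. The locally constructed data $c^a$ and $q$ in the defining equation reassemble into precisely such an identity: the $c^a$ (up to formal adjoint by the operators $\hat{\mathcal{N}}$) encode the gauge parameter, and $q$ is the associated Noether potential whose exterior derivative is the corresponding Noether current, which vanishes on-shell modulo $C$.

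The main obstacle is the second step --- the passage from ``$X$ symplectically annihilates interior on-shell flows'' to ``$X$ is a Noetherian gauge symmetry of $\mathcal{L}^{(2)}$''. In finite dimensions this identification is essentially tautological, but in the Frechet setting it requires care: one must ensure that the subspace of interior linearized flows is rich enough that symplectic annihilation actually forces $X$ to take the Noetherian form, and that the locally constructed character of $c^a$, $q$, and $X$ is preserved when extracting the free parameter. The locality of the claim is what makes this tractable, since bump-function constructions within $U$ can be used to produce families of interior on-shell flows rich enough to pin down $X$ and, conversely, to generate a family of locally gauge flows in which the free functional parameter is manifest, sidestepping the global obstructions of the nonlinear theory.
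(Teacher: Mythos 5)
There is a genuine gap, and it sits exactly where you flagged it. Your first step deliberately weakens the hypothesis: from the local, off-shell identity $-I_X\omega = C+dq$ you pass to the integrated statement ``$X$ symplectically annihilates compactly supported on-shell flows in $U$.'' Your second step then needs the converse passage, from symplectic annihilation back to a Noetherian gauge direction of $\mathcal{L}^{(2)}$, which you call ``the standard covariant phase space identification.'' In this paper that implication is not standard and not tautological: it is precisely the content of Proposition \ref{prop:local_gauge} (together with the present proposition), and its proof in Appendix \ref{app:gauge} requires Assumptions \ref{causal_1} and \ref{causal_2}, the partial-Cauchy character of $\Sigma$, and a nontrivial argument in which the annihilation condition is converted into a statement about distributions acting on arbitrary sources satisfying the Noether identities. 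The paper explicitly stresses (see the discussion around figure \ref{fig:causal_example}) that without the causality input the implication fails, so gesturing at bump-function constructions does not close this step. As written, your argument either assumes the companion proposition you have not proved, or is circular.

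The detour is also unnecessary, because the locally gauge condition is strictly stronger than its integrated consequence and yields the result by a direct computation, which is the paper's route: apply the spacetime exterior derivative to $-I_X\omega = C+dq$, use $d\omega=\delta\phi^a\,\delta E_a$ together with the on-shell condition $\delta E_a(X)=0$, and conclude that $X^a\delta E_a$ is \emph{identically} exact for off-shell $\delta\phi$. That is exactly a Noether identity weighted by the components $X^a$, so by Noether's second theorem $X$ is a Noetherian gauge symmetry of the linearized theory. (The paper also proves the converse --- every $X_\lambda$ is locally gauge --- by applying the Algebraic Poincar\'e Lemma, Proposition \ref{prop:apl}, to $d(I_{X_\lambda}\omega-S')=0$; your proposal only addresses the forward direction, which matches the statement as quoted, but note that the off-shell, locally-constructed character of the data is what makes the Lemma applicable, not an appeal to symplectic nondegeneracy.) If you want to salvage your architecture, you would have to first prove Proposition \ref{prop:local_gauge} independently, at which point the present proposition is better obtained by the one-line identity above rather than through the integrated annihilation statement.
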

The proofs of these statements are fairly technical, and are relegated to Appendix \ref{app:gauge}. For now, we will focus on their application.  Since a symplectic gauge symmetry satisfies equation \eqref{gauge_annihilation} for the entire Cauchy slice $\mathfrak{C}$, it must be locally gauge everywhere in $\mathcal{M}$. There are in general flows that are locally gauge everywhere but are \textit{not} null directions of $\Omega$; these are the asymptotic symmetries of the theory. Proposition \ref{prop:local_gauge_noether} then tells us that all locally gauge flows can be determined by applying Noether's analysis to the linearized equations of motion. This amounts to finding the Noether identities among these equations, which is again a purely algebraic exercise.
\par
There are two subtleties in Propositions \ref{prop:local_gauge} and \ref{prop:local_gauge_noether} that require care. First, consider again a theory with no gauge symmetry and suppose that we consider a surface $\Sigma'$ that is homologous to $\Sigma$ but which extends outside of $D(\Sigma)$, and is therefore not strictly spacelike (illustrated in figure \ref{fig:causal_example}). Take a smooth flow $Y\in$ $\Ssb$, which vanishes in $D(\Sigma)$ but not on $\overline{\Sigma}$. This flow will propagate within the light cones of $\overline{\Sigma}$, which intersect $\Sigma'$. Therefore $Y$ will in general not vanish on $\Sigma'$. However, since $Y$ vanishes on $\Sigma$, $I_Y\Omega_\Sigma=0$, and by the conservation of $\omega$,  $I_Y\Omega_{\Sigma'}=0$ as well. This demonstrates that Proposition \ref{prop:local_gauge} must fail in the absence of certain causality properties for the subsystem. The proof presented in Appendix \ref{app:gauge} makes essential use of the causality assumptions \ref{causal_1} and \ref{causal_2} as well as the character of $\Sigma$ as a partial Cauchy surface, to characterize the local nature of gauge symmetries that is to my knowledge essentially new.

\begin{figure}
    \centering
    \includegraphics[scale=.2]{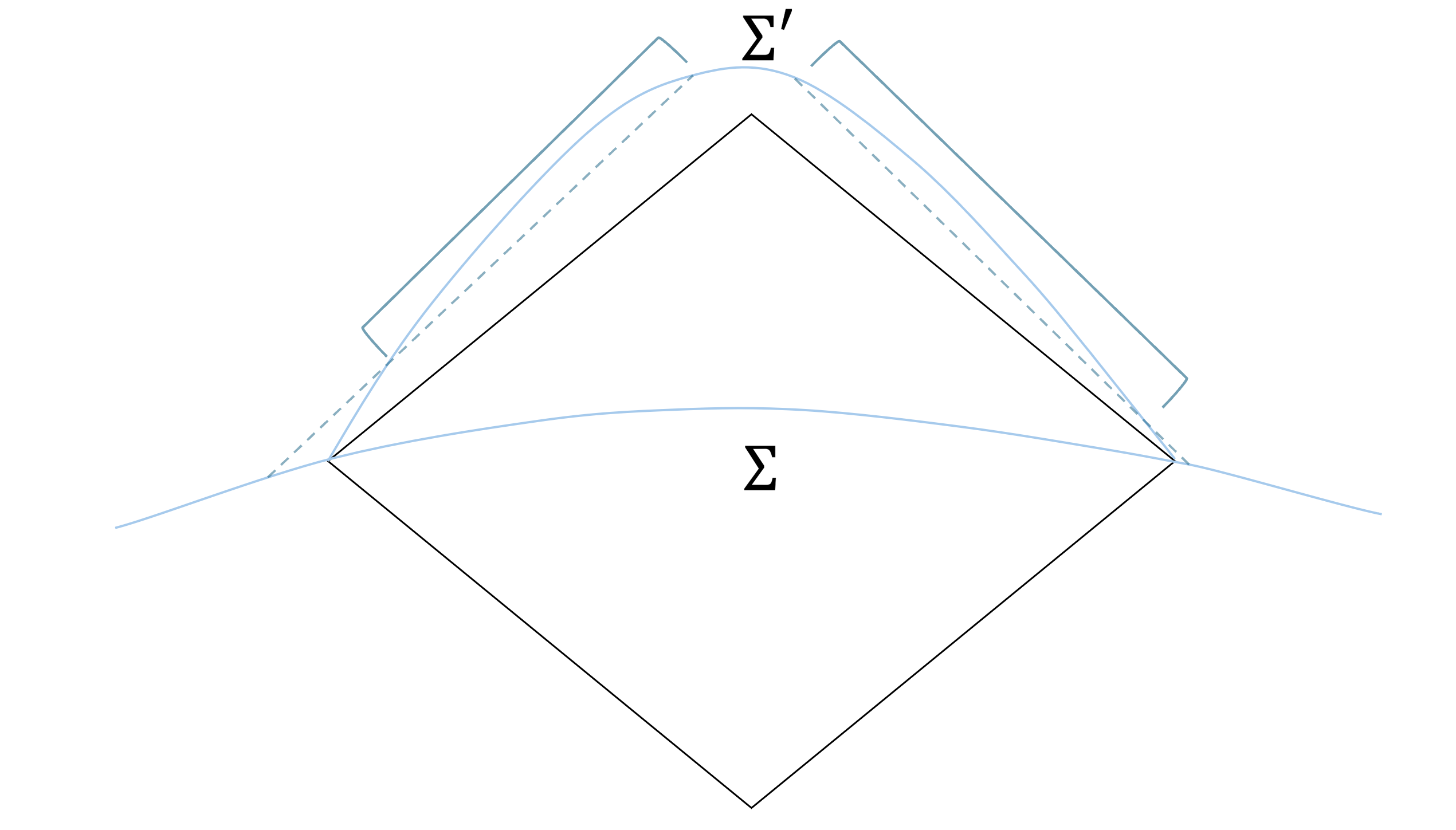}
    \caption{A flow in $\Ssb$ can propagate to $\Sigma'$. The support of this flow on $\Sigma'$ is indicated by the square brackets. }
    \label{fig:causal_example}
\end{figure}

\par
The other subtlety is that Proposition \ref{prop:local_gauge_noether} relates the local gauge transformations to the \textit{linearized} Noetherian gauge symmetries, not those of the complete interacting action. This correspondence has been previously argued in a paper by Vitagliano \cite{Vitagliano}, although this did not include Proposition \ref{prop:local_gauge}. The linearized gauge symmetries always include the nonlinear ones, but there may also be extra ones.. This happens when at a particular reference solution in $\mathcal{P}$, the linearized equations $\delta E_a$ obey additional Noether identities that do \textit{not} hold off-shell for the equations $E_a$. This means that there are fewer independent linearized equations than there are nonlinear equations, which implies that some flows $X$ satisfying $\delta E_a(X)$ cannot be extended to paths tangent to phase space. This is well-known as the phenomenon of \textit{linearization instability}, and occurs when some of the gauge symmetries with nonzero $\lambda$ leave the reference solution fixed (locally within the open set under consideration in \ref{prop:local_gauge_noether}) \cite{fischer_marsden,moncrief_1,moncrief_2}. In GR this corresponds to the existence of local Killing vectors. There is an extensive literture on this topic; a convenient recent source that studies these instabilities in general covariant language is \cite{Kirklin_instability}.
\par
At a solution which has such a local instability, a flow can be locally gauge without corresponding to a Noetherian gauge transformation of the full theory. In pure GR, for instance, a flow could be locally gauge within an open set without being induced by a diffeomorphism. This could give rise to a pathology in some subsequent arguments; however, this is only expected to happen at a sparse set. In fact, we will be interested in observables that generate flows that are locally gauge in some region, the flows will be Noetherian symmetries at all solutions that are \textit{not} linearization unstable. Since these are dense in the space of solutions, note that such a flow will also be a Noetherian symmetry at any limiting solution, given by the limit of the gauge parameters $\lambda$. Therefore we can safely treat the locally gauge flows as Noetherian for the full nonlinear theory, and define them in terms of parameters $\lambda^a$ that are defined without respect to the field configuration.
%AdS Example?
\subsection{Subsystems in Gauge Theory}\label{subsec:gauge_subsystem}
The point where the reasoning of section \ref{subsec:subsystem} fails for gauge theories is Proposition \ref{complement}. The symplectic complement of $\Ss$ is larger than $\Ssb$; it also includes all of the gauge transformations. We can define augmented distributions $\Ssp= \Ss\oplus \mathcal{G}$ and $\Ssbp=\Ssb\oplus\mathcal{G}$, where $\mathcal{G}$ is the distribution on $\mathcal{P}$ consisting of symplectic gauge transformations. 
\begin{prop}\label{complement_gauge}
    $\Ssp$ and $\Ssbp$ are symplectic complements of each other. 
\end{prop}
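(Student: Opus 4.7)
The plan is to mirror the argument for Proposition \ref{complement}, with the symplectic complement inflated by $\mathcal{G}$ on each side. The forward direction is essentially automatic: given $\chi = \chi_\Sigma + \chi_G \in \Ssp$ and $\eta = \eta_{\overline{\Sigma}} + \eta_G \in \Ssbp$, expanding $\Omega(\chi,\eta)$ produces four terms. The term $\Omega(\chi_\Sigma,\eta_{\overline{\Sigma}})$ vanishes by Proposition \ref{complement} (each factor is supported on the complementary piece of $\mathfrak{C}$), and the remaining three terms each contain at least one factor in $\mathcal{G}$ and therefore vanish because symplectic gauge transformations are null directions of $\Omega$.

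The content lies in the reverse direction. Suppose $\chi$ is tangent to $\mathcal{P}$ and satisfies $\Omega(\chi,\eta) = 0$ for every $\eta \in \Ssp$. Restricting attention to $\eta \in \Ss \subset \Ssp$ and noting that such $\eta$ vanish on $\overline{\Sigma}$, I obtain $\Omega_\Sigma(\chi,\eta) = 0$ for all $\eta \in \Ss$. By Proposition \ref{prop:local_gauge}, $\chi$ is then locally gauge within $D(\Sigma)$, and by Proposition \ref{prop:local_gauge_noether}, together with the discussion at the end of section \ref{subsec:gauge_character} that treats the locally gauge flows as Noetherian symmetries of the full theory away from linearization-unstable solutions, there exists a gauge parameter $\lambda$ defined on $D(\Sigma)$ such that $\chi|_{D(\Sigma)} = X_\lambda|_{D(\Sigma)}$. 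I then extend $\lambda$ smoothly to a parameter $\tilde\lambda$ on all of $\mathcal{M}$, choosing the extension to have support compact in a bulk neighbourhood of $\mathfrak{C}$ so that the corresponding flow $X_{\tilde\lambda}$ carries no contribution at the global boundary. Because the Noether symmetry $X_{\tilde\lambda}$ is locally constructed out of $\tilde\lambda$, it coincides with $X_\lambda$ throughout $D(\Sigma)$, and by the argument preceding Definition \ref{def:gauge}, its compact support in the Cauchy direction guarantees $I_{X_{\tilde\lambda}}\Omega = 0$, placing it in $\mathcal{G}$. The difference $\chi' = \chi - X_{\tilde\lambda}$ is therefore tangent to $\mathcal{P}$, vanishes throughout $D(\Sigma)$, and in particular vanishes to all derivative orders on $\Sigma$, so $\chi' \in \Ssb$. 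This exhibits $\chi = \chi' + X_{\tilde\lambda} \in \Ssb \oplus \mathcal{G} = \Ssbp$. The complementary statement, that a flow annihilating all of $\Ssbp$ lies in $\Ssp$, follows by interchanging the roles of $\Sigma$ and $\overline{\Sigma}$.

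The main obstacle is the extension step. The parameter $\lambda$ is defined only on $D(\Sigma)$, and one must choose an extension $\tilde\lambda$ for which $X_{\tilde\lambda}$ is a genuine element of $\mathcal{G}$ rather than a surface symmetry; this is handled by insisting that the extension be cut off to compact support in the bulk directions, so no contribution to $I_{X_{\tilde\lambda}}\Omega$ survives at any asymptotic or finite boundary. The other subtlety is the linearization-instability loophole in Proposition \ref{prop:local_gauge_noether}: at special solutions admitting local Killing vectors (or their gauge-theoretic analogues), a locally gauge flow need not integrate to a full Noetherian transformation with field-independent parameter. As discussed at the end of section \ref{subsec:gauge_character}, these solutions are sparse, and the identification of $\chi|_{D(\Sigma)}$ with an $X_\lambda$ can be obtained at such points by taking limits of the parameter $\lambda$ from nearby stable solutions, so the argument survives.
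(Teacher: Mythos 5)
Your overall strategy is the same as the paper's: the forward direction by expanding into four terms, and the reverse direction by using Proposition \ref{prop:local_gauge} to conclude that a flow annihilating $\Ss$ is locally gauge in $D(\Sigma)$, describing it there by gauge parameters, extending those parameters over the rest of $\mathcal{M}$, and subtracting the resulting flow to land in $\Ssb$. The gap is in the extension step. You assert that $\tilde\lambda$ can be chosen with support compactly contained in the bulk, so that $X_{\tilde\lambda}$ carries no boundary contribution and is automatically in $\mathcal{G}$. But $\tilde\lambda$ is forced to agree with $\lambda$ on $D(\Sigma)$ (at least in a neighbourhood of $\Sigma$, or else $\chi-X_{\tilde\lambda}$ would not vanish to all orders on $\Sigma$ and would fail to lie in $\Ssb$), and $\lambda$ there is dictated by $\chi$. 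Whenever $D(\Sigma)$ meets the global boundary --- which is the generic situation in the paper's intended applications, e.g.\ causal and entanglement wedges anchored on the AdS boundary, and which in any case cannot be excluded for both $\Sigma$ and $\overline{\Sigma}$ simultaneously when you invoke \emph{mutatis mutandis} --- you cannot cut the parameters off near $\partial M\cap\Sigma$. The extended flow can then be a genuine surface symmetry, with a nonvanishing charge $\int q$ at the corner $\partial M\cap\Sigma$, hence not a null direction of $\Omega$ and not an element of $\mathcal{G}$; your decomposition $\chi=\chi'+X_{\tilde\lambda}$ then does not exhibit membership in $\Ssbp=\Ssb\oplus\mathcal{G}$.

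The paper closes exactly this loophole with an extra step you are missing: choose the extension to fall off rapidly away from $\partial\Sigma$ inside $D(\overline{\Sigma})$, so that any surviving charge $q_{\zeta_2}$ is localized to $\partial M\cap\Sigma$, and then use the hypothesis a second time. If that charge defined a nontrivial linearized observable, there would exist some $\eta\in\Ss$ with $\Omega(\zeta_2,\eta)\neq 0$; since the $\Ssb$ piece annihilates $\Ss$, this would contradict $\Omega(\gamma,\eta)=0$ for all $\eta\in\Ss$. Hence the extension is forced to be pure gauge, and only then does the decomposition go through. Your argument as written works only in the special case where $D(\Sigma)$ is compactly contained in the interior of $\mathcal{M}$; to be a proof of the proposition you need to add an argument of the above type (or an equivalent one) for boundary-anchored regions. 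Your treatment of the linearization-instability caveat, by contrast, matches the paper's and is fine.
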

\begin{proof}
    Consider flows in $\Ssp$ and $\Ssbp$ represented by $\eta+\zeta_1$ and $\chi+\zeta_2$ respectively, where the $\eta$ belongs to $\Ss$, $\chi$ belongs to $\Ssb$, and the $\zeta$'s are pure gauge. The symplectic form contracted with these flows can be written as:
    \begin{equation}\label{}
\Omega(\eta+\zeta_1,\chi+\zeta_2)=\Omega(\eta_1,\chi)=\Omega_\Sigma(\eta,0)+\Omega_{\overline{\Sigma}}(0,\chi)=0,
    \end{equation}
which verifies that the flows in $\Ssbp$ symplectically annihilate those in $\Ssp$. To see the converse, consider an arbitrary flow $\gamma$ such that $\Omega(\gamma,\eta+\zeta_1)=0$. It follows that $\Omega(\gamma,\eta_1)=0$ as well, so by Proposition \ref{prop:local_gauge}, $\gamma$ must be locally gauge within $D(\Sigma)$. The restriction of $\gamma$ to this domain of dependence can be described by some gauge parameter functions $\lambda^a$, which can be extended arbitrarily (up to global boundary conditions) throughout the rest of $\mathcal{M}$ to define a flow $\zeta_2$ that is locally gauge everywhere. We can then express $\gamma$ as $\chi+\zeta_2$, where $\chi\in\Ssb$. By choosing the parameters $\lambda^a$ to fall off sufficiently quickly away from $\partial\Sigma$ within $D(\overline{\Sigma})$, 
$q_{\zeta_2}$ can be localized to $\partial M\cap \Sigma$. If this yields a nontrivial linearized observable, there is some $\chi\in\Ss$ that alters its value, contradicting the assumption that $\Omega(\gamma,\chi)=0$. This means that $\zeta_2$ must be pure gauge, which exhibits $\gamma$ as a member of $\Ssbp$ as desired. The other direction follows from the same arguments \textit{mutatis mutandis}.
\end{proof}
\begin{cor}\label{causal_complement_gauge}
    The flows that belong to $\Ssp$ are precisely those that restrict to some gauge transformation $\eta$ within $D(\Sigma)$. Similarly those belonging to $\Ssbp$ are exactly those that restrict to a gauge transformation within $D(\overline{\Sigma})$. 
\end{cor}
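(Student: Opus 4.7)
The plan is to prove the equivalence for $\Ssp$; the statement for $\Ssbp$ follows upon swapping $\Sigma \leftrightarrow \overline{\Sigma}$. The two main tools are Assumption \ref{causal_2}, which propagates a flow vanishing on $\overline{\Sigma}$ throughout $D(\overline{\Sigma})$, and Proposition \ref{prop:local_gauge_noether}, which identifies any locally gauge flow with a Noetherian gauge transformation described by parameters $\lambda^j$.

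In the forward direction, I take $X \in \Ssp$ and decompose $X = \eta + \zeta$ with $\eta \in \Ss$ and $\zeta \in \mathcal{G}$. The vanishing of $\eta$ to all derivative orders on $\overline{\Sigma}$, combined with Assumption \ref{causal_2} applied by gluing $\eta|_{D(\Sigma)}$ to the zero flow in $D(\overline{\Sigma})$ across the entangling surface (where the two agree trivially to all derivatives), propagates $\eta = 0$ throughout $D(\overline{\Sigma})$. Restricted to the causal complement, $X$ therefore equals $\zeta$, a pure gauge transformation.

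For the reverse direction, I start from a flow $Y$ tangent to $\mathcal{P}$ whose restriction to the causal complement is a gauge transformation, parametrized by Noetherian $\lambda^j$ via Proposition \ref{prop:local_gauge_noether}. Extend $\lambda^j$ to $\tilde\lambda^j$ on all of $\mathcal{M}$, matching $\lambda^j$ where it was defined and chosen, as in the proof of Proposition \ref{complement_gauge}, with rapid enough falloff toward $\partial \mathcal{M}$ that the associated Noetherian flow $\tilde\zeta$ lies in $\mathcal{G}$ rather than merely being a surface symmetry. Then $Y - \tilde\zeta$ vanishes on the causal complement and in particular to all derivative orders on $\overline{\Sigma}$, placing it in $\Ss$; the decomposition $Y = (Y - \tilde\zeta) + \tilde\zeta$ displays $Y$ as an element of $\Ss \oplus \mathcal{G} = \Ssp$.

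The hard part is precisely the extension step above: guaranteeing that $\tilde\zeta$ is a symplectic null direction rather than a surface symmetry, i.e.\ that the corner contribution $\int_{\partial\mathcal{M} \cap \Sigma} q_{\tilde\zeta}$ vanishes. This is the same subtlety already addressed inside the proof of Proposition \ref{complement_gauge}, and I would handle it identically: by arranging $\tilde\lambda^j$ to decay sufficiently fast between the region where $\lambda^j$ is prescribed and $\partial \mathcal{M}$, one kills the potential boundary contribution, provided the theory's global boundary conditions allow this freedom. If they do not, then the given $Y$ already encodes nontrivial asymptotic data that is compatible with gauge, and this data can simply be absorbed into the definition of $\tilde\zeta$ without disturbing the decomposition.
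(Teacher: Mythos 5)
Your reading of the regions --- that $\Ssp$ pairs with gauge behaviour on the causal complement $D(\overline{\Sigma})$ --- is the one consistent with how Corollary \ref{causal_complement_gauge} is later invoked in section \ref{subsec:gauge_fix}, so the swap in the corollary's wording is not held against you. The genuine gap is in your forward direction: Assumption \ref{causal_2} cannot deliver $\eta=0$ throughout $D(\overline{\Sigma})$. That assumption is an \emph{existence} statement --- it supplies some solution on the full domain of dependence gluing $\eta|_{D(\Sigma)}$ to the zero flow --- and the paper deliberately refrains from assuming uniqueness of evolution precisely because gauge redundancy destroys it. Indeed the intermediate claim you want is false: a gauge transformation whose parameters are compactly supported in the interior of $D(\overline{\Sigma})$, away from the slice, vanishes to all derivative orders on $\overline{\Sigma}$ (hence lies in $\Ss$) yet is nonzero inside $D(\overline{\Sigma})$; this is exactly why the corollary asserts only that the restriction is a gauge transformation rather than zero. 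The correct step is Proposition \ref{prop:local_gauge} applied with the roles of the regions exchanged: since $\eta$ vanishes to all orders on $\overline{\Sigma}$, $\Omega_{\overline{\Sigma}}(\eta,\chi)=0$ for every $\chi\in\Ssb$, so $\eta$ is locally gauge within $D(\overline{\Sigma})$, and together with $\zeta$ (and Proposition \ref{prop:local_gauge_noether} to express it through parameters $\lambda^j$) the restriction of $X$ to the complement is a gauge transformation. The conclusion survives, but not by the route you gave.

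Your reverse direction is essentially the extension argument already used in the proof of Proposition \ref{complement_gauge}, which is the paper's implicit justification, so the structure is right; the weak point is the final fallback. If the extension $\tilde\lambda^j$ unavoidably carries a nontrivial charge at $\partial\mathcal{M}$, then $\tilde\zeta$ is an asymptotic symmetry rather than an element of $\mathcal{G}$, and "absorbing the asymptotic data into $\tilde\zeta$" does not place $Y$ in $\Ss\oplus\mathcal{G}$; in Proposition \ref{complement_gauge} the triviality of that charge was \emph{derived} from the symplectic hypothesis, which you do not have here. Under the natural reading in which "restricts to a gauge transformation" means agreement on $D(\overline{\Sigma})$ with the restriction of an element of $\mathcal{G}$, no extension is needed at all: subtract that global gauge transformation, observe the difference vanishes on $\overline{\Sigma}$, and conclude membership in $\Ssp$ directly.
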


We can also establish that:
\begin{prop}\label{inv_gauge}
    $\Ssp$, and likewise $\Ssbp$, are involutive distributions.
\end{prop}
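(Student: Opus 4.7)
The plan is to decompose via $\Ssp = \Ss + \mathcal{G}$, which gives
\begin{equation*}
[\Ssp,\Ssp] \subseteq [\Ss,\Ss] + [\Ss,\mathcal{G}] + [\mathcal{G},\mathcal{G}].
\end{equation*}
Two of these brackets are already controlled: $[\Ss,\Ss] \subseteq \Ss$ is Proposition \ref{prop:int}, and $[\mathcal{G},\mathcal{G}] \subseteq \mathcal{G}$ is the observation quoted in Section \ref{subsec:prelim} that the null directions of a closed $\Omega$ form an involutive distribution. All the work is therefore concentrated in the mixed bracket, where I would aim for the stronger inclusion $[\Ss,\mathcal{G}] \subseteq \Ss$.

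To establish this, I would take $\eta \in \Ss$ and $\zeta \in \mathcal{G}$ and invoke Proposition \ref{prop:local_gauge_noether} together with the prescription at the end of Section \ref{subsec:gauge_character} to represent $\zeta$ as a Noetherian gauge transformation $X_\lambda$ with field-independent parameters $\lambda^j$. This is the essential input, because it makes $X_\lambda^a(x)$ a local functional of $\phi$ at $x$, so its field-space derivatives in $\phi$ are supported at $y=x$. Expanding the field-space Lie bracket in components,
\begin{equation*}
[\eta,\zeta]^a(x) = \eta^b \partial_b X_\lambda^a(x) - X_\lambda^b \partial_b \eta^a(x),
\end{equation*}
and evaluating at $x \in \overline{\Sigma}$: in the first term, after the supported functional derivative collapses, one gets a polynomial in $\eta^b(x)$ and its spacetime derivatives, all of which vanish for $x\in \overline{\Sigma}$ by the definition of $\Ss$. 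The second term vanishes because $\eta^a(x) \equiv 0$ as a function on $\mathcal{P}$ when $x \in \overline{\Sigma}$, and therefore so do all of its functional derivatives in $\phi$. The same argument applied to spacetime derivatives of both terms yields vanishing to all orders on $\overline{\Sigma}$, so $[\eta,\zeta]\in \Ss$. The statement for $\Ssbp$ is obtained by exchanging $\Sigma \leftrightarrow \overline{\Sigma}$.

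The main obstacle is the bookkeeping around the field-space derivatives of $\zeta$. If one insisted on working with field-dependent gauge parameters $\lambda[\phi]$, then a nonlocal contribution proportional to $\delta_\eta \lambda(x)$ would survive on $\overline{\Sigma}$ and one would only be able to conclude the weaker $[\eta,\zeta] \in \Ssp$; the Section \ref{subsec:gauge_character} prescription to choose $\lambda$ independent of $\phi$ is therefore not a cosmetic convenience but essential to the argument. It is worth flagging separately that closedness of $\Omega$ alone is not enough --- symplectic complements of involutive distributions are not automatically involutive --- so the conclusion depends genuinely on the local Noetherian structure of $\mathcal{G}$ supplied by Proposition \ref{prop:local_gauge_noether}, rather than on any general symplectic identity.
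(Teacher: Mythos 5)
The gap is in the mixed bracket $[\Ss,\mathcal{G}]$. Involutivity must be checked for arbitrary sections of $\Ssp$, and a general section of $\mathcal{G}$ is a gauge transformation whose parameters $\lambda^j[\phi]$ vary over $\mathcal{P}$; you are not free to replace it by an $X_\lambda$ with field-independent $\lambda$. The remark at the end of Section \ref{subsec:gauge_character} does not license this: it says that at each solution the locally gauge flows are Noetherian transformations parametrized by ordinary functions (rather than being intrinsically field-dependent objects), not that sections of the gauge distribution may be taken with configuration-independent parameters --- indeed the paper's own proof explicitly keeps $\lambda^j$, the components $\eta^a$, and the local map $Z$ field-dependent. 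As written, your argument establishes the bracket condition only for the special sub-family of sections with $\phi$-independent parameters. The finite-dimensional trick of verifying involutivity on a spanning family does not close this hole, because a field-dependent $\lambda[\phi]$ is not a finite $C^{\infty}(\mathcal{P})$-linear combination of fixed parameter functions in this Frechet setting; and even at the formal level, the derivative-of-coefficient terms that such a rewriting generates are precisely the $\delta_\eta\lambda$ contributions you chose to discard.

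The deeper issue is that discarding them is unnecessary: the target is $[\Ssp,\Ssp]\subseteq\Ssp$, so the mixed bracket only needs to land in $\Ss\oplus\mathcal{G}$, not in $\Ss$. The paper keeps $\lambda^j[\phi]$ field-dependent, expands $[\eta,Z(\lambda)]$ into three terms, shows that two of them vanish on the complementary domain by locality of $Z$ and the vanishing of $\eta$ there (these are your second and third terms, and your treatment of them is fine), and identifies the remaining term --- the gauge transformation with parameter $\delta_\eta\lambda$ --- as an element of $\mathcal{G}$, up to a caveat recorded in a footnote about the fall-off conditions defining proper gauge transformations being solution-independent. So the conclusion $[\eta,\zeta]\in\Ssp$ that you dismiss as too weak is in fact both all that the proposition requires and all that is true for general sections; restoring that step repairs your proof and makes it essentially the paper's.
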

\begin{proof}
    Consider two sections $\eta_1+\zeta_1$ and $\eta_2+\zeta_2$ of $\Ssp$. Their Lie bracket can be expanded as 
    \begin{equation}\label{bracket}
        [\eta_1+\zeta_1,\eta_2+\zeta_2]=[\eta_1,\eta_2]+[\eta_1,\zeta_2]+[\zeta_1,\eta_2]+[\zeta_1,\zeta_2].
    \end{equation}
    The first term in this expansion is the bracket of two flows in $\Ss$, which is also in $\Ss$ by Proposition \ref{prop:int}. The last term is the bracket of two gauge transformations, which is also gauge. Focus on the term $[\eta_1,\zeta_2]$ and introduce a component representation $\eta_1=\eta^a\frac{\partial}{\partial\phi^a}$, as well as a condensed notation $\zeta_2=Z(\lambda^j)$ for the gauge transformation in terms of the parameters $\lambda^j$. The parameters $\eta^a$ and $\lambda^j$, as well as the map $Z$, are field-dependent. The latter is local in the fields, parameters, and their derivatives. We have
    \begin{equation}
        [\eta_1,\zeta_2]=\eta^a\frac{\partial Z}{\partial \lambda^j}\frac{\partial \lambda^j}{\partial\phi^a}+\eta^a\frac{\partial}{\partial\phi^a}Z(\lambda^j)-\delta_{Z(\lambda^j)} \eta^a \frac{\partial}{\partial \phi^a}.
    \end{equation}
    The first term differentiates $\lambda^j$ under some flow at fixed $Z$, and therefore returns a gauge transformation \footnote{Actually it is only required to return a locally gauge flow. However if the fall-off behaviour required for $\lambda^j$ to induce a true gauge transformation is soluion-independent, as it is for standard boundary conditions for GR and Yang-Mills theory, then this term is properly gauge.}. The second differentiates $Z$ with respect to $\eta$ at fixed $\lambda^j$. The resulting vector is not necessarily gauge, but the locality of $Z$ implies that its spacetime support is contained in that of $\eta^a$, which in turn implies that it vanishes on $D(\Sigma)$. The final term differentiates the flow $\eta$ itself with respect to $\zeta_2$, but since $\eta$ vanishes on $D(\Sigma)$ for all solutions, so does this term. As a result $[\eta_1,\zeta_2]$ is equal to the sum of a gauge transformation and a configuration space vector that vanishes in $D(\Sigma)$, Since the Lie bracket of two on-shell flows is also on-shell, that vector must belong to $\Ss$, and the whole term belongs to $\Ssp$. A similar argument shows that $[\zeta_1,\eta_2]\in\Ssp$, so the total Lie bracket is a section of $\Ssp$ as desired.
\end{proof}
Unlike in section \ref{subsec:subsystem}, it is not always true that $\Ssp$ and $\Ssbp$ are integrable. If they \textit{were}, we could immediately construct the phase spaces $\mathcal{P}(\Sigma)$ and $\mathcal{P}(\overline{\Sigma})$ by taking the quotient of $\mathcal{P}$ along the orbits of $\Ssp$ and $\Ssbp$ respectively. The differentiable functions on each phase space would be gauge invariant by construction, since the distributions include the gauge transformations, while also being localized in their respective regions. However for generally covariant theories, it is well known that there are \textit{no} such observables, because the action of finite diffeomorphisms can always move a localized functional into a region where it originally had no support, contradicting its presumed gauge invariance. The failure of the Frobenius theorem in Frechet spaces, which originally seemed a pure technicality, is now clearly necessary for logical consistency!
\par
Another way to look at what happens here is that the diffeomorphisms do not act as a Lie group on the fields in any spacetime subregion, including for instance $D(\Sigma)$. Although the effect of an \textit{infinitesimal} diffeomorphism is determined only by the fields in the region, integrating some parameter dependent diffeomorphism can bring field data from the exterior to the interior. Since the configuration in the interior does not constrain the exterior beyond derivatives along the common boundary, it is impossible for a functional to be simultaneously diffeomorphism invariant \textit{and} independent of the exterior. This leads us to a conjecture, which is in fact true:
\begin{prop}\label{gauge_int}
If the gauge transformations act as a Lie group on the fields (and their derivatives) on $\Sigma$, $\Ssbp$ is integrable. The same is true for $\overline{\Sigma}$, \textit{mutatis mutandis}.
\end{prop}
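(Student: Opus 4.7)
The plan is to construct the integral submanifolds of $\Ssbp$ explicitly, using the Lie--group hypothesis to furnish the smooth structure that Frobenius' theorem fails to provide in the Frechet setting. Let $\mathcal{F}_\Sigma$ denote the space of field configurations and their relevant transverse derivatives on $\Sigma$, and let $G_\Sigma$ denote the Lie group of gauge transformations acting on $\mathcal{F}_\Sigma$ as supplied by the hypothesis. Form the composite map $\pi_\Sigma : \mathcal{P} \to \mathcal{F}_\Sigma / G_\Sigma$ that first restricts a solution to $\Sigma$ and then passes to the gauge quotient. The candidate leaves of $\Ssbp$ are the fibers of $\pi_\Sigma$: two solutions are declared equivalent iff their $\Sigma$--data are related by an element of $G_\Sigma$.

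The core verification is that at every $\phi \in \mathcal{P}$ the tangent space of the leaf through $\phi$ coincides with $(\Ssbp)_\phi$. For one inclusion, any flow $X$ tangent to the leaf has vanishing pushforward under $\pi_\Sigma$, meaning its restriction to $\Sigma$ is an infinitesimal gauge transformation. The Lie--group hypothesis lets me represent this by a gauge parameter $\lambda$ on $\Sigma$, which can then be extended arbitrarily throughout $\mathcal{M}$ (using the presumed integrability of $\mathcal{G}$) to a full gauge flow $\zeta$ whose $\Sigma$--restriction is the original infinitesimal transformation. Then $X - \zeta$ vanishes to all derivatives on $\Sigma$, so by Assumption \ref{path-connect} it belongs to $\Ssb$, and $X = (X - \zeta) + \zeta \in \Ssbp$. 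The reverse inclusion is immediate: every element of $\Ssb$ pushes forward to zero on $\mathcal{F}_\Sigma$, while gauge flows push forward to gauge directions, and both are killed by the quotient.

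The main obstacle is showing that these fibers are \emph{genuine} smooth submanifolds whose tangent bundles globally equal $\Ssbp$, rather than merely set--theoretic level sets. Involutivity from Proposition \ref{inv_gauge} is not enough in infinite dimensions, so the Lie--group hypothesis is doing the essential work: it endows $\mathcal{F}_\Sigma / G_\Sigma$ with a Frechet manifold structure, at least away from strata of enhanced stabilizer (reducible connections in Yang--Mills, solutions with local Killing vectors in GR, where the linearization instabilities flagged in \ref{subsec:gauge_character} already complicate matters), and makes $\pi_\Sigma$ a submersion onto this quotient. Surjectivity of the underlying restriction $r : \mathcal{P} \to \mathcal{F}_\Sigma$ modulo local constraints is in turn supplied by Assumption \ref{causal_1}. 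The hypothesis fails precisely for bulk diffeomorphisms, which can transport field data across the entangling surface instead of acting pointwise on $\Sigma$--data, which is exactly the obstruction flagged just before the proposition as forbidding diffeomorphism--invariant observables localized to a compact region.
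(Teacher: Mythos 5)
Your construction is essentially the paper's own proof: the leaves are defined by the same two-step identification (equal data on $\Sigma$, then modding out the Lie-group gauge action on that data), and your tangent-space verification is a natural elaboration of what the paper leaves implicit. One small note: Assumption \ref{path-connect} is not what places $X-\zeta$ in $\Ssb$ (that follows from the definition of $\Ssb$ once the restrictions agree to all derivatives); the paper instead invokes it, together with connectedness of the gauge group, to ensure the leaves are path-connected.
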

\begin{proof}
We can define the foliation by identifying solutions that belong to the same leaf. This is carried out in two steps; first, identify solutions that agree on $\Sigma$. The gauge transformations have a well-defined Lie group action on this space of equivalence classes, so we can mod out the orbits to obtain an additional identification. The combined identification defines the leaves of the desired foliation of $\mathcal{P}$; by Assumption \ref{path-connect} and the connectedness of the gauge group, both steps are identified by smooth paths, so the leaves are path-connected. The arguments also hold symmetrically for $\overline{\Sigma}$.
\end{proof}
 This property applies to ordinary nongravitational gauge theories, such as Yang-Mills. Assumption \ref{path-connect} tells us that any two solutions such that, after making some finite gauge transformation, agree on $\overline{\Sigma}$ to all derivative orders, are on the same leaf of the foliation. Therefore the construction of the phase spaces $\mathcal{P}(\Sigma)$ and $\mathcal{P}(\overline{\Sigma})$ go through as before, and the functions on them are the gauge invariant observables left invariant by $\Ssbp$ and $\Ssp$ respectively.
\par
While a regular observable does not generate a fixed flow on $\mathcal{P}$ due to the gauge redundancy, it does so up to the addition of an arbitrary field-dependent infinitesimal gauge transformation (a section of $\mathcal{G}$). We can prove the following replacement for Proposition \ref{prop:Ss}:
\begin{prop}\label{prop:Ssp}
    The set of observables generating flows belonging to $\Ssp$ is precisely the set of regular observables supported on $\Sigma$. 
\end{prop}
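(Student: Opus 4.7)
The plan is to mirror the proof of Proposition \ref{prop:Ss}, with Proposition \ref{complement_gauge} replacing Proposition \ref{complement} to accommodate the gauge freedom, and Proposition \ref{gauge_int} together with Assumption \ref{path-connect} ensuring that ``supported on $\Sigma$'' is well defined in the presence of gauge symmetry. The claim has two directions, and the gauge freedom can be absorbed into the distributions $\Ssp$ and $\Ssbp$ themselves, so the argument should be essentially as clean as in the non-gauge case.

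For the forward direction, suppose a regular observable $R$ generates a flow $\chi \in \Ssp$ at each point of $\mathcal{P}$. By Proposition \ref{complement_gauge}, any section $\eta$ of $\Ssbp$ satisfies $\Omega(\eta,\chi)=0$, so $\delta R(\eta) = \Omega(\eta,\chi) = 0$. Since $\Ssbp = \Ssb \oplus \mathcal{G}$, this says that $R$ is simultaneously invariant under pure gauge transformations and under flows in $\Ssb$. Using integrability of $\Ssbp$ (Proposition \ref{inv_gauge} combined with Proposition \ref{gauge_int}) and the path-connectedness guaranteed by Assumption \ref{path-connect}, $R$ descends to a function on $\mathcal{P}(\Sigma)$; that is, $R$ is a gauge-invariant observable supported on $\Sigma$.

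For the reverse direction, let $R$ be a regular observable supported on $\Sigma$. By definition it is both gauge invariant and invariant under any $\chi \in \Ssb$, hence invariant under every section of $\Ssbp$. Let $\chi$ be any flow generated by $R$, defined by $\delta R = \Omega(\cdot,\chi)$; in the gauge-theoretic setting $\chi$ is only determined modulo a section of $\mathcal{G}$. For any $\eta \in \Ssbp$ we have $\Omega(\eta,\chi) = \delta R(\eta) = 0$, so $\chi$ symplectically annihilates all of $\Ssbp$. Proposition \ref{complement_gauge} then places $\chi$ in $\Ssp$, as required. Note that the gauge ambiguity in $\chi$ is harmless: since $\mathcal{G} \subset \Ssp$, shifting $\chi$ by a gauge transformation keeps it in $\Ssp$.

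The only real subtlety to navigate is the potential worry that the ``flow generated by $R$'' is not unique, so that the statement ``generates a flow in $\Ssp$'' might be interpreted either as ``every representative lies in $\Ssp$'' or ``some representative lies in $\Ssp$.'' The observation that $\mathcal{G}\subseteq \Ssp$ eliminates this ambiguity, since the full coset $\chi + \mathcal{G}$ lies in $\Ssp$ as soon as one representative does. Consequently no new analysis beyond Propositions \ref{complement_gauge}, \ref{inv_gauge}, and \ref{gauge_int} is needed; the main technical work has already been accomplished in characterizing symplectic complements in the presence of gauge symmetry.
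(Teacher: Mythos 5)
Your proof is correct and follows essentially the same route as the paper: both directions are handled by combining the symplectic complementarity of $\Ssp$ and $\Ssbp$ (Proposition \ref{complement_gauge}) with the identification of observables supported on $\Sigma$ as those invariant under $\Ssbp$. Your added remarks on the gauge ambiguity of the generated flow and on integrability/path-connectedness simply make explicit points the paper addresses in the surrounding discussion.
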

\begin{proof}
    The proof is essentially the same as that of the earlier version. The flow $\eta$ generated by $O$, if $O$ is supported in $\Sigma$, must symplectically annihilate any flow $\chi\in\Ssbp$ since $O$ is invariant under $\Ssbp$.  By Proposition $\ref{complement_gauge}$ $\eta\in\Ssp$. Likewise if $O$ generates a flow in $\Ssp$, $\delta O(\chi)=0$ for all $\chi\in\Ssbp$, implying that $O$ is supported on $\Sigma$. 
\end{proof}
This means that these observables generate \textit{field-dependent gauge transformations} on the causal complement. Of course, since a flow in $\Ssp$ can be divided into a gauge transformation and flow in $\Ss$, one might as well just take the flow to vanish in $D(\overline{\Sigma}$. However, this concept will come into play in an important way in the next section, where it is no longer possible to select the flow on the complement to vanish.
\par
\begin{cor}
    The set of regular observables localized in $\Sigma$ form a Poisson algebra.
\end{cor}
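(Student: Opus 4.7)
The plan is to mimic the corresponding argument from the nongravitational case (the corollary following Proposition \ref{prop:Ss}), now using the gauge-theoretic replacements established in this section: Proposition \ref{prop:Ssp} (characterizing regular observables supported on $\Sigma$ as those generating flows in $\Ssp$) and Proposition \ref{inv_gauge} (involutivity of $\Ssp$).

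First I would take two regular observables $R_1, R_2$ supported on $\Sigma$ and invoke Proposition \ref{prop:Ssp} to obtain generating flows $\eta_1, \eta_2 \in \Ssp$. Their Poisson bracket $\{R_1,R_2\}$ is a regular observable whose generating flow is the Lie bracket $[\eta_1,\eta_2]$ (up to the usual gauge ambiguity in choice of generator). By Proposition \ref{inv_gauge}, $\Ssp$ is involutive, so $[\eta_1,\eta_2]$ is again a section of $\Ssp$. Applying the converse direction of Proposition \ref{prop:Ssp} then identifies $\{R_1,R_2\}$ as a regular observable supported on $\Sigma$, which is the desired closure.

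The only step requiring a moment of care is the one point where the gauge case differs from the non-gauge case: the flow generated by a regular observable is only determined up to addition of a section of $\mathcal{G}$. This does not damage the argument because $\mathcal{G} \subset \Ssp$, so any such reshuffling keeps the generating flow inside $\Ssp$, and the Lie bracket of two generators remains in $\Ssp$ after using involutivity together with the fact that $[\mathcal{G},\Ssp] \subset \Ssp$ (which is implicit in the proof of Proposition \ref{inv_gauge}). One should also observe that $\{R_1,R_2\}$ is automatically gauge-invariant: for any $\zeta \in \mathcal{G}$, $\delta\{R_1,R_2\}(\zeta) = \Omega([\eta_1,\eta_2],\zeta)$, and $\Omega(\cdot,\zeta)$ vanishes on all of $\mathcal{P}$ by definition of $\mathcal{G}$ as null directions of $\Omega$.

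I do not expect a significant obstacle here; once Propositions \ref{prop:Ssp} and \ref{inv_gauge} are in hand, closure under the Poisson bracket is essentially a one-line consequence. The substantive work has already been done upstream, in particular in verifying involutivity of $\Ssp$ in the presence of field-dependent gauge parameters, and in the general local-gauge characterization (Propositions \ref{prop:local_gauge} and \ref{prop:local_gauge_noether}) that made Proposition \ref{complement_gauge} possible in the first place.
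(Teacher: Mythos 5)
Your proposal is correct and follows the same route as the paper: the paper's proof is exactly the one-line argument that, by Proposition \ref{inv_gauge}, the bracket of two such observables generates a flow in $\Ssp$, and hence by Proposition \ref{prop:Ssp} is localized in $\Sigma$. Your additional remarks on the gauge ambiguity of the generator (harmless since $\mathcal{G}\subset\Ssp$) are a sensible elaboration of what the paper leaves implicit.
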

\begin{cor}
    The regular observables localized in $\Sigma$ Poisson commute with all regular observables localized in $\overline{\Sigma}$.
\end{cor}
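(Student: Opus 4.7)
The plan is to combine Proposition \ref{prop:Ssp} with the symplectic complement property established in Proposition \ref{complement_gauge}. Let $R$ be a regular observable supported on $\Sigma$ and $S$ a regular observable supported on $\overline{\Sigma}$. By Proposition \ref{prop:Ssp}, $R$ generates a flow $\chi_R \in \Ssp$, and by the same proposition applied to $\overline{\Sigma}$ (which follows by the same argument with $\Sigma$ and $\overline{\Sigma}$ interchanged), $S$ generates a flow $\chi_S \in \Ssbp$. The Poisson bracket is then computed as
\begin{equation}
    \{R,S\} = \delta R(\chi_S) = \Omega(\chi_S,\chi_R) = -\Omega(\chi_R,\chi_S),
\end{equation}
which vanishes by Proposition \ref{complement_gauge}, since $\Ssp$ and $\Ssbp$ are mutual symplectic complements.

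The only subtlety I expect to encounter is the ambiguity in the flow generated by each observable: a regular observable in a gauge theory only determines its flow modulo an arbitrary field-dependent gauge transformation (a section of $\mathcal{G}$). However this is harmless here, since $\mathcal{G} \subset \Ssp$ and $\mathcal{G} \subset \Ssbp$ by construction, so any such shift of $\chi_R$ or $\chi_S$ keeps the flows in their respective augmented distributions and is therefore annihilated by the complementary flow under $\Omega$. Thus the bracket is well-defined and genuinely zero, not merely zero up to gauge. There is no hard step; the work has already been done in proving the symplectic complement relation in Proposition \ref{complement_gauge}, which is what makes this corollary essentially immediate.
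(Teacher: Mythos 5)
Your proof is correct and takes essentially the same route as the paper: the paper's one-line argument (the flow generated by a $\Sigma$-observable restricts to a gauge transformation on $\overline{\Sigma}$, so it leaves any gauge-invariant observable there unchanged) is just the statement $\delta S(\chi_R)=\Omega(\chi_R,\chi_S)=0$, which is the pairing you kill via Proposition \ref{complement_gauge}. Your extra remark about the gauge ambiguity of the generated flows being harmless, since $\mathcal{G}$ lies in both $\Ssp$ and $\Ssbp$, is consistent with how the paper handles it.
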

\begin{proof}
    The flow generated by such a regular observable is equal to some gauge transformation on $\overline{\Sigma}$, so preserves the value of any gauge invariant observable within that region.
\end{proof}
\begin{proof}
    By Proposition \ref{inv_gauge}, the bracket of two such observables generates a flow in $\Ssp$, and thus is also localized in $\Sigma$.
\end{proof}
As before, it is desirable to write an expression for a symplectic form directly on $\Sigma$ that enables the computation of these brackets. Because the flows in $\Ssp$ do not vanish near $\partial\Sigma$, the contraction of two such flows with $\Omega_\Sigma$ will depend both on how the gauge part of the flow is selected, and on how the corner ambiguity in $\Omega_\Sigma$ is fixed. Fortunately there is a simple workaround that is consistent with the Poisson bracket induced by $\Omega$. This is to strip off the contributions of gauge transformations that are nontrivial according to $\Omega$. Define
\begin{equation}
    \Omega^*_\Sigma = \Omega_\Sigma -\mathcal{B}_{\partial\Sigma},
\end{equation}
where $\mathcal{B}_{\partial\Sigma}$ is a field-space two form locally constructed on the boundary of $\Sigma$, such that
\begin{equation}
    \mathcal{B}_{\partial\Sigma}(\cdot,\zeta)= \Omega_\Sigma(\cdot,\zeta)=\int_{\partial\Sigma}q_{\zeta}(\cdot)
\end{equation}
where $\zeta$ is any gauge transformations and $q_{\zeta}$ is the associated form in equation \eqref{local_gauge_def}. Since the right hand side is localized to $\partial_\Sigma$, it is clear that \textit{some} $\mathcal{B}$ satisfying this property can be constructed. There is no reason for this to be unique, or even closed on $\mathcal{P}$. $\Omega^*_\Sigma$ is then not actually a symplectic form, but rather a "symplectic density" in the terminology of \cite{Julia}. but it still satisfies the following property:
\begin{prop}
    The Poisson bracket of two regular observables supported on $\Sigma$ computed by $\Omega^*_\Sigma$ is equal to the bracket computed by $\Omega$.
\end{prop}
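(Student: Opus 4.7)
The plan is to evaluate both brackets by decomposing the generating flows into a piece in $\Ss$ and a pure gauge piece, and to verify that the correction $\mathcal{B}_{\partial\Sigma}$ is tailored precisely to remove the extra corner contributions that the gauge pieces feed into $\Omega_\Sigma$. Concretely, let $O_1, O_2$ be regular observables supported on $\Sigma$. By Proposition \ref{prop:Ssp} each generates a flow $\eta_i \in \Ssp$ under $\Omega$, which I would split as $\eta_i = \eta_i^{(S)} + \zeta_i$ with $\eta_i^{(S)} \in \Ss$ and $\zeta_i$ pure gauge.

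On the $\Omega$ side, gauge directions are null, so $\Omega(\eta_1,\eta_2) = \Omega(\eta_1^{(S)},\eta_2^{(S)})$; since both $\eta_i^{(S)}$ vanish on $\overline{\Sigma}$, the contribution from that half of the Cauchy slice drops out, leaving $\Omega_\Sigma(\eta_1^{(S)},\eta_2^{(S)})$. On the $\Omega^*_\Sigma$ side, I would expand $\Omega_\Sigma(\eta_1,\eta_2)$ bilinearly into four pieces; the three pieces with at least one gauge slot are precisely what the defining relation $\mathcal{B}_{\partial\Sigma}(\cdot,\zeta) = \Omega_\Sigma(\cdot,\zeta)$ forces $\mathcal{B}_{\partial\Sigma}$ to reproduce (using antisymmetry of the two-form for the gauge-gauge piece, which is consistent with the boundary integral by the same $-I_\zeta\omega = C+dq$ calculation), so they cancel upon subtraction. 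The interior-interior piece $\Omega_\Sigma(\eta_1^{(S)},\eta_2^{(S)})$ then agrees with the $\Omega$-bracket, provided the corresponding $\mathcal{B}_{\partial\Sigma}$ contribution also vanishes.

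The main technical step, and the only place where one might worry, is verifying that $\mathcal{B}_{\partial\Sigma}(\eta_1^{(S)},\eta_2^{(S)}) = 0$. This follows because $\mathcal{B}_{\partial\Sigma}$ is locally constructed on $\partial\Sigma$, so its value depends only on finitely many derivatives of the flows at that codimension-2 surface, and flows in $\Ss$ vanish to all orders on $\overline{\Sigma} \supset \partial\Sigma$. The same observation dispatches the subtlety that $\mathcal{B}_{\partial\Sigma}$ is not uniquely determined by its defining constraint: the residual freedom lives precisely in the interior-interior slot, where boundary localization kills it anyway. A last point worth flagging is that the identification $\Omega_\Sigma(\cdot,\zeta) = \int_{\partial\Sigma} q_\zeta(\cdot)$ uses the locally gauge identity $-I_\zeta \omega = C + dq$ together with on-shellness of the other slot to annihilate the $C$ term, which is automatic here because every flow we insert is tangent to $\mathcal{P}$.
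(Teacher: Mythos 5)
Your proposal is correct and follows essentially the same route as the paper: decompose each generated flow as an element of $\Ss$ plus a pure gauge piece, note that the gauge pieces drop out of $\Omega$ and are exactly subtracted by $\mathcal{B}_{\partial\Sigma}$ in $\Omega^*_\Sigma$, leaving $\Omega_\Sigma(\eta_1^{(S)},\eta_2^{(S)})$ on both sides. The only addition is that you spell out why $\mathcal{B}_{\partial\Sigma}(\eta_1^{(S)},\eta_2^{(S)})=0$ (boundary locality plus the vanishing of $\Ss$-flows to all orders at $\partial\Sigma$), a step the paper leaves implicit in the phrase ``the gauge transformations are subtracted off by $\mathcal{B}$''.
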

\begin{proof}
Let the two observables generate flows that are decomposed as $\eta_1+\zeta_1$ and $\eta_2+\zeta_2$ respectively just as in \eqref{bracket}. The Poisson bracket is 
\begin{equation}
    \Omega (\eta_1+\zeta_1,\eta_2+\zeta_2)= \Omega_\Sigma(\eta_1,\eta_2)
\end{equation}
when computed according to $\Omega$. Observe that this is equal to $\Omega^*_\Sigma(\eta_1+\zeta_1,\eta_2+\zeta_2)$, since the gauge transformations are subtracted off by $\mathcal{B}$, which shows how to compute the bracket with $\Omega^*_\Sigma$. Note that the addition of a corner term to $\Omega_\Sigma$ does not affect the result; $\mathcal{B}$ is also altered in such a way that cancels the contribution to the bracket.
\end{proof}
The discussion of singular generators in the previous section passes through without alteration. We will return to them in the discussion.

\section{Dressed Subsystems and Gauge Fixing}
\subsection{Gravitational Dressing}\label{subsec:dress}
Now we come to the main objective of this study; to extend the definition of subsystems to gravitational theories, that are both generally covariant and have a variable causal structure. As noted before, a fixed subregion with respect to a background manifold structure contains no observables. In order to obtain nontrivial observables, the location of the subsystem must depend on the state; in other words, it must be determined \textit{in relation} to some dynamical degrees of freedom. Donnelly and Freidel \cite{subsystem} introduced a tool for describing subregions with fluctuating boundaries that has become widely used in the literature \cite{local}. The idea is to introduce additional variables consisting of a map $X$ that embeds a reference manifold $R$ into the spacetime $M$. By pulling back the fields along this map and explicitly accounting for the embedding map in the symplectic structure, they obtained an extended phase space on $R$ that they identified with a subsystem in the abstract. However, since the embedding maps are additional variables that do not appear in the original action, the gauge invariant functions on this phase space are not expressible as observables constructed purely in terms of the original dynamical fields on $M$. Thus that construction is not sufficient to describe a subsystem defined in relation to the physical degrees of freedom, until further information about the fields on $M$ is incorporated into $X$. An extensive formalism for constructing relational observables was developed in \cite{reference_frames} (see also \cite{reference,frames}), which makes use of "dynamical reference frames". These are essentially an incarnation of the embedding maps $X$ that are constrained with respect to the dynamical fields on $M$. This method, rather than the extended phase space approach, can be used to identify subregions that contain genuine physical observables. However it is not at all clear based on that work whether these observables form a closed Poisson algebra. We will find that with a single additional, highly natural condition on the frames, they do, and that the regions defined this way constitute proper subsystems.
\par
Rather than invoke the full mathematical sophistication of \cite{reference_frames}, we will use a simpler description that is equivalent for our purposes. For each solution in $\mathcal{P}$, a dynamical reference frame is a smoothly connected class of allowed maps $X$ that embed a $D-1$-dimensional manifold as a partial Cauchy surface $\Sigma$. The constraints on $X$ are constructed purely out of the dynamical fields on $M$, and may only be defined inside an open region of prephase space\footnote{For example, we could consider a class of solutions containing a single worldtube of high energy density, representing a planet, for which an average worldline can be defined.  Given such a solution, neighbouring solutions will also have this property. Then in all of those solutions, the maps $X$ could be constrained so that $\partial\Sigma$ is the edge of the timelike envelope of this worldline. More examples are discussed in section \ref{sec:examples}.}.
The constraints are required to satisfy the following properties:
\begin{enumerate}[I.]
    \item They fully fix the region; for a given solution, all allowed $X$ induce the same $\partial\Sigma$.
    \item They are covariant; if a finite diffeomorphism $\varphi$ is applied to a solution for which $X$ is an allowed map, $\varphi\circ X$ is an allowed map for the transformed solution. 
    \item They are internal to the region; a solution is perturbed in a way such that the fields and their derivatives are left unchanged on $D(\Sigma)$, $X$ is also an allowed map for the new solution.
\end{enumerate} 

The first two properties are encoded in \cite{reference_frames}, but the third is new. To interpret it, consider an observer with access to an experimental apparatus living on $\Sigma$, or equivalently in its domain of dependence. If the observer has knowledge of the constraints on $X$, then they can determine the allowed values of $X$ by taking measurements only within the region. Most importantly, they have a protocol for determining the edge of the region without exiting it. This means that a measurement system can be consistently confined within the region. This is a reasonable demand to make of an observable subsystem in a gravitational theory. We call such a region an \textit{internally dressed subsystem}. 
\par
To verify whether or not these can actually be regarded as proper subsystems, we need to identify the observables contained therein and study their brackets. To do this, note that Property II implies that by applying a diffeomorphism to any solution, it is possible to ensure that a particular fixed map is allowed. We will additionally require that $X$ satisfies strong enough boundary conditions near $\partial M$ that one can always do this by making a pure gauge transformation\footnote{Otherwise, the embedding of $\Sigma$ near the global boundary depends on the solution, and cannot be transformed away with an ordinary gauge transformation. For instance, in an asymptotically AdS setting one could consider the causal wedge of a boundary region that changes size depending on certain field values. This is similar to taking a field-dependent subregion in a nongravitational theory, which certainly does not define a subsystem in any useful sense. I do not expect that one will find a closed Poisson algebra in these cases.}, in which case imposing that the selected $X$ is allowed amounts to a choice of \textit{gauge fixing}. Property I implies that this gauge fixing renders the location of $\partial\Sigma$ constant, so that residual diffeomorphisms cannot move it. Property III implies that the gauge fixing only constrains the fields in $D(\Sigma)$. This reduces the problem to a special case of general question; how can the formalism of section \ref{subsec:gauge_subsystem} be adapted when gauge fixing conditions are imposed inside the subregion $\Sigma$?
%Gauge invariance of local measurements
\subsection{Gauge Fixed Subsystems}\label{subsec:gauge_fix}
This problem can be conveniently reformulated in a language that applies to any theory for which the locally gauge flows include diffeomorphisms. We restrict to an open set of $\mathcal{P}$ where we can impose a certain set of gauge fixing conditions. First, the  fixed slice $\mathfrak{C}$ is required to be Cauchy as before, and partitioned into regions $\Sigma$ and $\overline{\Sigma}$. This is not a serious restriction, since for any solution where $\mathfrak{C}$ is a Cauchy slice, there is an open neighbourhood in $\mathcal{P}$ where this is also true, and solutions outside of this neighbourhood can be gauge transformed into it. In other words it amounts to simply shrinking the considered open set. Furthermore, the location of the domain of dependence $D(\Sigma)$ is fixed. This amounts to gauge fixing the metric to be degenerate along certain hypersurfaces that end at $\partial\Sigma$, which are the putative null boundaries of the domains. This condition plays no role in defining the subsystem, but is useful for simplifying proofs. Afterwards, it may safely be dropped.
\par
Finally, there is the set of gauge fixing conditions that correspond to the constraining of $\partial\Sigma$ in the previous section, which are constructed out of the fields and their derivatives within $D(\Sigma)$. Together with the null boundary conditions, these are denoted $G^k=0$. They restrict the solutions to a hypersurface of nontrivial codimension within $\mathcal{P}$, albeit one for which there is a complete basis of transverse directions consisting of pure gauge transformations. Altogether, the constraints $G^k=0$ are assumed to eliminate gauge transformations that "move $\partial\Sigma$", and thereby dynamically determine $D(\Sigma)$. In precise mathematical terms, a finite residual gauge transformation is applied by integrating a flow $X_\lambda$ for a set of time dependent parameters $\lambda^a(t)$ such that the constraints are always preserved. The requirement is that for all allowed parameters, the fields within $D(\Sigma)$ after the transformation depend only on the fields within $D(\Sigma)$ beforehand, so the residual gauge transformations act as a Lie group on $D(\Sigma)$.
\par
The key point of the gauge fixing is that afterwards, there are functionals that are invariant under residual gauge transformations that were not invariant under the complete gauge group. This does \textit{not} mean that there are additional observables in the gauge fixed theory; rather, the original observables are equal to other functionals when restricted to the constraint surface. As is well known \cite{Wallace}, gauge fixing is a method of rewriting the gauge invariant observables in a way that is, hopefully, simpler. In particular, although the observables of the theory are natively nonlocal with respect to the background manifold, we can hope that some of them localize to $\Sigma$ after imposing $G^k=0$, and define an observable subsystem. Any such localized observables can be extended uniquely off the constraint surface to obtain a fully invariant expression, at least within the open set of consideration \cite{reference_frames,henneaux_teitelboim, Bianca, Chataignier}. 
\par
The central tools in section \ref{subsec:gauge_subsystem} were the distributions $\Ssp$ and $\Ssbp$. On the constraint surface, these restrict to distributions $\Sspg$ and $\Ssbpg$ respectively, which by Corollary \ref{causal_complement_gauge} exactly consist of the flows tangent to the constraint surface that restrict to gauge transformations within $D(\overline{\Sigma})$ and $D(\Sigma)$ respectively. Note that  Propositions \ref{complement_gauge} and \ref{inv_gauge} immediately descend to these distributions. For the former, it is because the flows in $\Sspg$ are equal to the flows in $\Ssp$ up to the addition of gauge vectors that annihilate $\Omega$, so the symplectic complement of $\Sspg$ is just the restriction of the symplectic complement of $\Ss$ to the constraint surface; in other words, $\Ssbpg$. For the latter, it is because we already know that the bracket of two vector fields in $\Ssp$ belongs to $\Ssp$, and the Lie bracket of vector fields tangent to the constraint surface is another vector field tangent to the surface. The statements for $\overline{\Sigma}$ hold symmetrically, as so far we have not employed the localization of the constraints to $\Sigma$.
\par
However, the symmetry is broken when considering how these flows restrict to the complementary subregion. The flows in $\Sspg$ can be split as before into $\eta+\zeta$, where $\eta$ is supported in $D(\Sigma)$ and $\zeta$ is gauge. Only the sum is required to preserve the constraints $G^k=0$; there is no condition on $\eta$ itself, since for any flow $\gamma$ tangent to $\mathcal{P}$ it is possible to choose $\zeta$ such that $\zeta+\gamma$ preserves the constraints. The flows $\Ssbpg$, however, can be split into $\chi+\zeta$, where $\chi$ is supported in $D(\overline{\Sigma})$ and $\zeta$ is gauge. The localization of $G^k$ implies that $\chi$ cannot affect their values, so $\delta G^k(\zeta)=0$ for all $k$ and $\chi$ is arbitrary. In other words, $\zeta$ is a residual gauge transformation. This implies that: 
\begin{prop}\label{prop:dress_int}
    The distribution $\Ssbpg$ is locally integrable.
\end{prop}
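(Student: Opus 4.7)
The plan is to adapt the two-step construction from Proposition \ref{gauge_int} to the gauge-fixed setting, working locally in $\mathcal{P}$ where $G^k = 0$ can be consistently imposed and where Assumption \ref{path-connect} applies. The essential structural input, already highlighted immediately before the proposition, is that any $\chi \in \Ssb$ automatically preserves the constraints (being supported away from $D(\Sigma)$, where the $G^k$ live), while the residual gauge piece $\zeta$ in any decomposition $\chi + \zeta$ of a section of $\Ssbpg$ must by construction lie in the Lie group of residual gauge transformations. This latter fact is the only thing that fails for $\Sspg$ and is what breaks the symmetry of the situation.

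I would build the foliation via a two-stage equivalence relation on the constraint surface. First, identify any two solutions of $G^k = 0$ that agree on $\Sigma$ to all derivative orders. By Assumption \ref{path-connect} such solutions are connected by a smooth path, which can be arranged to lie inside the constraint surface because $\Ssb$-flows preserve $G^k$ automatically, and which is tangent at each point to $\Ssb$. Second, on the resulting intermediate quotient, mod out by the Lie group action of the residual gauge transformations; this is legitimate precisely because the gauge fixing was designed so that the residuals act as a Lie group on the field data in $D(\Sigma)$, and because this group is connected, any two points in the same orbit are joined by a smooth path tangent to residual-gauge $\zeta$'s. Both identifications are realized by smooth paths whose tangents lie in $\Ssbpg$, so the leaves are path-connected.

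To verify that the tangent distribution of these leaves equals $\Ssbpg$, I would check both inclusions pointwise. Any vector tangent to a leaf arises as the sum of an $\Ssb$-deformation and an infinitesimal residual gauge transformation, both manifestly in $\Ssbpg$. Conversely, using the decomposition $\chi + \zeta$ noted in the paragraph immediately preceding the proposition, every section of $\Ssbpg$ is tangent to these leaves. The main subtlety, and where I would take the most care, is ensuring that the quotient in step one is well-defined on the constraint surface rather than on all of $\mathcal{P}$ — but this follows because $\Ssb$ restricted to the constraint surface is itself an integrable distribution with leaves given by fixed data on $\Sigma$ compatible with $G^k = 0$, and the localization of $G^k$ inside $D(\Sigma)$ means no compatibility condition is imposed on the $D(\overline{\Sigma})$-side data. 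The reason the parallel proof fails for $\Sspg$ is exactly this asymmetry: there, nothing forces the $\zeta$ paired with $\eta \in \Ss$ to lie in any Lie subgroup, and integrability can genuinely be lost.
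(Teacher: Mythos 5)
Your proposal is correct and follows essentially the same route as the paper: a two-stage equivalence relation (identify solutions agreeing on $\Sigma$/$D(\Sigma)$, then quotient by the Lie group action of the residual gauge transformations), with path-connectedness of the leaves supplied by Assumption \ref{path-connect} and the connectedness of the residual gauge group. Your added checks — that the paths stay on the constraint surface and that the leaf tangents match $\Ssbpg$ via the $\chi+\zeta$ decomposition — are elaborations of the same argument rather than a different one.
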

\begin{proof}
We can repeat the proof of Proposition \ref{gauge_int}, defining the leaves of the foliation by an equivalence relation. First, we identify solutions that agree on $D(\Sigma)$. Then the finite residual gauge transformations have a well-defined action on these equivalence classes, so we can identify their orbits as well. As before, this identification is implemented by paths, so the foliation is path-connected.
 \end{proof}   
This defines a phase space $\mathcal{P}(\Sigma)$ and an associated set of observables. Note that adding additional gauge fixing conditions $H^k$ that are also supported in $D(\Sigma)$ will not affect the construction of the subregion phase space, and will therefore yield the same observables, although they will admit new re-expressions.  As before, we can show that:
\begin{prop}\label{prop:Sspg}
    The set of observables generating flows belonging to $\Sspg$ is precisely the set of regular observables supported on $\Sigma$.
\end{prop}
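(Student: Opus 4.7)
The plan is to mirror the proof of Proposition \ref{prop:Ssp} almost verbatim, exploiting the facts noted in the paragraphs just preceding this Proposition: namely, that on the constraint surface, $\Sspg$ and $\Ssbpg$ are symplectic complements (the restriction of Proposition \ref{complement_gauge}), and $\Ssbpg$ is locally integrable (Proposition \ref{prop:dress_int}), so that "supported on $\Sigma$" precisely means descending to a function on $\mathcal{P}(\Sigma)$, i.e., being invariant under the flows in $\Ssbpg$.

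First I would handle the forward direction. Let $O$ be a regular observable supported on $\Sigma$, and let $\eta$ be a flow it generates via $\Omega$. Because $O$ is gauge invariant, $\eta$ is defined up to the addition of elements of $\mathcal{G}$, and we can use this freedom to pick a representative that is tangent to the constraint surface $G^k = 0$; this does not affect $\delta O = \Omega(\cdot, \eta)$. For any $\chi \in \Ssbpg$, the hypothesis that $O$ is invariant under $\Ssbpg$ gives $\Omega(\chi, \eta) = \delta O(\chi) = 0$. Since $\Sspg$ is the symplectic complement of $\Ssbpg$ on the constraint surface, we conclude $\eta \in \Sspg$, as desired.

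For the reverse direction, suppose that the regular observable $O$ generates a flow $\eta \in \Sspg$. Then for any $\chi \in \Ssbpg$, the symplectic complement property gives $\delta O(\chi) = \Omega(\chi, \eta) = 0$, so $O$ is invariant along $\Ssbpg$ and therefore supported on $\Sigma$ in the sense of $\mathcal{P}(\Sigma)$.

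The only real subtlety, which is where I expect most of the care to be required, is the tangency issue in the first direction: a regular observable on $\mathcal{P}$ a priori generates a flow on the full prephase space, not on the constraint surface, and we need to know that adding a suitable gauge transformation always produces a representative tangent to $G^k = 0$. This is precisely guaranteed by the gauge fixing setup laid out in section \ref{subsec:gauge_fix}, which posits that the $G^k$ constraints have a complete basis of transverse directions consisting of pure gauge transformations; given any $\eta$ satisfying $\Omega(\cdot, \eta) = \delta O$, one can add a gauge transformation to cancel $\delta G^k(\eta)$ and land tangent to the constraint surface without altering the action on gauge-invariant observables. Once that point is secured, the remainder of the argument is a clean restatement of Proposition \ref{prop:Ssp}.
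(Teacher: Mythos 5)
Your proposal is correct and follows essentially the same route as the paper, which simply notes that the argument is structurally identical to Proposition \ref{prop:Ssp} once one has the symplectic complementarity of $\Sspg$ and $\Ssbpg$ and the integrability from Proposition \ref{prop:dress_int}. Your additional remark on choosing a representative flow tangent to the constraint surface via a compensating gauge transformation is a sensible explicit spelling-out of a point the paper leaves implicit in its gauge-fixing setup.
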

\begin{proof}
   The proof is structurally identical to that of Proposition \ref{prop:Ssp}, making use of Proposition \ref{prop:dress_int} and the symplectic complementarity of $\Sspg$ and $\Ssbpg$.
\end{proof}

\begin{cor}
    The regular observables supported on $\Sigma$ form a closed Poisson algebra.
\end{cor}
\begin{proof}
    This follows from Proposition \ref{prop:Sspg} and the involutivity of $\Sspg$.
\end{proof}
Therefore the subregion $\Sigma$, or equivalently $D(\Sigma)$, can be regarded as a proper subsystem. Unlike before, these theorems do not apply to the observables localized in $\overline{\Sigma}$, because they are not necessarily invariant under $\Sspg$. Instead, they must be invariant under the direct sum of the restriction of $\Ss$ to the constraint surface and the residual infinitesimal gauge transformations, which is a proper subspace of $\Sspg$. 
\par
As in section \ref{subsec:gauge_subsystem}, the observables in $\Sigma$ generate field-dependent gauge transformations in the causal complement. However, these are \textit{not} necessarily residual gauge transformations, since for any $\eta\in\Ss$ there is a choice of $\zeta\in\mathcal{G}$ such that $\eta+\zeta$ preserves $G^k=0$, but neither must do so individually. Therefore the observables in $\Sigma$ do \textit{not} have to Poisson commute with observables in $\overline{\Sigma}$, which is consistent with the failure of microcausality expected in gravitational theories \cite{dress, obstruction_subsystem}. However, this failure is strongly constrained. There is a useful analogy with asymptotic symmetry generators, such as the ADM Hamiltonian. These exist at the boundary of spacetime, and might naively be expected to commute with fields in the bulk. However, they do not; instead they generate locally gauge flows in the bulk that have nontrivial boundary limits, and have nontrivial brackets with observables.
\par
There is a case in which microcausality of subsystems does hold, along the lines of \cite{reference_frames, Marolf_causality}: 
\begin{prop}
 Given two internally dressed subsystems that are spacelike separated in an open subset of $\mathcal{P}$, their algebras mutually Poisson commute.
\end{prop}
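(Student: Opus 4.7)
The plan is to use the spacelike separation to combine the two dressings into a single consistent gauge fixing, then apply Proposition \ref{prop:Sspg} to that combined setting. First, since the two subsystems are spacelike separated throughout an open neighbourhood of $\mathcal{P}$, I would find a Cauchy slice $\mathfrak{C}$ that contains both $\Sigma_1$ and $\Sigma_2$ as disjoint partial Cauchy surfaces; this is possible because each $D(\Sigma_i)$ lies in the causal complement of the other. Relabelling, this means $\Sigma_2$ sits inside what we would call $\overline{\Sigma_1}$, and vice versa.

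Next, I would impose both sets of dressing constraints $\{G^k_1\}$ and $\{G^k_2\}$ simultaneously. Because $G^k_i$ is supported in $D(\Sigma_i)$ and these domains are spacelike separated, the two sets of constraints are locally independent: a residual infinitesimal gauge transformation of one has parameters free to vary in the causal complement of its $\Sigma_i$, and in particular can be adjusted arbitrarily inside the support of the other set. The combined constraint surface is therefore well-defined, and the residual gauge transformations form the intersection of the two individual residual gauge groups. Proposition \ref{prop:dress_int} and Proposition \ref{prop:Sspg} both adapt directly to this combined setting by the same proofs: the combined dressing still acts as a Lie group on each $D(\Sigma_i)$ separately, and the symplectic complement argument is unaffected. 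In particular, a regular observable $\mathcal{O}_1$ supported on $\Sigma_1$ generates, on the combined constraint surface, a flow $\chi_1$ whose restriction to $D(\overline{\Sigma_1})$ is a gauge transformation that preserves all of the combined constraints, and whose restriction to $D(\Sigma_2) \subset D(\overline{\Sigma_1})$ is therefore a residual gauge transformation of the $\Sigma_2$ dressing.

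The conclusion is then immediate. Let $\mathcal{O}_2$ be a regular observable supported on $\Sigma_2$. As an observable on the reduced phase space associated to the $\Sigma_2$ dressing, $\mathcal{O}_2$ is by construction invariant under residual gauge transformations generated by that dressing. Hence
\begin{equation}
\{\mathcal{O}_1,\mathcal{O}_2\} = \delta\mathcal{O}_2(\chi_1) = 0,
\end{equation}
since $\chi_1$ restricts to such a residual gauge transformation on $D(\Sigma_2)$ and $\mathcal{O}_2$ depends only on data there. By antisymmetry, the entire algebras mutually commute. The main obstacle to watch in executing this plan is the compatibility of the two gauge fixings: one must verify that imposing $\{G^k_1\} \cup \{G^k_2\}$ really does cut out a well-defined constraint surface with the expected residual gauge content, rather than over-constraining the theory. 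Spacelike separation is exactly the condition that guarantees this, because each constraint set fixes only degrees of freedom in its own domain of dependence, leaving the other untouched, so the two gauge-fixings act on independent directions in $\mathcal{P}$.
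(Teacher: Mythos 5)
Your proposal is correct and follows essentially the same route as the paper: impose both dressings, use the spacelike separation to keep the two constraint sets independent, and note that the flow generated by an observable in one region acts on the other region only through a gauge transformation compatible with that region's constraints. The paper's version differs from yours only in the last step, where instead of arguing that the induced action on $D(\Sigma_2)$ is a residual gauge transformation under which $\mathcal{O}_2$ is invariant, it simply switches off the gauge parameters in the intervening region so that the flow vanishes on the second subsystem altogether.
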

\begin{proof}
  Impose the gauge fixing constraints for both subsystems, which live in spacelike separate domains. The flow generated an observable in one region is gauge on the causal complement, but is otherwise unrestricted except for the condition that it preserves the constraints for the other subsystem. This can be achieved by switching off the gauge parameters in the intervening region, so that it leaves observables in the other subsystem unaffected. See figure \ref{fig:microcausal}.
\end{proof}
\begin{figure}
    \centering
    \includegraphics[scale=.2]{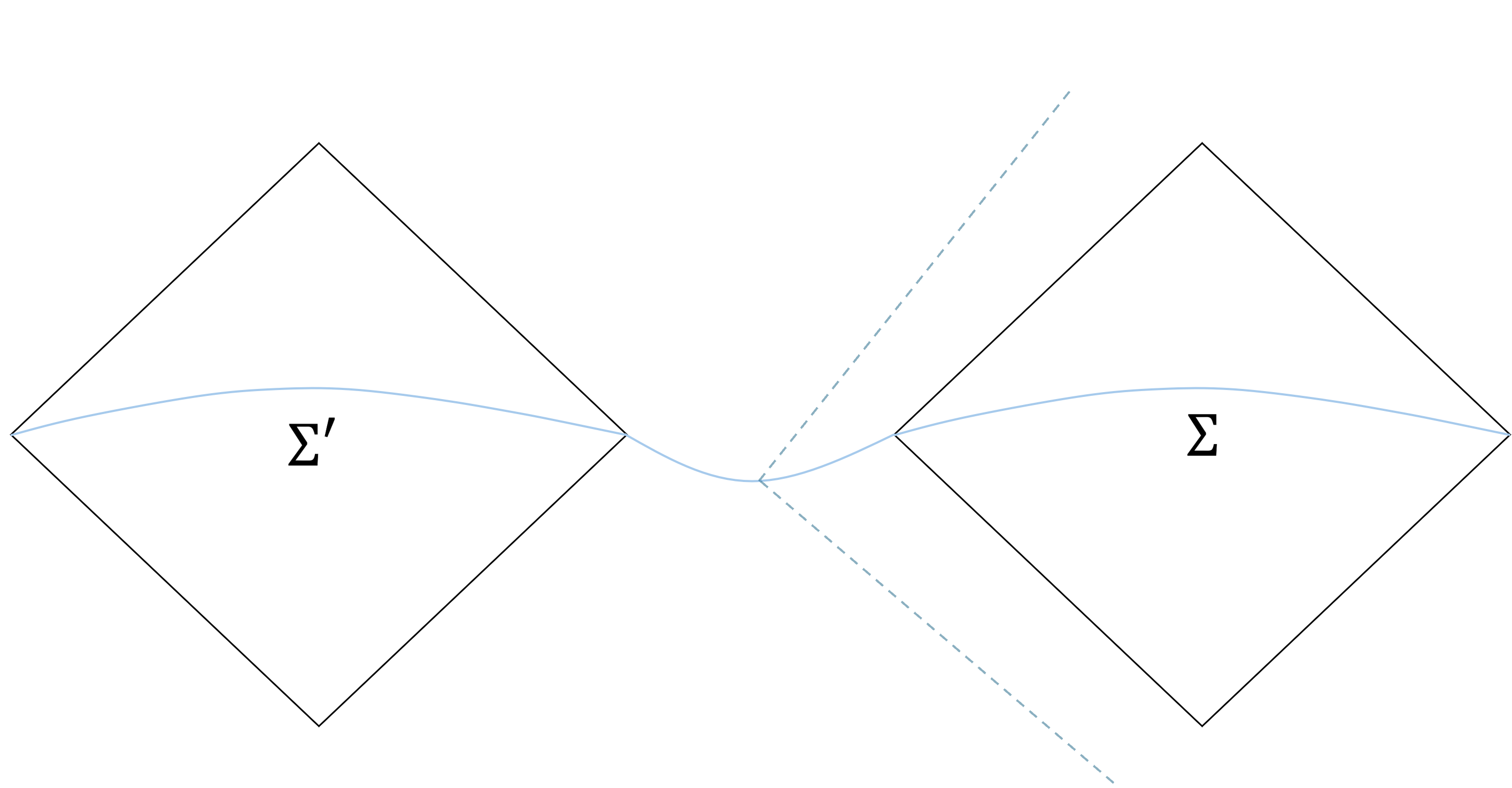}
    \caption{Observables supported in $D(\Sigma$) generate flows that are gauge on the causal complement, and can be switched off before reaching $D(\Sigma')$. The dotted lines indicate the boundary of the region where the flow is nonzero.}
    \label{fig:microcausal}
\end{figure}
As a "limiting case" note that if the dressing constraints for $D(\Sigma)$ are supported purely on the boundary (they consist of some local constraints on $\partial\Sigma$ as well as the null boundary conditions), the causal complement is \textit{also} an internally dressed subsystem. The algebras of these regions are easily seen to commute with each other.
%More detail?
\section{Some Examples}\label{sec:examples}
We will now discuss some examples that illustrate the general formalism. Because we are primarily interested in matters of principle, we will not do any involved computations of Poisson brackets, and will leave the treatment fairly schematic. The reader will be referred to the literature for more concrete details.
\subsection{Electromagnetism}
Although it is not a gravitational theory, it is instructive to consider the case of the Maxwell field coupled to a charged scalar. The theorems of \ref{subsec:gauge_fix} still apply. Consider the case where $\Sigma$ is a compact region, and we gauge fix the scalar field at one point in the interior to be real. Then choose a set of contours on the Cauchy slice radiating out from this point in all directions. Fix the pullback of $A_\mu$ along these lines to vanish within $\Sigma$. This is called a contour gauge \cite{contour}, and traces back to at least \cite{Mandelstam}. The observables within the region are exactly the same as they were initially, since there is no need to dyanmically fix its location. A complete generating set is given by the electric and magnetic fields, as well as Wilson lines ending on scalar fields starting at the center and lying along the chosen contours: \begin{equation}
  W[C;0,t]=\phi^*(C(t))e^{i\int_C A}\phi(C(0)).  
\end{equation} The gauge fixing conditions allow this to be rewritten as $W[C,t]=\phi^*(C(t))\phi(C(0))$. All of the regular observables in the region together form a Poisson algebra. 
\par
Consider $W[C;-t,t]$, where $C$ passes through the center point and $C(-t),C(t)$ are on opposite sides of it. Although this observable is not originally supported on $\overline{\Sigma}$, the gauge fixing re-expresses it as 
\begin{equation}
    W[C,t]=\phi^*(C(t))e^{i\int_{C\setminus\Sigma} A}\phi(C(-t)), 
\end{equation}which is. However, it cannot commute with all observables localized in $\Sigma$; for example, a smearing of the electric fields that intersects $C$. The flow generated by this electric field observable, as well as any other in $\Sigma$, must restrict to a gauge transformation on $\overline{\Sigma}$. This nontrivial action arises from the requirement of preserving the gauge constraints; computing it may be a useful exercise.
\subsection{Scalar Models}
On the gravitational side, the simplest conceivable examples of dressing are Z-models \cite{Z-model,obstruction_subsystem,Marolf_causality,locally_covariant_gravity}, which involve a set of $D$ scalar fields $X^i$ that are assumed to take monotonic profiles in an open set of $\mathcal{P}$. They can be used as coordinates and gauge fixed to take constant values (at least in a compact region). In the constrained theory, a partial Cauchy surface inside the region where $X^i$ are fixed defines an internally dressed subsystem. There are no residual gauge transformations, so all local fields represent observables. It is apparent that the minimal set of constraints required to define the subsystem are the values of $X^i$ at the corner $\partial\Sigma$, and thus the rest of the gauge fixing does not affect the set of localized observables. Since this condition is localized to the corner, the causal complement is also an internally dressed subsystem. As discussed in section \ref{subsec:gauge_fix}, the two algebras mutually commute, as was originally discussed in \cite{Marolf_causality}. A natural alternative dressing along the same lines can be obtained by considering two scalar fields, such that in some open subset of $\mathcal{P}$ their zero sets intersect at a compact, spacelike codimension-two surface. Taking the interior domain of dependence to be $D(\Sigma)$ defines an internally dressed subsystem with similar properties.
\par
A more physically realistic way to construct a complete set of coordinates is to use Brown-Kuchar dust fields \cite{brown_kuchar}, of which a nice covariant treatment can be found in \cite{reference_frames}. The basic idea is to use double the number of scalar fields to construct an action that describes a perfect fluid of zero pressure, as well as functions that parametrize its geodesic flowlines.
\subsection{Causal Patches}\label{subsec:causal_patch}
The most interesting examples of internal dressing are motivated by the analysis of black holes and holography. Consider a two-sided asymptotically AdS spacetime and a spatial subregion $R$ of one boundary. The causal wedge is defined as usual, by taking the boundary domain of dependence of $R$ and then considering the intersection of its bulk causal past and future \cite{causal_wedge, tasi}. For example, if one takes $R$ to be a Cauchy slice of one entire boundary, the wedge is the exterior of the wormhole. One can apply a diffeomorphism to fix the location of the light sheets bounding this region. It may not immediately be obvious that this diffeomorphism will be gauge; to see this, note that the AdS boundary conditions at the conformal boundary are a limiting case of Dirichlet conditions at a finite boundary. This means that the geometry of the light sheets is fixed to first order away from the boundary, and the diffeomorphism needed to fix them completely in place can vanish at the boundary. The surface charges $q_\xi$ of Einstein gravity vanish at Dirichlet boundaries when the vector $\xi$ vanishes \cite{harlow}, so by Proposition \ref{prop:local_gauge} the insertion of such a diffeomorphism flow into the symplectic form vanishes, and it is gauge. This discussion is completely schematic, and requires holographic renormalization to be made rigorous. However, I expect that it is essentially correct and that the causal wedge can thus be defined as an internally dressed subsystem. The causal complement of the causal wedge is not an internally dressed subsystem; however, the causal wedge of $\overline{R}$ is, and is always spacelike separated from the causal wedge of $R$ \cite{Hubeny_causal}. The classical algebras of observables of the two causal wedges must therefore commute; however, since the wedges do not cover a bulk Cauchy slice, it seems that together they do not generate the complete algebra of observables of the theory. \par
More interesting in the context of holography, however are \textit{entanglement wedges}. Supposing asymptotically AdS boundary conditions, in the classical limit the entanglement wedge of $R$ is defined as the bulk domain of dependence of a partial Cauchy surface $\Sigma$ enclosed by $R$ and a locally extremal homologous bulk codimension-2 surface \cite{entanglement_wedge, Harlow_wedge}, called the HRT surface \cite{HRT}. There is always at least one such surface; for simplicity, we can restrict to the open set where there is only one, although it is easy to generalize when there are multiple. As before, it is expected that the location of this surface and its associated light sheets can be gauge fixed in the bulk, since the first order behaviour off of the conformal boundary is fixed. This time the entanglement wedge of $\overline{R}$ is equal to the causal complement of the entanglement wedge of $R$, and both are internally dressed subsystems. This is a special case of the property noted at the end of section \ref{subsec:gauge_fix}, where the causal complement of an internally dressed subsystem defined by conditions that are localized on its boundary is also internally dressed.
\par
There are many other examples that one could consider, such as the timelike envelopes of dynamical worldlines \cite{Witten_ARO, Witten_background_independent,Kaplan_dS} or regions constructed by shooting in geodesics from the boundary. As long as these constructions are carried out in such a way that the dressing elements are contained in the final domain of dependence, they will define consistent subsystems with closed Poisson algebras. 
\section{Discussion}
\subsection{Factorization and Subregion Independence}\label{subsec:factor}
Now we will consider some additional problems to which our methods lend useful perspectives. The first is the question of the independence of subregions in classical gravity. We have proven a result about internally dressed, spacelike separated subsystems that is a good starting point. However, there are other interesting viewpoints that one could take. Recently \cite{Folkestad} Folkestad discussed the question of whether, perturbatively around a fixed background, spacelike separated subregions obey a classical split property where perturbations made to one subregion can be extended smoothly to the complement in such a way that they vanish on the other subregion. By considering several examples, he found that this is essentially the case around generic asymmetric backgrounds \footnote{This is suggestive of a relationship with Proposition \ref{prop:local_gauge_noether}, which holds exactly for backgrounds that have no local symmetries. Perhaps this tool can provide an exact formulation and proof of Folkestad's conceptions.}, and took this as evidence that independent localized algebras of observables may exist. We have sidestepped this issue with the internal dressing condition; as long as one only considers perturbations generated by internally dressed observables, these are gauge in the causal complement and can be switched off smoothly in such a way that they vanish in the other subsystem. The question raised by Folkestad is still interesting in its own right, especially if it is discussed in terms of whether arbitrary allowed perturbations to the observables in one dressed subsystem can be extended to leave invariant the observables in the other subsystem. We do not claim an answer to this, but it would be interesting to pursue in future work.
\par
Another question of interest is whether the phase space factors into the phase spaces of complementary subregions. This was studied in \cite{Riello1,Riello2,Riello3,Riello4} within the context of the linearization of Yang-Mills theory. To be mathematically precise: for a particular solution, consider the Cauchy surface $\mathfrak{C}$ partitioned into $\Sigma$ and $\overline{\Sigma}$. Do the linearized observables contained in either region determine the values of all linearized observables?  Equivalently, is a smooth linearized solution $\gamma$ determined up to gauge by how it affects subregion observables? The is answered in the affirmative in the as the "general gluing theorem" of \cite{Riello3}. This states that for Yang-Mills $\gamma$ is determined up to the addition of a surface symmetry, as long as the background solution is not invariant under any locally gauge flows. The proof involved various technical, model-specific computations. Remarkably, however, we can provide an alternate proof using the results of section \ref{subsec:gauge_character}, for any theory satisfying Assumptions \ref{causal_1} and \ref{causal_2}! To see this, note that the effect of $\gamma$ on regular linearized observables localized in $\Sigma$ is equivalently described by by $\Omega_\Sigma(\gamma,\eta)$ for all $\eta\in\Ss$. By Proposition \ref{prop:local_gauge} determines $\gamma$ within $D(\Sigma)$ up to the addition of a locally gauge flow. Likewise $\gamma$ is determined within $\overline{\Sigma}$ up to the addition of a locally gauge flow. In total this means that the effect of $\gamma$ on subregion observables determines it everywhere up to the addition of a locally gauge flow, which can be a nontrivial surface symmetry. As long as the background is not left invariant by any gauge symmetry (in the language of \cite{Riello3}, it is irreducible), this will be a Noetherian symmetry of the theory.
\par
The main conclusion is that the nonfactorization of this linear gauge theory phase space is, with great generality, entirely due to the existence of surface symmetries. Note that we are \textit{not} referring to subregion surface symmetries as in \cite{subsystem}, but rather to symmetries of the entire spacetime. This is compatible with the perspective of \cite{harlow_factorization}, that the nonfactorization of the Hilbert space of electromagnetism in a two-sided asymptotically AdS wormhole can be resolved within effective field theory if the gauge symmetry is emergent. We see that at the level of the phase space, factorization is \textit{only} possible if either at the fundamental level there really isn't a gauge symmetry, or that somehow the boundary conditions are stronger than expected, so that there are no nontrivial surface symmetries. It would be nice to extend this analysis to the full nonlinear phase spaces and observable algebras defined in sections \ref{subsec:gauge_subsystem} and \ref{subsec:gauge_fix}, but unfortunately this is beyond the tools that we've developed here.
\subsection{Surface Charges and Kink Transforms}\label{subsec:surface}
Surface symmetries of subregions and their generating charges have recently attracted a lot of interest, following \cite{subsystem}. We would like to see how these fit into the framework of this paper.  For now, let us put aside the question of gravitational dressing and work with pure Maxwell theory. The symplectic form is given by 
\begin{equation}
    \Omega = \int_\mathfrak{C} \delta A\wedge\star\delta F.  
\end{equation}
Consider a gauge transformation $\delta A= d\rho$, $\delta F=0$. Define a singular flow $\tau$ given by $I_\tau\delta A=1_{\Sigma}df$ and $I_\tau \delta F=0$ on $\mathfrak{C}$, where $1_{\Sigma}$ is the indicator function for the compact region $\Sigma$. Note that $\tau$ preserves the Gauss' law constraint $d\star F=0$ on the Cauchy slice. Inserting this flow into $\Omega$ yields 
\begin{equation}
    -I_\tau\Omega = \int_\Sigma -df\wedge \star\delta F=  \int_\Sigma [d(f\star\delta F)-f\star\cancel{\delta dF}]=\int_{\partial\Sigma} \delta(f\star F),
\end{equation}
which is just the electric flux through $\partial\Sigma$ smeared against $f$. This is a gauge invariant singular generator of the flow $\tau$, which can be propagated throughout $M$ as an on-shell distributional solution of Maxwell's equations. Alternatively, consider the flow $\upsilon$ given by $I_\upsilon\delta A=d(1_{\Sigma}f)$ and $I_\upsilon \delta F=0$. Inserting this into $\Omega$,
\begin{equation}
    -I_\upsilon\Omega = \int_\mathfrak{C} -d(1_{\Sigma}f)\wedge\star\delta F=  \int_\mathfrak{C} -d(1_{\Sigma}f\wedge\star\delta F)-1_{\Sigma}f\wedge\star\cancel{\delta dF}=0
\end{equation}
by Stokes' theorem for currents. We see that multiplying the gauge transformation of $A$ by an indicator function yielded a nontrivial surface charge that generates a flow that equals the gauge transformation within $\Sigma$ and vanishes in $\bar{\Sigma}$. Multiplying the gauge parameter \textit{itself} by an indicator function yields a distributional gauge transformation that still annihilates the symplectic form. The two flows differ by a delta function in the perturbation of $A$, localized at $\partial\Sigma$, which exactly cancels out the electric flux contribution from $\tau$ in $\Omega$ 
\par
This is an example of the general structure of \textit{subregion surface symmetries}, which are singular flows that restrict to a particular gauge transformation in a subregion $\Sigma$ and vanish on the complement $\overline{\Sigma}$. Two such flows differ on the Cauchy slice only by their singularity structure at $\partial\Sigma$, but this makes all the difference. There is always a choice of proper singular gauge transformation with these restrictions, but by Proposition \ref{prop:local_gauge} we know that these will always yield zero upon insertion into $\Omega$. The other subregion surface symmetries will yield nonzero results that can be traced back to some delta function terms localized at $\partial\Sigma$. 
\par
At a particular point in $\mathcal{P}$, the singular generator for a particular surface symmetry is the 1-form obtained by inserting it into $\Omega$. Note that if the flow is on-shell, the generator must be gauge invariant. For example, consider the analogue of $\tau$ for Yang-Mills theory, which is the product of an indicator function with the variation of $A$ under a gauge transformation with Lie algebra valued parameter $g$. The generator $-I_\tau \Omega$ is now the variation of a smearing of the colour electric flux through $\partial\Sigma$, as computed in \cite{subsystem}. This is not a gauge invariant quantity, and indeed $\tau$ does not preserve the nonabelian Gauss' law constraints $D\star F=0$. To see this, note that $I_{\tau-\upsilon} \delta A=1_\Sigma Dg-D(1_\Sigma g)=-g(d 1_\Sigma)$, so 
\begin{equation}
   I_\tau \delta (D\star F)= \cancel{I_\upsilon \delta (D\star F)}+ I_{\tau-\upsilon}D\star F= D\star D(-g(d 1_\Sigma))+[-g(d 1_\Sigma)\wedge\star F],
\end{equation}
which is made of several terms that are delta localized on $\partial\Sigma$ and do not cancel.
\par
Although there is not a unique singular flow and surface charge associated with a specific gauge transformation and restricted subregion, there may still be a choice that is particularly natural. This is the case for the kink transform in Einstein gravity \cite{connes, marolf_kaplan}, which is one way to formalize the notion of a "one-sided boost", which is particularly efficient in the distributional approach that we have introduced. Consider a gauge transformation that acts like a local Lorentz boost near a compact codimension-two surface $\partial\Sigma$ \footnote{The authors of \cite{marolf_kaplan} also consider the noncompact case, in which the following discussion is altered by boundary conditions; the area will generate a flow that differs from the kink transform.}, by which we mean a diffeomorphism vector field $\xi$ that vanishes at $\partial\Sigma$ and preserves all components of the metric there. The kink transform flow $\kappa$ is obtained by setting $\delta g_{ab}=1_{R}\mathcal{L}_\xi g_{ab}$, where $R$ is a spacetime region containing $D(\Sigma)$ such that $\partial\Sigma\subset\partial R$. To compute the surface charge for this flow, take the usual covariant choice for the symplectic current of Einstein gravity as given in \cite{harlow}, which involves only the variation of the metric and its first derivatives. Since $\mathcal{L}_\xi g_{ab}$ vanishes at $\partial\Sigma$, neither $\delta g_{ab}(\kappa)$ nor $\nabla (\delta g_{ab}(\kappa))$ contains delta function singularities at $\partial\Sigma$. Therefore $-I_\kappa \Omega$ restricts to an integral over $\Sigma$ that simplifies to $\int_{\partial\Sigma} q_\xi$. This in turn is well known to be equal to $\frac{\delta A}{4G_N}$ for the chosen symplectic current \cite{subsystem, marolf_kaplan}. It is important to realize that, although we picked a specific way to fix the bulk exact ambiguity in order to run this argument, the final result is independent of that, being defined by the current integral of $-I_\kappa\omega$ as discussed in section \ref{subsec:subsystem}. 
\begin{figure}
    \centering
    \includegraphics[scale=.2]{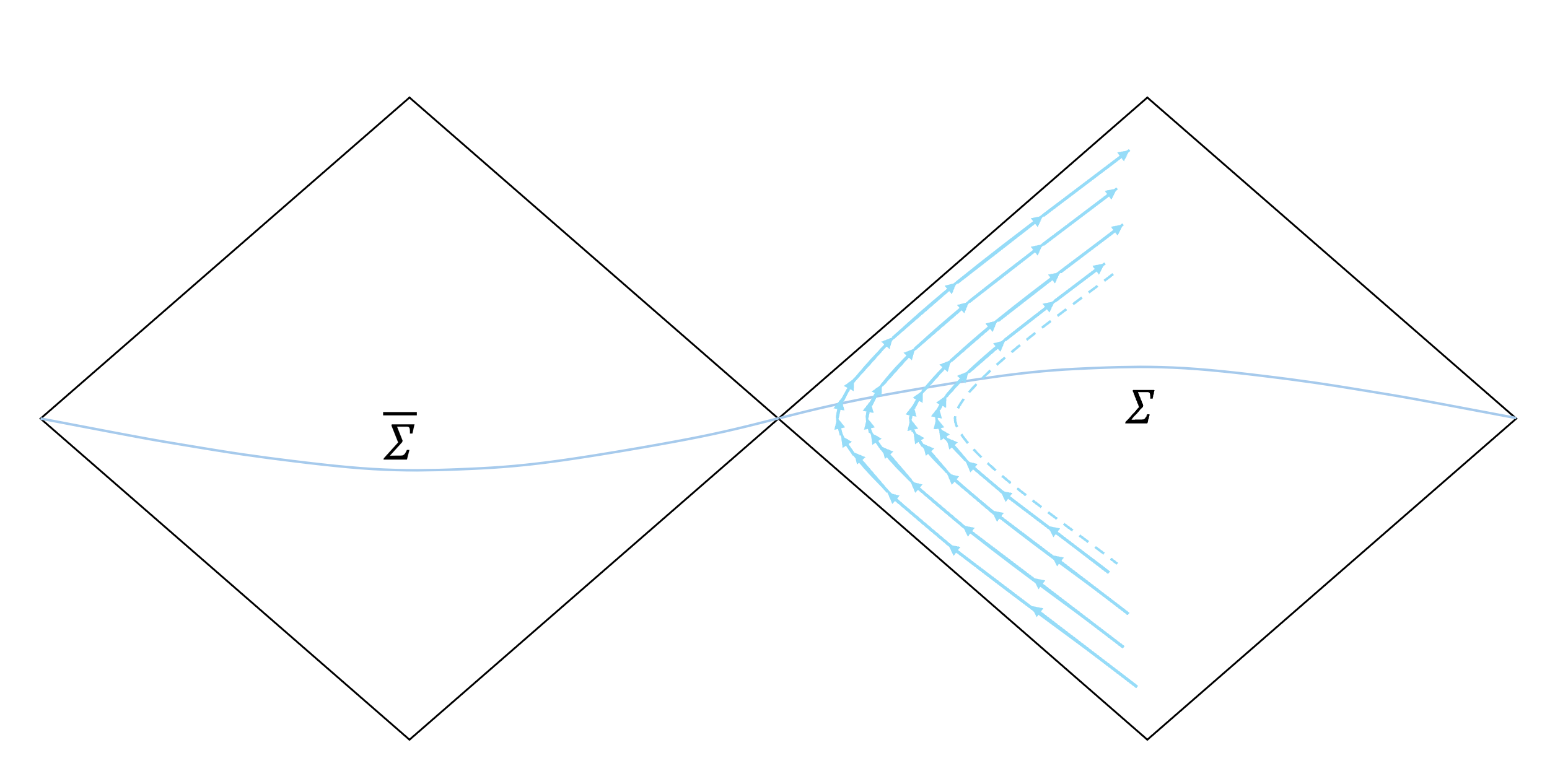}
    \caption{The effect of the kink transform is illustrated. In $D(\Sigma)$ it looks like a boost near $\Sigma$, but is switched off to vanish by some point demarcated by the dotted line,}
    \label{fig:kink}
\end{figure}

\par
One application of the kink transform is that it provides a form of Wald's computation of the boost Noether charge for bifurcate black hole horizons that is \textit{not} subject to the ambiguities noted in \cite{JKM}. This works precisely because the flow is taken to be distributional and then integrated \textit{across} the bifurcation surface. One could object that there is a compensating ambiguity in how the singularity structure of the flow is determined. However the definition that multiplies the action of the boost on the metric is, at the very least, a fairly natural ansatz. It would be interesting if this technique could be applied to higher curvature theories, where there are holographic and thermodynamic arguments in favour of the Dong-Wall form \cite{dong_entropy, wall_entropy} as the appropriate surface charge associated with black hole entropy. There are higher derivative subtleties involving how the boost diffeomorphism and indicator function are to be defined, which renders this somewhat difficult. This line of work will be pursued in the future.
\par
The area of $\partial\Sigma$ is not gauge invariant at first order around an arbitrary solution, so the flow $\kappa$ is not generally on-shell. When $\partial\Sigma$ is \textit{extremal}, however, the area is infinitesimally gauge invariant. One can check that $\kappa$ does not satisfy the linearized equations generally, but as shown in \cite{kink} does when the surface is extremal. Rather than doing this in full generality, let us focus on one null hypersurface passing through $\partial\Sigma$ and consider the Raychaudhuri equation. This takes the form 
\begin{equation}
    \dot{\theta}+\frac{1}{2}\theta^2+\sigma^2+...=0,
\end{equation}
where $\theta$  is the expansion of the affinely parametrized horizon null generators, $\sigma$ is the shear, and $\dot{\theta}$ denotes the covariant derivative along generators. The kink transform acts as a boost above the cut $\partial\Sigma$ and trivially below. Local boosts scale the expansion by $\delta_\xi \theta = \lambda \theta$ so $\dot{\theta}$ picks up a delta function contribution. The other terms have step function discontinuities, but these are irrelevant; the delta function alone produces a violation of the Einstein equation, effectively introducing a "stress tensor" on $\partial\Sigma$ that is not cancelled by any other terms. Note that when $\theta$ vanishes at the surface, this contribution goes away. Together with the transverse Raychadhuri equation, this confirms that $\partial\Sigma$ \textit{must} be extremal for $\kappa$ to be on-shell. 
\begin{comment}
The true distributional gauge transformation that is obtained by multiplying $\xi$ by the indicator function is of course on-shell; one can check that the delta functions in the derivative of the expansion are cancelled by delta functions in the Christoffel sybmols that appear with in the inaffinity.
\end{comment}
\par
These arguments were made in the absence of gauge fixing conditions. If the gauge is fixed to eliminate all diffeomorphisms that move $\partial\Sigma$, then the area is automatically invariant. However, the flow that it generates will be deformed by the addition of a gauge transformation, in order to preserve the constraints. This means that for an internally dressed subsystem, the area will not in general generate the kink transform. The exception is when the extremal surface condition is used to define the region, as described in \ref{subsec:causal_patch}. This guarantees that the subsystem contains a singular generator for the kink transform. Compare this to how the extremal surface condition arises in \cite{LM} from the requirement that a certain linearized perturbation to a Euclidean metric preserves the Einstein equations. This perturbation is defined by the insertion of a Euclidean conical defect, which is closely related to a Lorentzian boost \cite{euclidean_boost}. It seems that there is some deep connection between these two ideas, which remains to be fully explored. I think it is likely that the requirement that the kink transform be on-shell and its generator gauge invariant, is somehow the (leading order) condition for subregion duality in AdS/CFT.
\subsection{Prospects for Quantization}\label{subsec:quantum}
We have discovered a way of making sense of subsystems in classical gravitational theories that is based on the consideration that observables belonging to a subsystem should form a closed Poisson algebra. This is however not really a classical condition, since Poisson brackets are not classically observable. The only measurable information in a classical theory is the set of solutions to the equations of motion. Although brackets are natural objects to consider in the context of an action principle, especially because of the Peierls bracket construction, their true interpretation is as the leading order contributions to quantum commutators in an $\hbar$ expansion. We can look at our results as approximating an underlying notion of quantum subsystems.
\par
Making this precise is difficult. For one thing, it is not known how to define quantum gravity outside of perturbation theory in the Newton constant, which is equivalent to an expansion in graviton loops. This is, however, alright for our purposes, since the classical Poisson bracket is essentially a perturbative concept. Within perturbation theory, it is known how to formulate quantum gravity in terms of operators, using the framework of \textit{perturbative algebraic quantum field theory} (pAQFT) \cite{locally_covariant_gravity, pAQFT}. Employing a BV-BRST framework and carefully defining the perturbative renormalization in a convenient gauge, the creators of this approach have been able to define the field theory of gravitational fluctuations around arbitrary backgrounds in terms of nets of local field algebras. These algebras are generated by unobservable fields, including gravitons, matter, and appropriate ghosts and antighosts. Mathematically, they are perturbative $*-$ algebras; the product of two operators localized in a region is a formal power series in $\sqrt{G_N}$  of operators localized in the same region. They act on a perturbative state space of indefinite metric \cite{Hollands_Yang_Mills} Since the gauge choice is typically taken to be local and covariant, these operators satisfy microcausality with respect to the background spacetime.
\par
The BRST invariant observables are nonlocal quantities constructed out of these. To make the observables localize relationally, the authors used a version of the Z-model constructed out of curvature scalars. These observables then formed dressed $*-$algebras.  We would like to implement arbitrary internal dressings. Schematically, this would involve changing the gauge so that it partially consists of the constraints that fix the location of the surface $G^k=0$. This is tricky to implement, but supposing that this can be done, we would like to prove some version of Proposition \ref{prop:Sspg}, that implies that the subsystem observables form a $*-$algebra. The concept of a field-dependent gauge transformation is not directly accessible in the quantum theory, because we cannot talk about flows, only commutators. There is however a reformulation of this concept in the BRST formalism that can be proven. Namely, if $O$ is some observable and $\phi$ is a field in the causal complement of the dressed subsystem, then the commutator satisfies $[\phi,O]=[\phi, sK]$ for some BRST exact operator $sK$ iff $O$ belongs to the subsystem. This further implies that the product of operators in the subsystem is another operator belonging to the subsystem, as desired. The details of this argument go beyond the scope of this paper, and will be presented in upcoming work.
\par
One may wonder why the subsystem algebras are not to be formulated as von Neumann algebras, as is standard in algebraic quantum field theory \cite{Haag}. One obstacle is technical; taking a von Neumann completion of a perturbative algebra in a perturbative Hilbert space does not seem to be a well-defined operation. Recently, ideas from von Neumann theory have been used to \cite{witten_crossed_product} to clarify properties of formal power series algebras, but a rigorous formulation of this relationship is lacking. The other problem is that thinking of subsystem algebras as von Neumann algebras can lead one astray. Given any set of operators acting on a Hilbert space, one can take a completion via the double commutant, which is automatically a von Neumann algebra with closed product relations. This obscures the special status of internally dressed subsystems as closed observable algebras. The perturbative $*-$algebra framework at least provides a definition of what it means for the operators belonging to a subregion to form a closed algebra, that isn't true by tautology. As such, it seems to be the correct way to study subsystems in perturbatibe quantum gravity.
\par
\begin{comment}
   It seems that the correct notion of algebra to study the consistency of subsystems is that of a $*-$algebra of functionals that have some kind of expression in terms of the local fields. As an illustration, it is helpful to consider the path integral of an ordinary field theory in Minkowski spacetime. Inserting a functional into the right Rindler wedge acts with an operator that is the covariantly time ordered version of that functional.  
\end{comment} 

\textbf{Acknowledgements}
I thank Eanna Flanagan, Venkatesa Chandrasekaran, Robin Oberfrank, Laurent Freidel, Daniel Harlow, Juan Maldacena, Gautam Satishchandran, and Kasia Rejzner for useful interactions. I am especially grateful to Rodrigo Andrade e Silva, Antony Speranza, Ted Jacobson, and Joshua Kirklin for deep discussions, and to the hospitality of the Perimeter Institute for Theoretical Physics, where part of this work was done. This study was supported in part by US National Science Foundation grants PHY-2012139
and PHY-2309634.
\appendix
\section{Locally Gauge Flows and Noether Identities}\label{app:gauge}
In this Appendix we will prove Propositions \ref{prop:local_gauge} and \ref{prop:local_gauge_noether}, relying on Assumptions \ref{causal_1} and \ref{causal_2}. Before doing this, we review some properties of the linearized equations of motion $\delta E_a$. In this form they are written as the field space exterior derivative of the full nonlinear equations of motion obtained by functionally differentiating the action. However, they can also be obtained by expanding the action around a solution as $\phi_0+\delta\phi$ and taking the leading term, which is quadratic in $\delta\phi$. By taking functional derivatives with respect to $\delta\phi$, one recovers $\delta E_a$ as the equations of motion for $\delta\phi$. Noether's second theorem implies that the Noetherian gauge symmetries of this quadratic action correspond to nontrivial Noether identities among the $\delta E_a$, as well as the converse. The symmetries can be reverse engineered from the Noether identities as follows: write a Noether identity as $\Delta(\delta E_a)=0$ and weight this by some arbitrary functions $\lambda^a$. Integrating this by parts, 
\begin{equation}
	\Delta(\delta E_a)\lambda^a=-\delta E_a \Delta^*(\lambda^a)+dS'=0,
\end{equation}
where $\Delta^*$ denotes the formal adjoint of the differential operator $\Delta$ and $S'$ is some field space 1-form locally constructed out of $\delta E_a$. Taking $X\lambda^a=\Delta^*(\lambda^a)$, note that this implies that the flow $X_\lambda^a \frac{\partial}{\partial\phi^a}$ changes the quadratic action by an exact term, and so is a Noetherian symmetry, and is gauge because $\lambda$ can be varied freely. All of these are on-shell (they satisfy $\delta E_a(X_\lambda)=0$). A simple way to prove this is to recall that $d\omega=\delta \phi^a\delta E_a$, and then note that 
\begin{equation}\label{gauge_insert}
d(I_{X_\lambda}\omega)=X_\lambda^a\delta E_a-\delta\phi^a \delta E_a(X_\lambda)= dS'-\delta\phi^a \delta E_a(X_\lambda).
\end{equation}
 Rearranging this we find that $\delta\phi^a \delta E_a(X_\lambda)$ is an exact term, which is not possible unless $\delta E_a(X_\lambda)=0$ because $\delta\phi^a$ is arbitrary. Note that this makes the right hand side of equation \eqref{gauge_insert} reduce to $dS'$.
\par
Now we will prove Proposition \ref{prop:local_gauge_noether}, which we restate as:
\begin{prop}
	For a given solution in $\mathcal{P}$ and an open neighbourhood $U$, the locally gauge flows are the same as the flows $X_\lambda$.
\end{prop}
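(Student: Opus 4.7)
The plan is to prove the two inclusions of this equality separately. The forward direction --- every Noetherian $X_\lambda$ is locally gauge in any $U$ --- is essentially contained in the paragraph preceding the restatement: since $X_\lambda$ is tangent to prephase space, the on-shell condition $\delta E_a(X_\lambda)=0$ collapses equation \eqref{gauge_insert} to $d(I_{X_\lambda}\omega)=dS'$. The form $I_{X_\lambda}\omega-S'$ is then identically closed, and the Algebraic Poincar\'e Lemma (Proposition \ref{prop:apl}) produces a locally constructed $q$ with $I_{X_\lambda}\omega=S'+dq$. This is precisely the defining relation of Definition \ref{def:gauge} with $C=-S'$, which by construction is a linear combination of $\delta E_a$.

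For the reverse direction, suppose $X$ is locally gauge in $U$, so $-I_X\omega=C+dq$. I would apply the spacetime exterior derivative to both sides. Since $d$ (spacetime) and $I_X$ (field-space) act on independent types of indices they commute, and using $d\omega=\delta\phi^a\delta E_a$ from equation \eqref{Peierls},
\begin{equation}
d(I_X\omega)=I_X(d\omega)=X^a\delta E_a-\delta\phi^a\delta E_a(X)=X^a\delta E_a,
\end{equation}
where the last step uses that $X$ is on-shell. Comparing with $d(-I_X\omega)=dC$ yields the off-shell identity $X^a\delta E_a+dC=0$ valid throughout $U$, with $C$ locally constructed and linear in $\delta E_a$.

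The next step is to extract gauge parameters $\lambda^a$ by invoking a converse of Noether's second theorem: if $X^a\delta E_a$ is identically equal to an exact form whose primitive is a locally constructed linear combination of $\delta E_a$, then there must exist a Noether identity $\Delta(\delta E_a)=0$ of the linearized theory, together with parameters $\lambda^a$, such that $X^a=\Delta^*(\lambda^a)$, yielding $X=X_\lambda$ on $U$. Operationally, I would integrate by parts in $dC$ to transfer spacetime derivatives off $\delta E_a$ onto its coefficients, then match the resulting algebraic coefficient structure against $X^a\delta E_a$, treating $\delta\phi$ and its derivatives as algebraically independent. This is the direction of Vitagliano's correspondence \cite{Vitagliano}, and both $\Delta$ and $\lambda$ can be read off in the process.

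The principal obstacle is the rigorous invocation of this converse in the local setting on $U$. For theories where the Noether identities have the transparent algebraic form of Bianchi identities --- as in Yang-Mills and general relativity --- the extraction of $\lambda^a$ reduces to identifying specific coefficients and is straightforward. For a fully general theory, one must appeal to the algebraic theory of the variational bicomplex to guarantee both that the matching procedure terminates and that the $\lambda^a$ are honest local functions on $U$ rather than merely formal objects. One should also check that the linearization-instability caveat discussed in the main text does not spoil the correspondence at the reference solution in question.
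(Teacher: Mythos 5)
Your proposal is correct and follows essentially the same route as the paper: the $X_\lambda\Rightarrow$ locally gauge direction via $d(I_{X_\lambda}\omega-S')=0$ and the Algebraic Poincar\'e Lemma, and the converse by taking the spacetime exterior derivative of the locally gauge condition, using $d\omega=\delta\phi^a\delta E_a$ and on-shellness of $X$ to conclude $X^a\delta E_a$ is off-shell exact with a primitive linear in $\delta E_a$, then identifying this as a linearized Noetherian gauge symmetry of the form $\Delta^*(\lambda)$. The only differences are cosmetic: you spell out the parameter-extraction step that the paper delegates to its earlier statement of Noether's second theorem, while the paper adds a remark (which you omit, but which is not needed for the statement itself) that only the weaker, linear form of the Poincar\'e Lemma is required, so the argument extends to distributional coefficients.
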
	

\begin{proof}
	 Suppose that $X$ is locally gauge within $U$, so that $I_X\omega + C=-dq$, where $C$ is locally constructed from $\delta E_a$. Taking the exterior derivative of both sides, we obtain a Noether identity:
	 \begin{equation}
	 d(I_{X}\omega-C)=X^a\delta E_a-\delta\phi^a\cancel{\delta E_a(X)}-dC=0.
	 \end{equation}
	 Therefore $X^a\delta E_a$ is equal to an exact term for off-shell $\delta\phi$, and is a Noetherian gauge symmetry for the linearized equations. \par
	 Now consider a symmetry $X_\lambda$. We already have that $d(I_{X_\lambda}\omega-S')=0$. Applying the Algebraic Poincare Lemma \ref{prop:apl}, we find that $I_{X_\lambda}\omega-S'=-dq$ for some local $q$. Since $S'$ is a linear combination of $\delta E_a$, it takes the role of $-C$, verifying that $X_\lambda$ is locally gauge. Note that since the forms involved are purely linear in the field space sections, which in this case are the linearized perturbations $\delta\phi$, this only requires the weaker form of the Lemma that appears as Lemma 1 in \cite{Wald_locally_constructed}. Since this version is proven only by integration-by-parts style manipulations, it applies just as well when the coefficients are distributional. In particular it applies when $\lambda^a$ are distributions rather than smooth functions.
\end{proof}
\begin{figure}
    \centering
    \includegraphics[scale=.2]{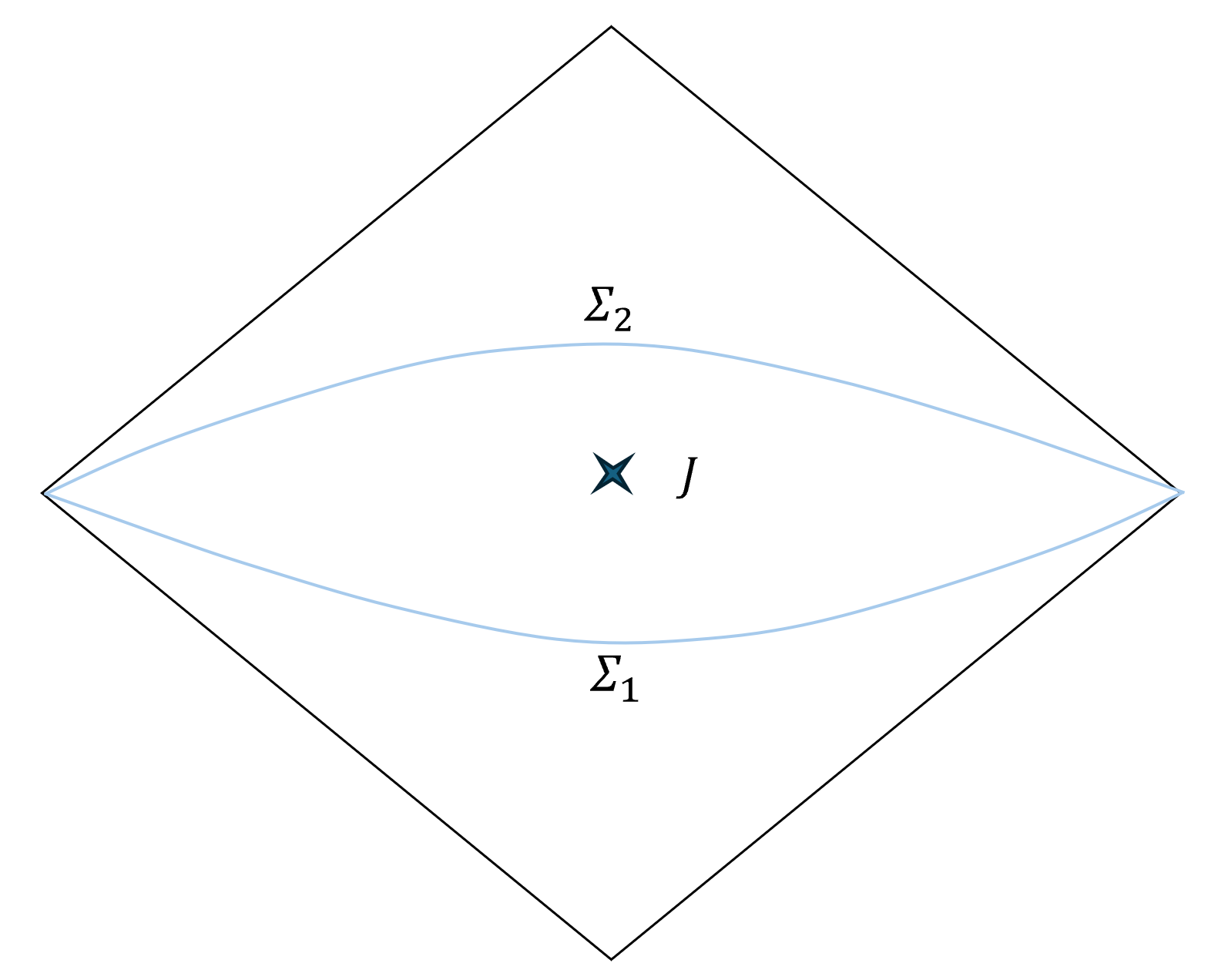}
    \caption{The surfaces $\Sigma_1$ and $\Sigma_2$. The region between them is where the interpolating flow is sourced; above $\Sigma_1$ it is equal to $\eta$ and below $\Sigma_1$, it vanishes.}
    \label{fig:interpolate}
\end{figure}
Proposition \ref{prop:local_gauge} is more subtle, as it requires a careful application of Assumption \ref{causal_1}. It states
\begin{prop}
	For a partial Cauchy surface $\Sigma$, any (distributional) flow $X$ such that $\Omega_\Sigma(X,\eta)=0$ for all flows $\eta$ that go to zero smoothly at $\partial\Sigma$ must be locally gauge within $D(\Sigma)$.
\end{prop}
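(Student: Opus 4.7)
The plan is to first propagate the symplectic annihilation hypothesis from $\Sigma$ to any partial Cauchy surface inside $D(\Sigma)$, and then to use Proposition \ref{prop:local_gauge_noether} in tandem with the Algebraic Poincaré Lemma to upgrade this to the local gauge structure of $X$. The first stage rests on the causality assumptions together with conservation of $\omega$; the second extracts the Noether identity structure of $X^a$.

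For the first stage, fix any partial Cauchy surface $\Sigma_1 \subset D(\Sigma)$. For any smooth on-shell $\eta$ whose data on $\Sigma$ is compactly supported in the interior, extending by zero across $\overline{\Sigma}$ via Assumption \ref{causal_2} produces an element of $\mathcal{S}(\Sigma)$, for which the hypothesis yields $\int_\Sigma \omega(X, \eta) = 0$. By Assumption \ref{causal_1}, $\eta$ propagates throughout $D(\Sigma)$, and Stokes' theorem applied to $d\omega(X, \eta) = 0$ in the region bounded by $\Sigma$, $\Sigma_1$, and the null portions of $\partial D(\Sigma)$ on which $\eta$ vanishes by causal support gives $\int_{\Sigma_1}\omega(X, \eta) = 0$. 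The forward-propagated data of such $\eta$'s is dense in the space of compactly supported interior initial data on $\Sigma_1$, so $\Omega_{\Sigma_1}(X, \eta_1) = 0$ for every $\eta_1 \in \mathcal{S}(\Sigma_1)$ and every partial Cauchy surface $\Sigma_1 \subset D(\Sigma)$ (see figure \ref{fig:interpolate}).

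For the second stage, use $d\omega = \delta\phi^a \delta E_a$ together with $\delta E_a(X) = 0$ to obtain $d(-I_X\omega) = -X^a \delta E_a$. By Proposition \ref{prop:local_gauge_noether}, to conclude that $X$ is locally gauge in any open $U \subset D(\Sigma)$ it suffices to produce a locally constructed $(D-1)$-form $C$ linear in $\delta E_a$ such that $X^a\delta E_a = -dC$ identically; the Algebraic Poincaré Lemma applied to the then-identically-closed form $-I_X\omega - C$ delivers $-I_X\omega = C + dq$ with $q$ locally constructed. To extract $C$, fix a partial Cauchy surface $\Sigma_1$ through a point $p \in U$. The first-stage annihilation $\Omega_{\Sigma_1}(X, \eta_1) = 0$ for all $\eta_1 \in \mathcal{S}(\Sigma_1)$ means that the action of $X$ at $\Sigma_1$ lies in the symplectic complement of the compactly supported interior on-shell initial data there. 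By Proposition \ref{prop:local_gauge_noether} the on-shell directions in this complement are exactly the Noetherian linearized gauge directions, of the form $X_\lambda^a$ for some local parameters $\lambda^a$. Sweeping $\Sigma_1$ over a foliation of $U$ assembles the slicewise Noetherian structure into a consistent choice of parameters $\lambda^a$ on $U$, and with these $X^a = X_\lambda^a$ the required off-shell identity $X^a\delta E_a = -dC$ follows by the same integration-by-parts construction used in the second half of the proof of Proposition \ref{prop:local_gauge_noether}.

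The hardest step is assembling the slicewise Noetherian data into a single locally constructed identity $X^a\delta E_a = -dC$ valid on $U$. The key subtleties are identifying the on-shell null directions at each slice with Noetherian symmetries, which rests on Proposition \ref{prop:local_gauge_noether} applied to data on $\Sigma_1$, and then ensuring consistency as $\Sigma_1$ varies, so that the local parameters $\lambda^a$ determined slice-by-slice assemble into a coherent field-dependent flow on $U$. A further care point is the possibly distributional nature of $X$, handled by invoking the linear variant of the Algebraic Poincaré Lemma which, as noted in the proof of Proposition \ref{prop:local_gauge_noether}, extends to distributional coefficients since its proof proceeds purely by integration by parts.
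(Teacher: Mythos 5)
Your first stage (propagating the annihilation property from $\Sigma$ to any partial Cauchy surface $\Sigma_1\subset D(\Sigma)$ via conservation of $\omega$ and the causal support of on-shell flows) is fine and matches the opening move of the paper's proof. The problem is the second stage, which is where all the content lies, and there the argument is circular. Proposition \ref{prop:local_gauge_noether} says only that the \emph{locally gauge} flows --- those satisfying the off-shell identity $-I_X\omega=C+dq$ --- coincide with Noetherian symmetries of the linearized theory. It says nothing about symplectic complements. Your key assertion, that "the on-shell directions in the symplectic complement of $\mathcal{S}(\Sigma_1)$ are exactly the Noetherian linearized gauge directions," is not a consequence of that proposition; it is essentially the statement being proved (it is Proposition \ref{prop:local_gauge} itself, combined with \ref{prop:local_gauge_noether}). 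In the special case of a theory with no gauge symmetry this assertion reduces to Claim \ref{no_gauge}, which the paper explicitly defers to this appendix --- so you are assuming the conclusion. The subsequent "sweeping $\Sigma_1$ over a foliation" step is also unsupported: no mechanism is given for why slicewise parameters $\lambda^a$ patch into a spacetime-local, coherent structure, nor why the a priori distributional $X^a$ should admit any locally constructed expression at all.

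What is missing is the mechanism that converts purely on-shell information (annihilation of $\mathcal{S}(\Sigma_1)$) into the \emph{off-shell} identity $-I_X\omega=C+dq$. The paper does this by pairing $X$ not against on-shell flows but against off-shell interpolating flows $\tilde{\eta}$ that vanish below $\Sigma_1$ and agree with an on-shell $\eta$ above $\Sigma_2$; then $\Omega_{\Sigma_2}(X,\tilde\eta)-\Omega_{\Sigma_1}(X,\tilde\eta)=\int X^a\delta E_a(\tilde\eta)=\int X^a J_a=0$, and Assumption \ref{causal_1} guarantees that the realizable sources $J_a$ exhaust all smooth sources in the slab compatible with the Noether identities. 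Hence the distribution $J\mapsto X[J]$ factors through the Noether operators plus a boundary term, which on-shell of the identities yields $d\omega(X,\gamma)=dY[J(\gamma)]$ for \emph{arbitrary off-shell} $\gamma$; the Algebraic Poincar\'e Lemma then delivers the locally gauge condition, with no need to exhibit $X$ slicewise as $X^a_\lambda$. Your proposal never probes $X$ off-shell, so it cannot reach the required identity; without the interpolation/duality step (or an equivalent use of well-posedness), the gap between "symplectically annihilates on-shell data on each slice" and "locally gauge" remains exactly as wide as the proposition itself.
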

\begin{proof}
	Consider homologous partial Cauchy surfaces $\Sigma_1$ and $\Sigma_2$, where the latter is located to the future of the former.  Both have the same domain of dependence (see figure \ref{fig:interpolate}). Consider a flow $\eta$ that vanishes smoothly towards the corner. By assumption,  $\Omega_{\Sigma_1}(X,\eta)=\Omega_{\Sigma_1}(X,0)=0$. Since $\Omega_{\Sigma_1}=\Omega_{\Sigma_1}$ for on-shell flows, $\Omega_{\Sigma_2}(X,\eta)=0$ as well. Alternatively, we can consider an off-shell flow $\tilde{\eta}$ that interpolates between the zero solution before $\Sigma_1$ and $\eta$ after $\Sigma_2$. We compute
    \begin{equation}\label{interpolate}
        \Omega_{\Sigma_2}(X,\tilde{\eta})- \Omega_{\Sigma_1}(X,\tilde{\eta})=\int_{\Sigma_1}^{\Sigma_2} d\omega(X,\tilde{\eta})=\int_{\Sigma_1}^{\Sigma_2} X^a\delta E_a(\tilde{\eta}), 
    \end{equation}
    which we already know vanishes. The flow $\tilde{\eta}$ can be viewed as a sourced solution of the linearized equations of motion where the quantity $\delta E_a(\tilde{\eta})$ is the required driving source, which we will denote as $J_a$. Note that $J_a$ must satisfy the same Noether identities $\delta(J)=0$ as the linearized equations.
    \par
    The key observation is that, as a consequence of Assumption \ref{causal_1}, $\eta$ can be specified by an arbitrary data set on $\Sigma_2$ that is only constrained by some components of $\delta E_a$ and possibly derivatives thereof. This means that $\tilde{\eta}$ is only constrained by the condition that $J_a$ is smooth, and vanishes above $\Sigma_2$ and below $\Sigma_1$. \textit{Any} such  configuration of $J_a$ satisfying the Noether identities can be constructed by appropriately choosing $\eta$ and the interpolating configuration. We can now rewrite equation \eqref{interpolate} as 
    \begin{equation}
        \int_{\Sigma_1}^{\Sigma_2} X^a J_a =0.
    \end{equation}
Define the distribution $X[J]=\int_{\Sigma_1}^{\Sigma_2} X^a J_a$ for \textit{arbitrary} test sources, which must vanish when evaluated on test sources that vanish towards the boundary and satisfy the identities. By linearity, the value of $X$ for an arbitrary test source is determined only by the failure of the Noether identities in the interior, which we can write as $\Delta^j J=\mathcal{N}^j$ by treating $\mathcal{N}^j$ as "sources" for the identities, and the value of $J$ and its derivatives at the boundary. This means that we can write
\begin{equation}
    X[J]=\tilde{X}[\mathcal{N}]+\int dY[J],
\end{equation}
where the first term is a distribution acting on the Noether sources $\mathcal{N}$ and the latter is a total derivative that integrates out to a surface term. When $J$ is again taken "on-shell" of the Noether identities, the first term vanishes, and we are left with the local identity
\begin{equation}
    d\omega(X,\gamma)=dY[J(\gamma)],
\end{equation}
where $\gamma$ is an arbitrary off-shell flow. Applying the Algebraic Poincare Lemma again, we deduce that between $\Sigma_1$ and $\Sigma_2$, the flow satisfies 
\begin{equation}
    I_X \omega = Y[\delta E_a]-dq,
\end{equation}
which is the locally gauge condition with $-Y=C$. Pushing $\Sigma_2$ and $\Sigma_2$ to the boundaries of the domain of dependence, we find that this holds throughout $D(\Sigma)$.
 \end{proof}
\begin{comment}
    It is instructive to briefly examine why this fails to hold outside of the domain of dependence, as discussed in \ref{subsec:gauge_character} and depicted in figure \label{fig:causal_example}. In the case where $\Sigma_2$ is not a partial Cauchy surface, the linearized fields on it will need to satisfy further constraints beyond those arising locally from the components of $\delta E_a$ in order to be propagated sourcelessly back to $\Sigma_1$.
\end{comment}

\bibliography{refs}

\end{document}